\definecolor{darkLink}{RGB}{0, 64, 128}
\newtheorem{thm}{Theorem}[section]
\newtheorem{lem}[thm]{Lemma}
\newtheorem{clm}[thm]{Claim}
\newcommand{\QRAMc}{QRAM\textsubscript{C}\xspace}
\newcommand{\QRAMq}{QRAM\textsubscript{Q}\xspace}
\renewcommand{\>}{\rangle}
\newcommand{\<}{\langle}
\newcommand{\qqAnd}{\qquad\text{and}\qquad}
\newcommand{\Yes}{\mathcal{Y}}
\newcommand{\No}{\mathcal{N}}
\newcommand{\OO}{\mathrm{O}} 
\newcommand{\F}{\mathbb{F}} 
\newcommand{\C}{\mathbb{C}}
\newcommand{\Z}{\mathbb{Z}}
\newcommand{\fS}{{\mathfrak{S}}}
\newcommand{\fP}{{\mathfrak{P}}}
\newcommand{\fL}{{\mathfrak{L}}}
\newcommand{\fR}{{\mathfrak{R}}}
\newcommand{\fa}{{\mathfrak{a}}}
\newcommand{\fb}{{\mathfrak{b}}}
\newcommand{\fc}{{\mathfrak{c}}}
\newcommand{\fq}{{\mathfrak{q}}}
\newcommand{\cH}{{\mathcal{H}}}
\newcommand{\cE}{{\mathcal{E}}}
\DeclareMathOperator{\polylog}{polylog}
\newcommand{\beE}{\begin{equation}}
\newcommand{\enE}{\end{equation}}
\renewcommand{\Yes}{S}
\renewcommand{\No}{\overline{S}}
\newcommand{\ii}{\ensuremath{\mathsf{i}}}
\newcommand{\ee}{\ensuremath{\mathsf{e}}}
\newcommand{\algostate}{\varphi}
\newcommand{\nmean}{\lambda_c} 
\DeclareMathOperator{\DFT}{DFT}
\DeclareMathOperator{\NTT}{NTT}
\DeclareMathOperator{\im}{im}
\newcommand{\polyr}[1]{\left\lfloor #1 \right\rceil}
\newtheorem{fact}[thm]{Fact}
\title{A nearly linear-time Decoded Quantum Interferometry algorithm for the Optimal Polynomial Intersection problem}
\author{Ansis Rosmanis \bigskip
\\Quantinuum, Partnership House, Carlisle Place,
\\ London SW1P 1BX, United Kingdom
} 
\date{}
\begin{document}

\maketitle

\begin{abstract}
    Recently, Jordan et al.~(\href{https://doi.org/10.1038/s41586-025-09527-5}{Nature, 2025}) introduced a novel quantum-algorithmic technique called Decoded Quantum Interferometry (DQI) for solving specific combinatorial optimization problems associated with classical codes. They presented a constraint-satisfaction problem called Optimal Polynomial Intersection (OPI) and showed that, for this problem, a DQI algorithm running in polynomial time can satisfy a larger fraction of constraints than any known polynomial-time classical algorithm.

    In this work, we propose several improvements to the DQI algorithm, including sidestepping the quadratic-time Dicke state preparation. Given random access to the input, we show how these improvements result in a nearly linear-time DQI algorithm for the OPI problem. Concurrently and independently with this work, Khattar et al.~(\href{https://arxiv.org/abs/2510.10967}{arXiv:2510:10967}) also construct a nearly linear-time DQI algorithm for OPI using slightly different techniques.
\end{abstract}

\tableofcontents

\newpage

\section{Introduction}

Quantum algorithms provide exponential speedups over their classical counterparts for a wide range of computational problems. Notable examples include various number-theoretic problems, such as the integer factoring and the discrete logarithm, as well as the simulation of quantum mechanical systems. While these quantum algorithms already showcase the immense potential of quantum computers, achieving exponential quantum-over-classical speedups for NP-hard optimization problems---another important class of computational problems---has remained elusive for a long time.

A recent work by Yamakawa and Zhandry presented an NP search problem relative to a random oracle that provides exponential quantum-over-classical speedups~ \cite{yamakawa24:RandomOracleNP}. They considered a computational problem based on classical codes, for which the quantum speedups arose in part from the ability of quantum computers to perform the corresponding error correction in superposition. Building on Yamakawa and Zhandry's techniques, Jordan et al.~recently achieved similar results for the standard (i.e., non-relativized) computation~\cite{jordan25:DQI-original}. Specifically, they introduced an optimization problem called Optimal Polynomial Intersection (OPI) along with an efficient quantum approximation algorithm for the problem that achieves approximation ratios surpassing those of any known efficient classical algorithm. More generally, they developed a novel quantum-algorithmic framework called Decoded Quantum Interferometry (DQI) for solving specific combinatorial optimization problems associated with classical codes, with OPI being a key example of such a problem.
  
Given the significance of DQI, as evidenced by several recent works \cite{
chailloux25:SoftDecoders,
patamawisut25:Circuits-for-DQI,
bu25:DQI-under-noise,
sabater25:industiralLP-via-DQI,
marwaha25:complexity-of-DQI,
 anschuetz25:DQI-needs-structure,
schmidhuber25:HamiltonianDQI,
hillel25:quadratic-DQI,
khattar25:optimized_DQI}, 
it is natural to explore whether any aspects of this algorithmic technique can be improved. In this paper, we present several improvements, some of which are rather general while others are specifically tailored to the OPI problem. As a result, we demonstrate that, given a quantumly addressable quantum memory, these improvements yield a nearly linear-time DQI algorithm for the OPI problem.

\subsection{Decoded Quantum Interferometry}
\label{sec:introdqi}

\paragraph{The max-LINSAT problem.}

The computational problem addressed by DQI is a generalization of max-XORSAT to non-binary fields. Jordan et al.~\cite{jordan25:DQI-original} named this problem max-LINSAT, and it is defined as follows. Let $\F_p$ be the finite field of prime order $p$. Given a collection of vectors $b_1,\ldots,b_m\in\F_p^n$ and a collection of sets $S_1,\ldots,S_m\subset\F_p$, the max-LINSAT problem is to find a vector $x\in\F_p^n$ maximizing the number of satisfied constraints $b_i\cdot x\in S_i$.

The matrix $B$ formed by the vectors $b_1,\ldots,b_m$ as its rows plays an instrumental role in the DQI algorithm. In particular, DQI performs well when the minimum distance $d^\perp$ of the code $C^\perp:=\ker B^\top$ is large.

\paragraph{Stages of the DQI algorithm.}

The DQI algorithm, as presented by Jordan et al., has four primary stages that correspond to four main registers: the weight, mask, error, and syndrome registers. At the end of any given stage, only one of these registers is in a nontrivial state, while the others remain initialized in or have already been coherently returned to the all-zeros state $|00\ldots0\>$.

Let us now briefly present these stages and describe the transitions between them. This will not only familiarize us with the DQI framework, but it will also help to illustrate the contributions of the present work.

\begin{enumerate}
\item
Initially, the weight register gets set to a superposition
\beE
\label{eq:stage1}
\sum_{k=0}^{\ell}w_k|k\>,
\enE
 where $\ell\le d^\perp/2-1$ and $w_k$ is any collection of normalized amplitudes. 
\item
Next, for every $|k\>$ in the weight register, the mask register gets prepared as the Dicke state 
$|D_k^m\>:=\sum_{\substack{\mu\in\{0,1\}^m\\|\mu|=k}}|\mu\>/\sqrt{\binom{m}{k}}$.
Then, given $|\mu\>$ in the mask register, we uncompute%
\footnote{Jordan et al.~postpone the uncomputing procedure to a later stage of the algorithm, but that procedure commutes with the other steps, and thus can be done already here.}
 the weight register by computing the Hamming weight of $\mu$, hence obtaining the state
\beE
\label{eq:stage2}
\sum_{k=0}^\ell \sum_{\substack{\mu\in\{0,1\}^m\\|\mu|=k}}
\frac{w_k}{\sqrt{\binom{m}{k}}}|\mu\>.
\enE
\item
Next, for every $|\mu\>=|\mu_1\mu_2\ldots\mu_m\>$ in the mask register, the error register gets prepared in an input-dependent state $|\mathcal E_\mu\>=\bigotimes_{i=1}^{m}|\mathcal E_{\mu,i}\>$ where $|\mathcal E_{\mu,i}\>=|0\>$ if $\mu_i=0$ and, if $\mu_i=1$, $|\mathcal E_{\mu,i}\>$ is a certain superposition $|\widehat G_i\>$ over $\F_p\setminus\{0\}$ whose amplitudes are determined by $S_i$.%
\footnote{While Stages 1 and 2 of the algorithm introduce qubit registers, Stages 3 and 4 introduce $p$-dimensional qudit registers. Each such qudit can be encoded into $\lceil \log_2 p\rceil$ qubits.}
 The state $|\cE_\mu\>$ can be uniquely expressed as
$|\cE_\mu\>=\sum_{y\in\F_p^m}\beta_y|y\>$ and, because $\<0|\mathcal E_{\mu,i}\>=\delta_{\mu_i,0}$, we have that, for all $y$ with $\beta_y\ne 0$, the mask $\mu$ and the error $y$ are non-zero at exactly same indices $i$. This means that, given any such pair $|\mu,y\>$, we can easily uncompute $\mu$, which one does, obtaining
\beE
\label{eq:stage3}
\sum_{k=0}^\ell \sum_{\substack{y\in\F_p^m\\|y|=k}}
\frac{w_k}{\sqrt{\binom{m}{k}}}\beta_y|y\>.
\enE
\item
Next, for every $|y\>$ in the error register, we first compute $|B^\top y\>$ in the syndrome register. For any two distinct errors $y,y'$ of Hamming weights at most $\ell\le d^\perp/2-1$, we have $B^\top y\ne B^\top y'$. Therefore, it is possible to obtain $y$ from $B^\top y$, even if it might be computationally expensive, and this is indeed what one does to uncompute $y$ stored at the error register. Thus, one ends up with the state 
\[
\sum_{k=0}^\ell \sum_{\substack{y\in\F_p^m\\|y|=k}}
\frac{w_k}{\sqrt{\binom{m}{k}}}\beta_y|B^\top y\>.
\]
Finally, one applies the quantum Fourier transform $F_p$ on each of the $n$ qudits of the syndrome register and measure, obtaining a solution $x\in\F_p^n$ to the max-LINSAT problem.
\end{enumerate}

\paragraph{Runtime of DQI.}

Now that we know what are the four stages of DQI, let us briefly mention their computational costs when they are implemented as in~\cite{jordan25:DQI-original}, and thus the computational costs of the whole DQI algorithm. Let us assume here that $\ell$, $p$, and $n$ are all $\Theta(m)$---as is the case in the parameter regime that Ref.~\cite{jordan25:DQI-original} uses for the OPI problem---and thus let us express all complexities in $m$. Here $\widetilde\OO(m^\alpha)$ is a shorthand for $\OO(m^\alpha\polylog m)$.

\begin{enumerate}
\item
Given a classical description of amplitudes $\alpha_k$, the state~(\ref{eq:stage1}) can be prepared in time $\widetilde{\OO}(m)$~\cite{low24:Tgates}.
\item
The Dicke state $|D_k^m\>$ can be prepared in time $\OO(m^2)$~\cite{bartschi22:Dicke}, and, for each $|\mu\>$ constituting that Dicke state, the Hamming weight of $\mu$ can be computed in time $\OO(m)$.
\item
To implement a subroutine $\widehat G_i$ that maps $|0\>$ to the superposition $|\widehat G_i\>$, Jordan et al.~first compute the amplitudes $\widehat g_i(0),\widehat g_i(1),\ldots,\widehat g_i(p-1)$ of $|\widehat G_i\>$ explicitly, which can be done in advance classically from the input $S_i$ in time $\widetilde\OO(m)$, say, using the fast Fourier transform, and then they use the techniques of Ref.~\cite{low24:Tgates} to prepare the state corresponding to these amplitudes, namely, $|\widehat G_i\>$. Since $\widehat G_i$ controlled by $\mu_i\in\{0,1\}$ has to be performed for every $i=1,\ldots,m$, the total running time amounts to $\widetilde\OO(m^2)$. The uncomputing of $\mu$ from $y$ can be done in time $\widetilde\OO(m)$.
\item
Computing the syndrome $B^\top y$ from the error $y$ is done by the straight-forward matrix-vector multiplication in time $\widetilde\OO(m^2)$. The reverse---uncomputing the error $y$ given only the syndrome $B^\top y$---is however not at all straight-forward, and it is done in three steps.
\begin{enumerate}
\item Find any $y'\in\F_p^m$ such that $B^\top y'=B^\top y$, which is done using Gaussian elimination.
\item Error-correct $y'$ to the nearest codeword $c\in C^\perp$, that is, find $c$ such that $B^\top c=0$ and $|y'-c|<d^\perp/2$.
\item Return $y'-c$ as $y$.
\end{enumerate}
When analyzing the runtime of these steps, it is important to specify which implementation of Gaussian elimination is being considered. The standard cubic-time algorithm relies on pivoting, where the subsequent actions of the algorithm depend on values stored in memory. Such pivoting inherently assumes Random Access Memory (RAM). However, because in Step (a) Gaussian elimination must be executed in quantum superposition, achieving cubic runtime now would require \QRAMq. We elaborate on this quantum analogue of RAM in Section~\ref{sec:prelim}.

Fortunately, Gaussian elimination can also be implemented in a branch-free, non-pivoting form. Therefore, it can also be run in cubic time within the standard quantum circuit model.

The runtime of the error correction in Step (b) highly depends on the properties of the code $C^\perp$. For the OPI problem, the relevant code $C^\perp$  is the narrow-sense Reed--Solomon code \cite{reed60:RS-codes}, for which Jordan et al.~propose, as an example, to use the Berlekamp--Massey algorithm \cite{berlekamp1968:book,massey1969:BM-algo} for error correction.%
\footnote{Originally, in~\cite{jordan24:DQI-original}, Jordan et al.~proposed using the Berlekamp--Welch algorithm, which runs in time $\widetilde\OO(m^3)$.}
The Berlekamp--Massey algorithm runs in time $\widetilde\OO(m^2)$.
\end{enumerate}

\noindent
Collecting the costs of all the stages, the total runtime of DQI amounts to $\widetilde\OO(m^3+T)$, where $T$ is the time used for the error-correcting procedure.

\paragraph{The ratio of satisfied constraints.}

Having discussed how quickly the DQI algorithm produces a random solution $x\in\F_p^n$, we should now address the expected quality of that solution, namely, what fraction of constraints can we expect $x$ to satisfy. Jordan et al.~present a theorem  that quantifies this expectation for a rather broad class of max-LINSAT problems.

Suppose we have a family of max-LINSAT problems indexed by $m$ and each problem in the family is specified by $B\in\F_p^{m\times n}$ and $S_1,\ldots,S_m$ (here $n$ and $p$ are some functions of $m$). Additionally suppose that the distance $d^\perp$ of the code $\ker B^\top$ asymptotically grows linearly in $m$ and that there exists some $r$ such that all $S_i$ have the same cardinality $|S_i|=r$. Then \cite[Theorem 1.1]{jordan25:DQI-original} tells us that one can choose amplitudes $w_0,w_1,\ldots,w_\ell$ of the first stage of the DQI algorithm, where $\ell:=\lfloor d^\perp/2-1\rfloor$, so that, in the limit $m\rightarrow\infty$, the expected ratio $\<f\>/m$ of satisfied constraints is
\beE
\label{eq:soverm}
\frac{\langle f \rangle}{m} = 
\begin{cases}
\left( \sqrt{\frac{\ell}{m} \left( 1 - \frac{r}{p} \right)} + \sqrt{\frac{r}{p} \left( 1 - \frac{\ell}{m} \right)} \right)^2
&\text{if}\quad \frac{r}{p} < 1 - \frac{\ell}{m},
\\
 1
 &\text{if}\quad \frac{r}{p} \geq 1 - \frac{\ell}{m}.
\end{cases}
\enE

\paragraph{Our improvements to runtime.}

In the present work, we improve the runtime of various stages of the DQI algorithm in the broad parameter regime mentioned above, namely, where \cite[Theorem 1.1]{jordan25:DQI-original} holds.

Our first improvement is that we effectively combine Stages 1 and 2, bypassing the Dicke state preparation altogether. In particular, we show that there is a choice of amplitudes $w_k$ that give us the expected ratio of satisfied constraints as in (\ref{eq:soverm}), yet the state (\ref{eq:stage2}) at the end of Stage 2, which is always a superposition over Dicke states, is a simple product state and thus producible in linear time. Thus we give a quadratic speedup for combined Stages 1 and 2.

Our second improvement concerns Stage 3, in particular, the implementation of the procedure $\widehat G_i$ for every $i$, which maps $|0\>$ to $|\widehat G_i\>$. We show that, if we are given quantum random access to the classical input $S_i$ (we call this a \QRAMc-access; see Section~\ref{sec:prelim}), we can implement $\widehat G_i$ in time $\widetilde\OO(\sqrt{p/\min\{r,p-r\}})$. When $r\approx p/2$, which will be the case in the parameter regime relevant to the OPI problem, this runtime is only $\OO(\polylog m)$, and hence the whole Stage 3 can be done in time $\widetilde\OO(m)$. For our implementation of $\widehat G_i$, we use techniques reminiscent of the exact Grover's search (see \cite{long01:grover,hoyer00:exactGrover,brassard02:exactGrover}), after which we employ the quantum Fourier transform, contrasting~\cite{jordan25:DQI-original}, where the discrete Fourier transform is employed to implement $\widehat G_i$.

\subsection{Optimal Polynomial Intersection}

Currently, the best demonstration of the power of the DQI algorithm is its application to the Optimal Polynomial Intersection (OPI) problem introduced in~\cite{jordan25:DQI-original}.
This constraint-satisfaction problem is defined as follows.
Given a prime $p$, subsets $T_1,\ldots, T_{p-1}\subset\F_p$, and an integer $n<p$, the OPI problem is to find a degree $n-1$ polynomial $X$ in $\F_p[y]$ that maximizes 
\[
f_{\mathrm{OPI}}(X) = | \{ z \in \F_p^* : X(z) \in T_z \} |,
\]
where $\F_p^*:=\F_p\setminus\{0\}$ is the multiplicative group of the non-zero elements in $\F_p$.
In other words, the OPI problem is to find a polynomial $X(z)=\sum_{j=0}^{n-1}x_jz^j$ that intersects as many of subsets $T_1,\ldots,T_{p-1}$ as possible.

\paragraph{OPI as a special case of max-LINSAT.}

OPI can be easily reduced to a special case of max-LINSAT as follows, and, thus, it can be optimized using the DQI algorithm. 
Every finite field $\F_p$ has at least one primitive element, that is, an element $\gamma\in\F_p$ such that $\{\gamma^1,\gamma^2,\ldots,\gamma^{p-1}\}=\F_p^*$.
Take $\gamma$ to be any primitive element of $\F_p$ (it can be found in time $\widetilde\OO(p)$). Then
\[
f_{\mathrm{OPI}}(X) = | \{ i \in \{1,2, \dots, p-1\} : X(\gamma^i) \in T_{\gamma^i} \} |.
\]
For $x:=(x_0,x_1,x_2,\ldots,x_{n-1})$ the vector of the coefficients of $X$ and $b_i:=(\gamma^0,\gamma^i,\gamma^{2i},\ldots,$ $\gamma^{(n-1)i})$
 the vector of successive powers of $\gamma^i$, we have $X(\gamma^i)=b_i\cdot x$ for every $i\in\{1,\ldots,p-1\}$.
Therefore, if one chooses $m:=p-1$ and $S_i:=T_{\gamma^i}$, then, for every $i\in\{1,\ldots,m\}$, constraints $X(\gamma^i)\in T_{\gamma^i}$ and $b_i\cdot x\in S_i$ are equivalent. Hence, any solution $x=(x_0,x_1,\ldots,x_{n-1})$ to the max-LINSAT problem specified by the collection of vectors $b_1,\ldots,b_m$ and the collection of sets $S_1,\ldots,S_m$ as above yields the solution $X(z)=\sum_{j=0}^{n-1} x_j z^j$ to the corresponding OPI problem of the same objective value.

\paragraph{The parameter regime for OPI.}

Jordan et al.~show that a parameter regime of OPI where DQI achieves exponential speedups over the best known classical techniques is when  $n:=\lfloor p/10\rfloor+1$ and the sets $T_1,\ldots,T_{p-1}$ all have the same cardinality $|T_i|=\lfloor p/2\rfloor$. Recalling the vectors $b_1,\ldots,b_{p-1}$ given by the above reduction of OPI to max-LINSAT,  the matrix $B$ formed by these vectors is a Vandermonde matrix whose entries are
$B_{i,j}=\gamma^{ij}$ where $i=1,2,\ldots,p-1$ and $j=0,1,\ldots,n-1$.
The corresponding code $\ker B^\top$ is known as the narrow-sense Reed--Solomon code, and its minimum distance is $d^\perp=n+1$.

Because all the sets $T_i$ have the same cardinality $r:=\lfloor p/2\rfloor$ and because asymptotically 
 $\ell=\lfloor d^\perp/2-1\rfloor =\lfloor\frac{n-1}{2}\rfloor$ grows linearly with $m=p-1$, when reducing OPI to max-LINSAT with these parameters, one can apply \cite[Theorem 1.1]{jordan25:DQI-original}  and (\ref{eq:soverm}) to the resulting max-LINSAT problem. In particular, in the limit $p\rightarrow\infty$, because we have  $r/p =1/2$ and $\ell/m = (n/2)/p = 1/20$, from (\ref{eq:soverm}) we get that the expected ratio of satisfied constraints is
\[
\frac{\< f_{\mathrm{OPI}} \>}{p-1}
=
\left( \sqrt{\frac{1}{20} \left( 1 - \frac{1}{2} \right)} + \sqrt{\frac{1}{2} \left( 1 - \frac{1}{20} \right)} \right)^2
=
\frac12 + \frac{\sqrt{19}}{20},
\]
which is about $0.7179$.

To the best of our knowledge, the best classical polynomial-time algorithm for the OPI problem in this parameter regime is still the truncation heuristic, described in \cite{jordan25:DQI-original}. The truncation heuristic proceeds by first randomly choosing $n$ distinct elements $z\in\F_p^*$ and for each chosen element $z$ randomly choosing a value $t_z\in T_z$. There is a unique degree $n-1$ polynomial $X$ such that $X(z)=t_z$ for all chosen $z$, and this polynomial can be easily found using Gaussian elimination. Heuristically, for any element $z'\in\F_p^*$ that was not chosen, the probability that $X(z')\in T_{z'}$ is $r/p$, which means that the expected fraction of satisfied constraints, in the limit $p\rightarrow\infty$, is
\[
\frac{\< f_{\mathrm{OPI}} \>}{p-1} =\frac{n+(m-n)(r/p)}{p-1}=\frac{11}{20}=0.55.
\]

\paragraph{Runtime of DQI for OPI.}

As we already mentioned in Section~\ref{sec:introdqi}, Jordan et al.~propose a way to implement Stage 4 of the DQI framework in time $\widetilde\OO(p^3)$. First, they obtain the syndrome $B^\top y$ from the error $y$ by the standard matrix-vector multiplication, running in time $\widetilde\OO(p^2)$. Then, for Step (a) of Stage 4, which is, given the syndrome $B^\top y\in\F_p^n$, to find any $y'\in\F_p^m$ such that $B^\top y'=B^\top y$, they propose to use Gaussian elimination, whose runtime is $\widetilde\OO(p^3)$. And, for Step (b) of Stage 4, which is to error-correct $y'$ to the nearest codeword in $C^\perp$, they propose to use the Berlekamp--Massey algorithm, whose runtime is $\widetilde\OO(p^2)$.

Here we propose, first, to take the advantage of the fact that $B$ is a Vandermonde matrix and thereby to speed up the multiplication of $y$ by $B^\top$ using the fast number-theoretic transform (NTT), which requires only $\widetilde\OO(p)$ gates.
Second, for the task of uncomputing the error $y$ given the syndrome $B^\top y$, we propose not to divide this task into Steps (a), (b), (c) as described in Section~\ref{sec:introdqi}, but to do it directly using the $\widetilde\OO(p)$-time decoding algorithm for the narrow-sense Reed--Solomon codes based on continued fractions and the fast extended Euclidean algorithm~ \cite{reed78:GCDdecoding}\cite[Chapter 8]{aho74:algoDesign}. We provide more details about these techniques in Appendix~\ref{app:FastDecoding}.

As a result, now also having provided an $\widetilde\OO(p)$-time implementation of Stage 4---in addition to $\widetilde\OO(p)$-time implementations of earlier stages, as described before---we get the main result of this work.

\begin{thm}
\label{thm:main}
There is an $\OO(p\polylog p)$-time DQI-based algorithm for the OPI problem that, given a \QRAMc-access to input sets $T_1,\ldots,T_{p-1}$ of size $\lfloor p/2\rfloor$ each,  
finds a degree-$\lfloor p/10\rfloor$ polynomial $X$  that, in expectation, satisfies $(\frac12+\frac{\sqrt{19}}{20})p+o(p)$ constraints $X(z)\in T_z$, where  $z\in\F_p^*$.
The algorithm works in the \QRAMq model.
\end{thm}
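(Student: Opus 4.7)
The plan is to verify that each of the four DQI stages can be implemented in $\widetilde\OO(p)$ time in the \QRAMq model, and then to invoke the approximation-ratio analysis of~\cite[Theorem~1.1]{jordan25:DQI-original} adapted to our choice of amplitudes. Since the excerpt has already carried out the ratio computation for $m=p-1$, $n=\lfloor p/10\rfloor+1$, $r=\lfloor p/2\rfloor$, obtaining $(\tfrac{1}{2}+\tfrac{\sqrt{19}}{20})(p-1)+o(p)$ in the limit $p\to\infty$, the bulk of the argument lies in the stage-by-stage runtime analysis.

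For the merged Stages~1 and~2, I would exhibit a product state of the form $\bigotimes_{i=1}^{m}|\psi\>$ whose induced Hamming-weight distribution $\{w_k^2\}$ concentrates near the DQI weight cutoff~$\ell$. Such a state is producible in $\OO(m)$ gates, bypassing Dicke-state preparation entirely. I would then show in a separate lemma that the small tail beyond~$\ell$ contributes only $o(1)$ to the expected ratio, so that the bound~\eqref{eq:soverm} is still attained in the limit.

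For Stage~3, for each index~$i$ I would prepare $|\widehat G_i\>$ using an exact-Grover-style subroutine---leveraging the \QRAMc membership test for $S_i$---followed by a quantum Fourier transform over~$\F_p$. Each call costs $\widetilde\OO(\sqrt{p/\min\{r,p-r\}})=\OO(\polylog p)$ because $r\approx p/2$, so the total cost across all~$i$ is $\widetilde\OO(m)$; uncomputing $\mu$ from $y$ by indicator bits is linear. For Stage~4, I would exploit the Vandermonde structure of~$B$: the map $y\mapsto B^\top y$ reduces to polynomial evaluation at the powers of~$\gamma$, which the NTT performs in $\widetilde\OO(p)$ time. Instead of the three-step (a)--(c) recipe, I would uncompute~$y$ from its syndrome by directly decoding the narrow-sense Reed--Solomon code via the continued-fraction / fast extended Euclidean algorithm of~\cite{reed78:GCDdecoding}, again in $\widetilde\OO(p)$; the details belong in Appendix~\ref{app:FastDecoding}.

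The main obstacle is the Stages~1--2 simplification: one must exhibit a product state whose weight distribution mimics the optimal~$w_k$ of~\cite{jordan25:DQI-original} sufficiently well to preserve the approximation ratio asymptotically, and verify that its small tail beyond the decoding radius~$\ell$ does not spoil the uncomputation in Stage~4 (which strictly requires $|y|\le\ell$ for invertibility of $y\mapsto B^\top y$). Once this step is carried out, the Stage-3 exact-Grover construction and the Stage-4 fast decoder amount to a careful assembly of known quantum-algorithmic and coding-theoretic ingredients rather than fundamentally new ideas.
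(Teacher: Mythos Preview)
Your proposal is correct and follows essentially the same approach as the paper: the product state $(\sqrt{1-\lambda_c}\,|0\rangle+\sqrt{\lambda_c}\,|1\rangle)^{\otimes m}$ with binomially distributed weights for Stages~1--2, the exact-Grover preparation of $|G_i\rangle$ followed by $F_p$ for Stage~3, and the NTT plus continued-fraction/fast-Euclidean Reed--Solomon decoder for Stage~4. The only cosmetic difference is that the paper handles the weight tail beyond $\ell$ by an explicit trimming measurement (succeeding with probability $1-\epsilon$) rather than by absorbing it into the ratio analysis, but this is equivalent.
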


\subsection{Concurrent work}

Concurrently and independently with this work, Khattar et al.~\cite{khattar25:optimized_DQI} also construct a nearly linear-time DQI algorithm for the OPI problem. The main difference between the two approaches is that our algorithm uses ``dense'' Dicke states $|D_k^m\>$, whereas Khattar et al.~employ ``sparse'' Dicke states $|SD_k^m\>$, the two being defined, respectively, as
\[
|D_k^m\>:=\frac{1}{\sqrt{\binom{m}{k}}} \sum_{\substack{\mu\in\{0,1\}^m\\|\mu|=k}}|\mu\>,
\qquad
|SD_k^m\>:=\frac{1}{\sqrt{\binom{m}{k}}}
\sum_{1\le c_1< c_2 < \ldots c_k \le m}|c_1,c_2,\ldots,c_k\>.
\]
The correspondence between the two representations is that, for a bit string $\mu\in\{0,1\}^m$ of Hamming weight $k$, the ordered tuple $(c_1,c_2,\ldots,c_k)$ encodes the positions of the $1$-bits in $\mu$.
In this work, we achieve linear runtime for Stages 1 and 2 of the DQI algorithm by effectively combining them and bypassing the quadratic-time preparation of dense Dicke states altogether. In contrast, Khattar et al.~achieve nearly-linear runtime for Stages 1 and 2 by constructing a procedure that prepares sparse Dicke states in nearly linear time.

Another notable difference is that the algorithm of Ref.~\cite{khattar25:optimized_DQI} never explicitly computes error vectors $y\in\F_p^m$ in their entirety.  Instead, it generates each error term $y_i\in\F_p$ sequentially and temporarily, using it immediately to update the syndrome register. This approach drastically reduces the memory required for the error register.

\paragraph{Organization.}

The rest of this paper is organized as follows. Section~\ref{sec:prelim} introduces the necessary preliminaries on computational models, Fourier transforms, and the Reed--Solomon code. In Section~\ref{sec:algo}, we describe our algorithm and analyze its runtime. Section~\ref{sec:performance} examines the number of constraints satisfied by the algorithm’s output. Finally, we leave various proofs to the appendix. 

\section{Preliminaries}
\label{sec:prelim}

\subsection{Computational model}
\label{sec:Comp_model}

We assume that the reader is familiar with basic concepts of quantum information and quantum computation (see \cite{NielsenChuang, watrous2018} for reference). Here we formalize the computational model that we are considering in this paper, that is, quantum circuits with quantum random access memory (QRAM). In fact, we will introduce three computational models: the standard circuit model (i.e., without random access memory), the model with quantum random read-only access to classical memory (\QRAMc), and the model with quantum random read-write access to quantum memory (\QRAMq).

For all these models, we can think of the total memory of the algorithm as consisting of two parts: the input registers, initialized with the algorithm's input, and the workspace registers, whose qubits are initially set to zero. Some of the workspace registers are also designated for returning the output and are measured at the end of the computation for this purpose.

\bigskip

In the \emph{standard quantum circuit model}, a quantum algorithm is represented as a quantum circuit composed of single- and two-qubit gates.%
\footnote{Or a universal subset thereof, with $\{H, S, T, \text{CNOT}\}$ being a common choice.}
The running time of the algorithm is measured by the total number of gates it employs to perform the computation. 
In this model, assuming that each of the $N$ input bits, $i_0, i_1, \ldots, i_{N-1}$, is relevant for producing the desired result, the computation must execute at least $\Omega(N)$ gates. To overcome this limitation, one may consider allowing random access to the input.

\bigskip

In the \emph{quantum random access to classical input} (\QRAMc) model, we assume that the single- and two-qubit quantum gates are restricted to the workspace registers, and the only way to access the input from the workspace is through the random-access gate $R_{\mathrm{C}}$, defined as follows. Among the workspace registers, we designate $\lceil\log_2 N\rceil$ qubits as the \emph{address register} and a single qubit as the \emph{interface register}, collectively referred to as the \emph{random access registers}. 
The gate $R_{\mathrm C}$ acts on the random access and the input registers as
\[
R_{\mathrm C}\colon |a,b\>|i_0,i_1,\ldots,i_{N-1}\> \mapsto |a,b\oplus i_a\>|i_0,i_1,\ldots,i_{N-1}\>
\]
where $a \in \{0,1,\ldots,N-1\}$ is an address encoded in $\lceil\log_2 N\rceil$ bits and $b \in \{0,1\}$.

The gate $R_{\mathrm C}$ can be applied when the random access registers are in a superposition over various pairs of $a$ and $b$, as well as when they are entangled with other workspace registers. However, note that in all cases the content of the input registers remains unchanged.
The running time of an algorithm in the \QRAMc model is measured by the total number of single-qubit gates, two-qubit gates, and $R_{\mathrm C}$-gates that it employs.

\bigskip

In the \emph{quantum random access to quantum memory} (\QRAMq) model, in contrast to the \QRAMc model, the single- and two-qubit quantum gates and the random access are not restricted to the workspace and the input registers, respectively. 
Moreover, the random access gate $R_{\mathrm Q}$ is now read-write rather than read-only.

In the \QRAMq model, we assume that the quantum memory consists of a total of $M + \lceil\log_2 M\rceil + 1$ qubits, where $\lceil\log_2 M\rceil + 1$ non-input qubits are designated as the random access registers, $\lceil\log_2 M\rceil$ qubits forming the address register and the remaining one qubit forming the interface register.

Unlike $R_{\mathrm C}$, the gate $R_{\mathrm Q}$ acts on all registers. Given that the address register contains $a \in \{0,1,\ldots,M-1\}$, the effect of $R_{\mathrm{Q}}$ is to swap the content of the interface register with the $a$-th qubit of the rest of the memory. That is, $R_{\mathrm Q}$ acts as
\begin{multline*}
R_{\mathrm Q}\colon|a,b\>|\ell_0,\ell_1,\ldots,\ell_{a-1},\ell_{a},\ell_{a+1},\ldots,\ell_{M-1}\>
\\\mapsto
|a,\ell_a\>|\ell_0,\ell_1,\ldots,\ell_{a-1},b,\ell_{a+1},\ldots,\ell_{M-1}\>,
\end{multline*}
where $b\in\{0,1\}$ is the content of the interface register and $|\ell_0,\ell_1,\ldots,\ell_{M-1}\>$ is the state of the rest of the memory before the application of $R_{\mathrm Q}$  (i.e., the memory excluding the random access registers).

\bigskip

The computational model employed for DQI-based algorithms in this paper, including the algorithm addressed in Theorem~\ref{thm:main}, is a \emph{hybrid} of the \QRAMc and \QRAMq models. Specifically, we assume that the input is stored in a read-only memory accessible in the \QRAMc fashion, while the remaining registers—the algorithm’s workspace—are accessible in the \QRAMq fashion.

\paragraph{Simulating QRAM in the standard model.}

The random access gate $R_{\mathrm Q}$ can be simulated within the standard quantum circuit model using $\OO(M)$ single- and two-qubit gates. Perhaps the simplest approach is to employ an address decoder that converts the address $a$ from binary to one-hot representation.

Recall that the Fredkin gate is a three-qubit gate that performs controlled-swap. It can be written as a product of three Toffoli gates and implemented using a constant number of single- and two-qubit gates. An address decoder requires $M$ ancilla qubits and can be implemented using $M-1$ Fredkin gates (see Appendix~\ref{app:QRAM} for details).

Once the one-hot representation of $a$ is obtained, it is straightforward to swap $b$ and $\ell_a$ using an additional $M$ Fredkin gates. Finally, by using another $M-1$ Fredkin gates, we apply the address decoder in reverse and restore the ancilla qubits to their original state.

Similarly, the random access gate $R_{\mathrm C}$ can be simulated using $\OO(N)$ single- and two-qubit gates. In this case, once we have the one-hot encoding of $a$, we replace the Fredkin gates with Toffoli gates.

A drawback of this approach is that it requires a linear number of ancilla qubits. For \QRAMq, this effectively doubles the total size of memory. Fortunately, both $R_{\mathrm Q}$ and $R_{\mathrm C}$ can also be implemented using only a logarithmic number of ancilla qubits—the same the number of qubits used to store $a$—while still maintaining a linear gate complexity (see Appendix~\ref{app:QRAM}).

\bigskip

Regarding the OPI problem, specifying the input $T_1, \ldots, T_{p-1}$ requires approximately $p^2$ bits in total. Consequently, a single \QRAMc-access to the input can be simulated in the standard model using $\OO(p^2)$ gates. However, the DQI-based algorithm in Theorem~\ref{thm:main} accesses the sets $T_1, \ldots, T_{p-1}$ sequentially, never using \QRAMc to access more than one at a time. Therefore, each \QRAMc-access performed by the algorithm can be simulated using only $\OO(p)$  single- and two-qubit gates.

The workspace used by the algorithm in Theorem~\ref{thm:main} has the same size as its running time in the quantum random access model. More precisely, it is $\OO(p\polylog p)$ qubits. Thus, each \QRAMq-access can be simulated in the standard model using $\OO(p \polylog p)$ gates.
As a result, the full algorithm can be simulated using $\OO(p^2 \polylog p)$ single- and two-qubit gates in total.

\subsection{Fourier and number-theoretic transforms}
\label{sec:NTT}

The algorithm we propose employs the quantum Fourier transform and the fast number-theoretic transform, the latter being a nearly linear-time algorithm for computing the discrete Fourier transform over $\F_p$. Here we briefly introduce these operations and describe their runtime.

\paragraph{Quantum Fourier Transform (QFT).}

For a positive integer $n$, the quantum Fourier transform $F_n$ is a unitary operation acting on $\lceil\log_2n\rceil$ qubits whose action on the computational basis states $|j\>$ is defined as 
\begin{alignat*}{2}
& F_n\colon |j\>\mapsto \frac{1}{\sqrt{n}}\sum_{k=0}^{n-1}\omega_n^{jk}|k\>
&\qquad& \text{when }0\le j\le n-1,
\\
& F_n\colon |j\>=|j\> &&
\text{when }n\le j \le 2^{\lceil\log_2n\rceil}-1,
\end{alignat*}
where $|j\>$ denotes the binary encoding of the integer $j$, and
$\omega_n:=\ee^{2\pi\mathsf{i}/n}$ is the $n$-th root of unity.

When $n$ is a power of $2$, $F_n$ can be implemented exactly using $\OO(\polylog n)$ single- and two-qubit quantum gates (see, e.g., \cite[Chapter 5]{NielsenChuang}). This exact implementation enables the computational primitive of quantum phase estimation, which, in turn, can be used to approximately implement $F_n$ for arbitrary $n$. In particular, given any $n$ and any $\epsilon>0$, one can implement a unitary operation $U$ such that $\|U - F_n\| \le \epsilon$ using $\OO(\polylog n \cdot \log 1/\epsilon)$ gates. For our application, an approximation error $\epsilon$ that scales inverse-polynomially with $n$ is sufficient, therefore we assume that $F_n$ can be implemented \emph{exactly} using $\OO(\polylog n)$ gates for any $n$.

\paragraph{Number-Theoretic Transform (NTT).}

Let $p$ be a prime and $\gamma$ a primitive element in the finite field $\F_p$. The number-theoretic transform $\NTT_{p,\gamma}$ is the linear map on $\F_p^{p-1}$ that maps  $x=(x_0,x_1,\ldots,x_{p-2})$ to $X=(X_0,X_1,\ldots,X_{p-2})$ where $X_j=\sum_{i=0}^{p-2}\gamma^{ij}x_i$ for all $j$. We write $X = \NTT_{p,\gamma}(x)$, or simply $X = \NTT(x)$ when $p$ and $\gamma$ are implicit or clear from context. The number-theoretic transform is invertible; specifically, $x_i=(p-1)^{-1}\sum_{j=0}^{p-2}\gamma^{-ij}X_j$ for all $i$, and we write $x=\NTT^{-1}_{p,\gamma}(X)$.

Given an input vector $x \in \F_p^{p-1}$ where each entry is represented using $\lceil\log_2 p\rceil$ bits, the transform $X = \NTT_{p,\gamma}(x)$ can be computed exactly using $\OO(p \polylog p)$ logical gates in the classical circuit model. This can be done using the Cooley--Tukey algorithm~\cite{cooley65:dft,heideman84:gauss}, possibly in combination with Rader's algorithm~\cite{rader68:dft} and fast Fourier transform over the complex numbers.%
\footnote{If all prime divisors of $p-1$ are of size $\OO(\polylog p)$, then the Cooley--Tukey algorithm alone suffices.}
We provide further implementation details in Appendix~\ref{app:FastNTT}.

\subsection{Narrow-Sense Reed--Solomon codes}
\label{sec:reed-solomon}

For a prime $p$, a primitive element $\gamma\in\F_p$, and an integer $n\le p-2$, the narrow-sense Reed--Solomon code~\cite{reed60:RS-codes} is defined as $\ker B^\top\subseteq \F_p^{p-1}$ where $B$
is given as  $B_{i,j}=\gamma^{ij}$ with $i=1,2,\ldots,p-1$ and $j=0,1,\ldots,n-1$. Alternatively, the code can be described by its generator matrix $G$ given as $G_{i,j}:=\gamma^{ij}$ with $i=1,2,\ldots,p-1$ and $j=1,2,\ldots,p-n-1$, so that $\im G = \ker B^\top$.
The minimum distance of this code is $d^\perp=n+1$.

\paragraph{Computing the syndrome from the error.}

Given an error $y\in\F_p^{p-1}$, one can compute the syndrome $B^\top y$ efficiently using the number-theoretic transform $\NTT:=\NTT_{p,\gamma}$. The matrix form of $\NTT$ has entries  $(\NTT)_{i,j}=\gamma^{ij}$ for $i,j=0,1,\ldots,p-2$. For $y=(y_1,\ldots,y_{p-2},y_{p-1})$, let $y^\rightarrow:=(y_{p-1},y_1,\ldots,y_{p-2})$ denote its circular shift to the right. 
Then $B^\top y = (\NTT \cdot y^\rightarrow)_{0..n-1}$,
where the subscript $0..n-1$ indicates the first $n$ entries of the resulting sequence. 
Therefore, the syndrome $B^\top y$ can be computed in time $\OO(p \polylog p)$.

\paragraph{Computing the error from the syndrome.}

The error $y$ can be recovered from the syndrome $B^\top y$ using a decoding algorithm based on continued fractions and the fast extended Euclidean algorithm~\cite{reed78:GCDdecoding,aho74:algoDesign}. 
 Specifically, given the syndrome $B^\top y$ for an unknown error $y$ of Hamming weight less than $d^\perp/2$, the error $y$ can be found in time $\OO(p\polylog p)$. 
Additional details on this decoding procedure for the narrow-sense Reed--Solomon codes are provided in Appendix~\ref{app:FastDecoding}.

\section{Algorithm}
\label{sec:algo}

In this section we show how to prepare the state
\beE
\label{eq:psi4}
|\algostate_4\> := (F_p^{-1})^{\otimes n}\sum_{k=0}^\ell \sum_{\substack{y\in\F_p^m\\|y|=k}}
\frac{w_k}{\sqrt{\binom{m}{k}}}\beta_y|B^\top y\>
\enE
with probability almost $1$,
where the coefficients $w_k$ and $\beta_y$ are given below, in
\eqref{eq:wk_def} and \eqref{eq:betay_def}, respectively. 
Then in Section~\ref{sec:performance} we will show that, upon measuring this state in the standard basis, we obtain $x\in\F_p^{n}$ such that, in expectation, the fraction of constraints satisfied by $x$ is asymptotically as in \eqref{eq:soverm} .

The running time of the algorithm is
\beE
\label{eq:general_runtime}
\OO\Big(m\frac{\sqrt{p}}{\min\{\sqrt{r},\sqrt{p-r}\}}\polylog (p+m) + T_B\Big)
\enE
where $T_B$ is the time necessary both to multiply vectors in $\F_p^m$ by $B^\top$ and to perform the error correction for the code $\ker B^\top$.%
\footnote{In this context, by ``performing the error correction'' we mean, given the syndrome $B^\top y\in\F_p^n$ corresponding to an error $y\in\F_p^m$ of Hamming weight less than $d^\perp/2$, obtaining $y$ itself.}
 The former term of the time complexity, namely, $\OO(m\sqrt{p/\min\{r,p-r\}}\polylog p)$, corresponds to Stages 1, 2, and 3 of the DQI algorithm, and it requires \QRAMc access to the classical input, but it does not require \QRAMq.%
\footnote{Without \QRAMc, the former term would become $\OO(mp\polylog p)$, still benefiting from our fast preparation of the superposition over Dicke states. In particular, in the binary case, when $p=2$, it would become $\OO(m)$.}

In the parameter regime relevant to the OPI problem, we have $m=p-1$, $r=\lfloor p/2\rfloor$, and $B$ is a Vandermonde matrix with entries $B_{i,j}=\gamma^{ij}$. Multiplication by $B^\top$ can be performed in time $\OO(p\polylog p)$ in the standard model using the fast number-theoretic transform. And $\ker B^\top$ is a narrow-sense Reed--Solomon code, for which we can perform error correction in time $\OO(p\polylog p)$ using \QRAMq. Hence, if we have quantum random access memory, then the preparation time of the state $|\algostate_4\>$ corresponding to the OPI problem is $\OO(p\polylog p)$.

\subsection{Final state}
\label{sec:finalState}

We choose $\ell:=\min\{\lfloor d^\perp/2\rfloor-1, m(1-r/p)\}$. When $r$ is large enough so that $\ell/m=1-r/p$, as described  in Section~\ref{sec:performance}, we will asymptotically be able to satisfy almost all the constraints.
Let us complete the definition of the state $|\algostate_4\>$ by specifying the coefficients $w_k$ and $\beta_y$ appearing in \eqref{eq:psi4}. 
First, define
\begin{alignat}{3}
\label{eq:q_def}
& \nmean &\,\,:=\,\,&\ell/m-m^{-1/2+c}
&& \text{for a constant }c>0,
\\ & \label{eq:wprime_k_def}
w'_k&\,\,:=\,\,& \sqrt{\nmean^k(1-\nmean)^{m-k}\binom{m}{k}} &\qquad&\text{for } k=0,1,\ldots,m,
\\  & \epsilon &\,\,:=\,\,& \sum_{k=\ell+1}^m ( w'_k)^2
 =\sum_{k=\ell+1}^m \nmean^k(1-\nmean)^{m-k}\binom{m}{k},
\\ & 
\label{eq:wk_def}
w_k&\,\,:=\,\,& w_k'/\sqrt{1-\epsilon} &\qquad&\text{for } k=0, 1,\ldots, \ell.
\end{alignat}
Note that $(w'_k)^2$ is the probability mass function of the binomial distribution, and $\epsilon$ is essentially a truncation error.
In our applications $\ell$ grows linearly with $m$, and therefore the value of $\epsilon$ is minuscule: in Claim~\ref{clm:tail_bound} we show that, assuming $\ell\ge 4m^{1/2+c}$, we have
$\epsilon \le  1/\ee^{m^{2c}\!/2}$.
Next, define 
\beE
\label{eq:g_i_def}
g_i(z):=\frac{f_i(z)-r/p}{\sqrt{r(p-r)/p}}
\qquad\text{where}\qquad
f_i(z):=\begin{cases}
1 & \text{if }z\in S_i, \\
0 & \text{if }z\notin S_i,
\end{cases}
\enE
and then
\beE
\label{eq:hatg_i_def}
\widehat{g}_i(z) := \frac{1}{\sqrt{p}}\sum_{z'\in\F_p}\omega_p^{zz'} g_i(z')
\enE
where $\omega_p:=\ee^{2\pi\ii/p}$, as before. 
Note that $\sum_{z\in\F_p}g_i(z)=0$, therefore $\widehat{g}_i(0)=0$. 
Finally, the coefficients $\beta_y$ are defined as
\beE
\label{eq:betay_def}
\beta_y := \prod_{\substack{i=1\\y_i\ne 0}}^m \widehat{g}_i(y_i).
\enE

\subsection{Stages One and Two: Quick superposition over Dicke states}
\label{sec:linear_Dicke}

Here we address joint implementation of Stages 1 and 2 of the DQI algorithm, in which we prepare a superposition over Dicke states $|D_k^m\>=\sum_{\substack{\mu\in\{0,1\}^m\\|\mu|=k}}|\mu\>/\sqrt{\binom{m}{k}}$.
 In particular, we show that it is easy to prepare the mask register in state 
\beE
\label{eq:psi2prime}
|\algostate_2'\>:=\sum_{k=0}^m w'_k|D_k^m\>
=\sum_{k=0}^m \sum_{\substack{\mu\in\{0,1\}^m\\|\mu|=k}} \frac{w'_k}{\sqrt{\binom{m}{k}}}|\mu\>
\enE
in time $\OO(m)$.
Recall from \eqref{eq:wprime_k_def} that $w'_k=\sqrt{\nmean^k(1-\nmean)^{m-k}\binom{m}{k}}$,
therefore
\[
|\algostate_2'\>
= \sum_{k=0}^m \sum_{\substack{\mu\in\{0,1\}^m\\|\mu|=k}} \sqrt{\nmean^k(1-\nmean)^{m-k}} |\mu\>
= (\sqrt{1-\nmean}|0\>+\sqrt{\nmean}|1\>)^{\otimes m}.
\]
This is a simple product state without any entanglement between the qubits and can be prepared by initializing all $m$ qubits to $|0\>$ and then applying the same Pauli-$Y$ rotation to each of them.
We have thus prepared a superposition over Dicke state bypassing explicit Dicke state preparation algorithms, such as in~\cite{bartschi22:Dicke}.

\paragraph{On the choice of $\nmean$.}

The state $|\algostate_2'\>$ that we prepare is similar to the general form \eqref{eq:stage2} used in~\cite{jordan25:DQI-original}, except that $|\algostate_2'\>$ also has a support on $\mu$ with Hamming weights larger than $\ell$. The probability weight of the total support on such $\mu$, namely, $\sum_{\mu\colon|\mu|>\ell}|\<\mu|\algostate'_2\>|^2$, is $\epsilon$, which is minuscule because we have chosen $\nmean$ to be such that the mean value $\nmean m$ of the corresponding binomial distribution over Hamming weights $k$ is $\ell-m^{1/2+c}$, and $m^{1/2+c}$ is substantially larger than the standard deviation $\sqrt{m\nmean(1-\nmean)}$ of that distribution.

More generally, in order to achieve the expected ratio $\<f\>/m$ of satisfied constraints to be as in \eqref{eq:soverm}, we will implicitly see in Section~\ref{ssec:WeightsToExp} that it is necessary and sufficient that the probability distribution over $k$, given by the probabilities $(w'_k)^2$, satisfy the following two informal constraints. First, we need that probability distribution places almost all probability weight on the values of $k$ between $\ell-o(\ell)$ and $\ell$, that is, that $\ell-o(\ell)\le k\le \ell$ with probability $1-o(1)$. And, second, we need that $w'_k$ does not change by much from one $k$ to the next one for almost all values of $k$, or, more precisely, we need that $\sum_k w'_kw'_{k-1}=1-o(1)$.

There are multiple ways to chose the probability distribution $\{(w_k')^2\}_k$ to satisfy these constraints: we have chosen the binomial distribution while \cite[Section~6.3]{jordan25:DQI-original} chooses the uniform distribution over the values of $k$ between $\ell-\sqrt{\ell}$ and $\ell$. We illustratively compare these two distributions in Figure~\ref{fig:weights_wk}.

\begin{figure}[!h]
\centering
\includegraphics[width=14cm]{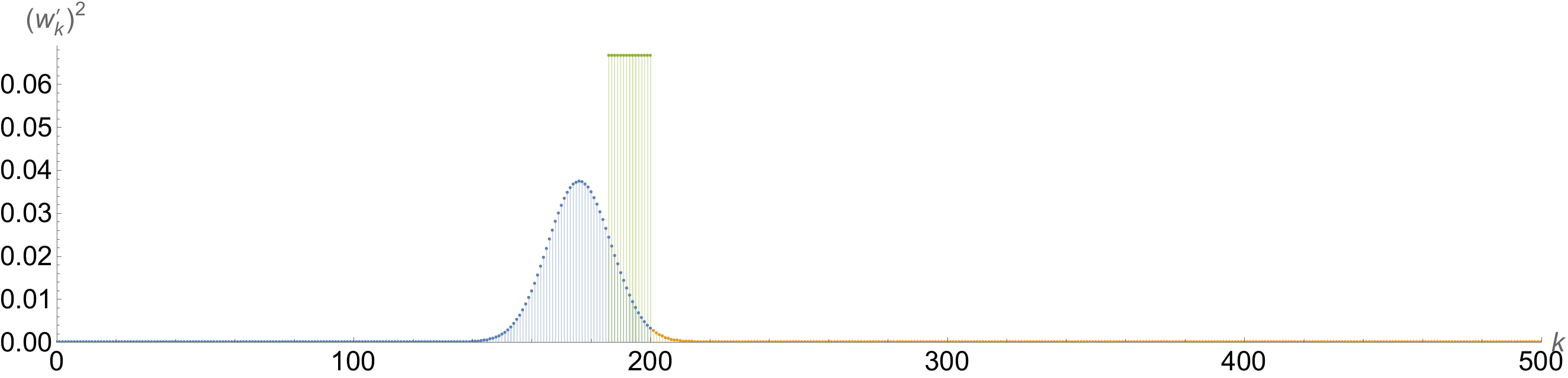}
\caption{
\footnotesize
Let $m=500$, $\ell=200$, and $c=0.01$. The binomial distribution $B(m,q)$ is shown in blue and orange, the orange part representing the part where we ``overshoot'' $\ell$. The uniform distribution used in \cite{jordan25:DQI-original}---they choose $(w'_k)^2=1/\lceil\sqrt{\ell}\rceil$ for $k$ with $\ell-\lceil\sqrt{\ell}\rceil < k \le \ell$ and $(w'_k)^2=0$ for other $k$---is shown in green.
}
\label{fig:weights_wk}
\end{figure}

\subsection{Stage Three: Exact Grover's algorithm}
\label{sec:Exact_Grover_main}

Here we will  present a procedure $G_i$ that, given \QRAMc access to the input, maps $|\widehat 0\>$ to $|G_i\>:=\sum_{z\in\F_p}g_i(z)|z\>$, where the amplitudes $g_i(z)$ are given in \eqref{eq:g_i_def} and can be expressed as
\beE
\label{eq:g_i_alt}
g_i(z)=\begin{cases}
\sqrt{p-r}/\sqrt{pr} & \text{if }z\in S_i, \\
-\sqrt{r}/\sqrt{p(p-r)} & \text{if }z\notin S_i.
\end{cases}
\enE
Note that $\sum_{z\in\F_p}g_i^2(z)=1$, therefore the state $|G_i\>$ has unit norm.

Given this procedure, our DQI algorithm proceeds as follows. First, for every qubit of the mask register, we introduce a $p$-dimensional qudit in the uniform superposition $|\widehat 0\>:=\sum_{x\in \F_p}|x\>/\sqrt{p}$, these qudits jointly forming the error register. 
Then, for each $i$ in succession, conditioned on $i$-th mask qubit being in state $|1\>$, we apply $G_i$ to the $i$-th error qudit. As a result, these two steps jointly map $|\algostate'_2\>$ to
\[
|\algostate_3'\> := \bigotimes_{i=1}^{m}\left(\sqrt{1-\nmean}|0\>|\widehat 0\>+\sqrt{\nmean}|1\>|G_i\>\right).
\]
Each instance of the state $|\widehat 0\>$ can be prepared in time $\OO(\polylog p)$, and it is left to describe an efficient implementation of $G_i$.

\bigskip

The proposed procedure for implementing $G_i$ is very reminiscent of the exact Grover's search algorithm~\cite{long01:grover,hoyer00:exactGrover,brassard02:exactGrover}.
The procedure alternates between two types of parametrized rotations. One can be thought of as a partial diffusion operator
\beE
\label{eq:diffusion_phi}
 D^\phi := |\widehat 0\>\<\widehat 0|+\ee^{\ii\phi} (I_p-|\widehat 0\>\<\widehat 0|),
\enE
performing the rotation around the uniform superposition $|\widehat 0\>$,
the Grover's diffusion operator being $D^\pi$.
The other one is 
\beE
\label{eq:oracle_psi}
\Xi^{\psi}_{i} := \ee^{\ii\psi}\sum_{x\in S_i}|x\>\<x| + \sum_{x\in \F_p\setminus S_i} |x\>\<x|,
\enE
which is analogous to a (partial) membership oracle in search algorithms.

Let us denote $\rho:= r/p$ for brevity, and for the sake of simplicity let as assume $\rho\le 1/2$, the case $\rho\ge 1/2$ being handled analogously. Let us define angles $\theta_\rho$, $\beta_\rho$, $\phi_\rho$, $\psi_\rho$ as
\beE
\label{eq:Grover_angles}
\begin{alignedat}{2}
&\theta_\rho := \arcsin\sqrt{\rho}\in(0,\pi/4),
&\qquad
& \beta_\rho:=\pi/2-2 \lfloor \pi/(4\theta_\rho)\rfloor \theta_\rho \in [0,2\theta_\rho),
\\
& \phi_\rho := \arccos\left(-\frac{\tan\beta_\rho}{\tan2\theta_\rho}\right)\in [\pi/2, \pi),
&&
\psi_\rho := \arccos \frac{\sin\beta_\rho}{\sin2\theta_\rho} \in (0, \pi/2].
\end{alignedat}
\enE
Also let us define a positive integer $\tau_\rho := \lfloor \pi/(4\theta_\rho)\rfloor$.
In Appendix~\ref{app:Gorver_angles} we prove the following claim, and give the interpretation of the above angles and $\tau_\rho$ in its context.

\begin{clm}
\label{clm:exact_Grover}
We have
$
\ee^{\ii(\pi-\psi_\rho)}\Xi_i^{\pi+2\psi_\rho} D^{\phi_\rho} (\Xi_i^\pi D^\pi)^{\tau_\rho} |\widehat 0\>= |G_i\>.
$%
\footnote{The right-most instance of $D^\pi$ in the expansion of $(\Xi_i^\pi D^\pi)^{\tau_\rho}$ is redundant, because $D^\pi|\widehat 0\> = |\widehat 0\>$.}
\end{clm}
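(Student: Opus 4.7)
The plan is to reduce the claim to a two-dimensional invariant subspace and then verify it as a standard exact-Grover calculation.

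First I would introduce the normalized ``good'' and ``bad'' states
\[
|S\rangle := \tfrac{1}{\sqrt{r}}\sum_{x\in S_i}|x\rangle,\qquad |\bar S\rangle := \tfrac{1}{\sqrt{p-r}}\sum_{x\in \F_p\setminus S_i}|x\rangle,
\]
and set $V := \spn\{|S\rangle,|\bar S\rangle\}$. Using $\sqrt{\rho}=\sin\theta_\rho$ together with the uniform expansion of $|\widehat 0\rangle$ and the alternative form \eqref{eq:g_i_alt} of $g_i$, one checks that
\[
|\widehat 0\rangle = \sin\theta_\rho\,|S\rangle+\cos\theta_\rho\,|\bar S\rangle \qAnd |G_i\rangle = \cos\theta_\rho\,|S\rangle-\sin\theta_\rho\,|\bar S\rangle,
\]
so both vectors lie in $V$ and are orthogonal. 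Since $\Xi_i^\psi$ is diagonal in the $(|S\rangle,|\bar S\rangle)$ basis and $D^\phi$ is built from $|\widehat 0\rangle\langle\widehat 0|$, both operators preserve $V$; it therefore suffices to verify the identity as a $2\times 2$ matrix computation in this basis.

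Next I would verify that one Grover iterate $\Xi_i^\pi D^\pi$ acts on $V$ as a real plane rotation through $2\theta_\rho$: in the chosen basis $\Xi_i^\pi = \mathrm{diag}(-1,1)$ and $D^\pi = 2|\widehat 0\rangle\langle\widehat 0|-I$, and multiplying yields
\[
\Xi_i^\pi D^\pi = \begin{pmatrix}\cos 2\theta_\rho & -\sin 2\theta_\rho \\ \sin 2\theta_\rho & \cos 2\theta_\rho\end{pmatrix}.
\]
Iterating $\tau_\rho$ times and applying to $|\widehat 0\rangle$, with $2\tau_\rho\theta_\rho=\pi/2-\beta_\rho$, produces the intermediate state
\[
|\psi\rangle := (\Xi_i^\pi D^\pi)^{\tau_\rho}\,|\widehat 0\rangle = -\cos(\beta_\rho+\theta_\rho)\,|S\rangle+\sin(\beta_\rho+\theta_\rho)\,|\bar S\rangle.
\]
Note that when $\beta_\rho=0$ this is already $-|G_i\rangle$: the Grover iterations land on $|G_i\rangle$ up to a global sign, and the remaining partial step must correct both the residual rotation by $\beta_\rho$ and this sign.

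The final step is to check that $\ee^{\ii(\pi-\psi_\rho)}\,\Xi_i^{\pi+2\psi_\rho}\,D^{\phi_\rho}\,|\psi\rangle=|G_i\rangle$. I would compute $\langle\widehat 0|\psi\rangle=\sin\beta_\rho$, expand
\[
D^{\phi_\rho}|\psi\rangle = \sin\beta_\rho(1-\ee^{\ii\phi_\rho})\,|\widehat 0\rangle+\ee^{\ii\phi_\rho}\,|\psi\rangle,
\]
apply the diagonal $\Xi_i^{\pi+2\psi_\rho}=\mathrm{diag}(-\ee^{\ii 2\psi_\rho},1)$, multiply by $\ee^{\ii(\pi-\psi_\rho)}$, and collect the $|S\rangle$- and $|\bar S\rangle$-coefficients of the result. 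Setting them equal to $\cos\theta_\rho$ and $-\sin\theta_\rho$ respectively gives two complex scalar equations whose real parts reduce, via standard sum-to-product identities, to $\cos\phi_\rho=-\tan\beta_\rho/\tan 2\theta_\rho$ and $\cos\psi_\rho=\sin\beta_\rho/\sin 2\theta_\rho$, i.e.\ precisely the defining relations of $\phi_\rho$ and $\psi_\rho$ in \eqref{eq:Grover_angles}, while the imaginary parts cancel against the global phase $\ee^{\ii(\pi-\psi_\rho)}$.

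The main obstacle is bookkeeping rather than ideas: the three phases $\ee^{\ii\phi_\rho}$, $\ee^{\ii 2\psi_\rho}$, and $\ee^{\ii(\pi-\psi_\rho)}$ must be combined carefully so that the result becomes real with the correct sign on $|\bar S\rangle$. Once that is done, one should also check that the arccosines in \eqref{eq:Grover_angles} land in $[\pi/2,\pi)$ and $(0,\pi/2]$ as claimed, which follows from $\beta_\rho\in[0,2\theta_\rho)$ and $\theta_\rho\in(0,\pi/4]$. The whole argument sits squarely within the exact-Grover phase-matching paradigm of \cite{long01:grover,hoyer00:exactGrover,brassard02:exactGrover}, and the case $\rho\ge 1/2$ is handled by the symmetry $S_i\leftrightarrow\F_p\setminus S_i$ announced just before the claim.
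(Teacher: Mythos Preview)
Your proposal is correct and follows essentially the same route as the paper: restrict to the two-dimensional invariant subspace spanned by $|S\rangle,|\bar S\rangle$, identify the Grover iterate $\Xi_i^\pi D^\pi$ as a plane rotation by $2\theta_\rho$, and then handle the residual angle $\beta_\rho$ with the partial phases $\phi_\rho,\psi_\rho$. The only difference is presentational---the paper derives $\phi_\rho$ and $\psi_\rho$ in a forward manner (choosing $\phi_\rho$ so that $D^{\phi_\rho}|\zeta_\tau\rangle$ has $|S\rangle$-to-$|\bar S\rangle$ amplitude ratio exactly $\cot\theta_\rho$, then reading off $\psi_\rho$ from the resulting phases), whereas you propose to verify that the given values satisfy the equations; the underlying trigonometric identities are identical.
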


This claim tells us how to implement $G_i$ given implementations of $D^\phi$ and $\Xi_i^\psi$. Since we are interested in implementing $G_i$ that is controlled by $i$-th mask qubit, we need efficient implementations of controlled-$D^\phi$ and controlled-$\Xi_i^\psi$. We describe implementation of both in Appendix~\ref{app:efficient_reflections}. In particular, controlled-$D^\phi$ can be implemented using $\OO(\polylog p)$ gates in the standard model.
As for the controlled-$\Xi_i^\psi$, for a general $\psi$, we show how to implement it using two calls to \QRAMc and 
$\OO(\polylog (p+m))$
 of single- and two-qubit gates. When $\psi=\pi$, a single call to \QRAMc suffices.

Note that $\theta_\rho=\OO(\sqrt{\rho})$, and thus the iteration count $\tau_\rho$ satisfies $\tau_\rho = \OO(\sqrt{1/\rho}) = \OO(\sqrt{p/r})$.
Hence, we can implement controlled-$G_i$ in time $\OO(\sqrt{p/r}\polylog p)$, or, more generally, if we do not assume $r/p\le 1/2$, in time
$\OO(\sqrt{p/\min\{r,p-r\}}\polylog p)$.
Because we need to apply controlled-$G_i$ for every $i$, this shows that we can map $|\algostate_2'\>$ to $|\algostate_3'\>$ in time $\OO(m\sqrt{p/\min\{r,p-r\}}\polylog p)$.

\paragraph{Approximate implementation of $G_i$ for OPI.}

In the parameter regime relevant for the OPI problem, we have $\rho = 1/2 - 1/(2p)$, for which $\tau_\rho=1$, as we detail in Appendix~\ref{sec:Grover_Single}.
Hence, Claim~\ref{clm:exact_Grover} implies  that
$-\ee^{-\ii\psi_\rho}\Xi_i^{\pi+2\psi_\rho} D^{\phi_\rho} \Xi_i^\pi |\widehat 0\>= |G_i\>$.
In particular, for this value of $\rho$, we have
$\psi_\rho = \arccos\frac{1}{\sqrt{p^2-1}}\approx \pi/2$, therefore, aproximately, 
$\ii D^{\phi_\rho} \Xi_i^\pi |\widehat 0\>\approx |G_i\>$. Moving $D^{\phi_\rho}$ to the other side, we have
$\ii \Xi_i^\pi |\widehat 0\>\approx D^{-\phi_\rho}|G_i\> = \ee^{-\ii\phi_\rho}|G_i\>$, where the latter equality is from \eqref{eq:diffusion_phi}, as we detail in Appendix~\ref{app:Gorver_angles}.
Furthermore, we have $\phi_\rho = \arccos\frac{-1}{p^2-1}\approx\pi/2$, therefore
$\ii \Xi_i^\pi |\widehat 0\>\approx  -\ii |G_i\>$, which means $-\Xi_i^\pi |\widehat 0\>\approx |G_i\>$.

We show in Appendix~\ref{sec:Grover_Single} that, for the OPI problem,  approximately computing $|G_i\>$ as $|\widetilde G_i\>:=-\Xi_i^\pi |\widehat 0\>$ is good enough, that is, we show that, for 
\[
|\widetilde\algostate_3'\> := \bigotimes_{i=1}^{m}\left(\sqrt{1-\nmean}|0\>|\widehat 0\>+\sqrt{q}|1\>|\widetilde{G}_i\>\right),
\]
we have $\||\widetilde\algostate_3'\>-|\algostate_3'\>\|=\OO(1/\sqrt{p})$.
Therefore, Theorem~\ref{thm:main} is still satisfied if we implement the procedure $G_i$ approximately as $-\Xi_i^\pi$.
While the precise implementation of each $G_i$ requires three calls to \QRAMc, the approximate one requires only a single call. Hence, this approximation reduces the total number of calls to \QRAMc threefold.

\subsection{Stage Three: Quantum Fourier Transform}

As the next step of the algorithm, we apply the quantum Fourier transform $F_p$ to each of the $m$ error qudits of the state $|\algostate'_3\>$. 
Similarly to the definition of $|G_i\>$, define $|\widehat G_i\>:=\sum_{z\in\F_p}\widehat{g}_i(z)|z\>$, where $\widehat{g}_i(z)$ is given in \eqref{eq:hatg_i_def}.
Note that
\[
F_p|G_i\>
= \sum_{z'\in\F_p}g_i(z')F_p|z'\>
= \frac{1}{\sqrt{p}}\sum_{z,z'\in\F_p}g_i(z')\omega_p^{zz'}|z\>
= \sum_{z\in\F_p}\widehat{g}_i(z)|z\>
=|\widehat{G}_i\>,
\]
and also $F_p|\widehat 0\> = |0\>$.
Thus, by applying $F_p^{\otimes m}$ to the error register of $|\algostate'_3\>$, we obtain
\[
|\algostate_3''\>:=\bigotimes_{i=1}^{m}(\sqrt{1-\nmean}|0\>|0\>+\sqrt{\nmean}|1\>|\widehat{G}_i\>).
\]

Next, recall that $\widehat{g}_i(0)=0$, and therefore $\<0|\widehat G_i\> = 0$.
Hence, for each $i$, we can flip the $i$-th mask qubit conditioned on the $i$-th error qudit being in non-zero state, and, by doing so, we disentangle the mask register by returning it to its initial all-zeros state. Then we discard the mask register, and we are left with the error register in the state
\[
|\algostate_3'''\>:=\bigotimes_{i=1}^{m}(\sqrt{1-\nmean}|0\>+\sqrt{\nmean}|\widehat{G}_i\>)
= \sum_{k=0}^m \sum_{\substack{y\in\F_p^m\\|y|=k}}
\frac{w'_k}{\sqrt{\binom{m}{k}}}\beta_y|y\>.
\]

\paragraph{Trimming the heavy errors.}

From practical point of view, we can directly proceed to Stage 4, where we apply the error correction to each $|y\>$ in the superposition. The support of $|\algostate_3'''\>$ on $y$ with the Hamming weight more that $\ell$, for which we cannot guarantee the correct performance of the error correction, is minuscule. In particular, it has the probability weight $\epsilon$. Thus, we can just ignore those $y$ and carry on.%
\footnote{Doing so also lets us remove the term $\polylog m$ from \eqref{eq:general_runtime}, though in most cases we expect this term to be dominated by other terms anyways.}

To keep the analysis clear, however, let us add an ancilla register in which compute the Hamming weight of $y$, then add an additional ancilla qubit, in which we compute if this Hamming weight is at most $\ell$, and finally uncompute the Hamming weight and discard that ancilla register. This all can be done in time $\OO(m\polylog (p+m))$. Then we measure the remaining ancilla qubit, the measurement with probability $1-\epsilon$ saying that the Hamming weight was at most $\ell$ and collapsing the error register to
\[
|\algostate_3\>:=\sum_{k=0}^\ell \sum_{\substack{y\in\F_p^m\\|y|=k}}
\frac{w_k}{\sqrt{\binom{m}{k}}}\beta_y|y\>.
\]
This expression for $|\algostate_3\>$ exactly matches \eqref{eq:stage3},
and we can proceed to Stage 4.

\subsection{Stage Four and Improvements for OPI}
\label{ssec:algo_final_stage}

In the final stage of the algorithm, we essentially have to replace every $|y\>$ in the support of $|\algostate_3\>$ by $|B^\top y\>$, and, as in \eqref{eq:general_runtime}, we denote by $T_B$ the time necessary to do this. This can be achieved as follows. First, we map $|y\>$ to $|y\>|B^\top y\>$ using the matrix-vector multiplication. And then, using the error correction for the code $C^\perp = \ker B^\top$, we can obtain the error $y$ from the syndrome $B^\top y$ alone, thus uncomputing the error register. In particular, the procedure for obtaining $y$ from $B^\top y$, as proposed by Jordan et al.~and already described n Section~\ref{sec:introdqi}, can be decomposed into three steps:
\begin{enumerate}
\item[(a)] Using Gaussian elimination, find any $y'\in\F_p^m$ such that $B^\top y'=B^\top y$;
\item[(b)] Using the error correction, find $c\in C^\perp$ such that $|y'-c|<d^\perp/2$;
\item[(c)] Return $y'-c$ as $y$.
\end{enumerate}
Thus, we obtain the state 
\beE
\label{eq:psi4prime}
|\algostate'_4\>:=\sum_{k=0}^\ell \sum_{\substack{y\in\F_p^m\\|y|=k}}
\frac{w_k}{\sqrt{\binom{m}{k}}}\beta_y|B^\top y\>
\enE
from $|\algostate_3\>$ in time $\OO(T_B)$.
Finally, we apply the inverse quantum Fourier transform $F_p^{-1}$  to each of $n$ qudits, obtaining $|\algostate_4\>$, given in \eqref{eq:psi4}. Since $n\le m$, the complexity of this final task is $\OO(m\polylog p)$.
Accumulating the time complexities of all the stages, we can see that the total running time is as given in \eqref{eq:general_runtime}.

\paragraph{Optimal Polynomial Intersection.}

Unlike for the earlier stages, for Stage 4 of the DQI framework we do not provide general improvements over the work by Jordan et al. However, let us consider here the scenario when $C^\perp$ is the narrow-sense Reed--Solomon code, which is the relevant code for the OPI problem. For this scenario, we propose a faster way to transform $|\algostate_3\>$ into $|\algostate_4'\>$ than Jordan et al.~do.

Jordan et al.~implement Stage 4 by following the general framework described above and, within it, using the Berlekamp--Welch algorithm for the error correction. The most expensive steps of that implementation are Gaussian elimination and the Berlekamp--Welch algorithm, both running in time $\OO(p^3\polylog p)$ given random access memory.

In this paper, we first take the advantage of the fact  the multiplication of $y$ by $B^\top$ can be performed using the fast number-theoretic transform. Indeed, as stated in Section~\ref{sec:reed-solomon} and further elaborated in Appendix~\ref{app:FastNTT}, we can use NTT to map $|y\>$ to $|y\>|B^\top y\>$ using $\widetilde\OO(p)$ gates in the standard quantum circuit model.

Second, for the task of uncomputing the error $y$ given the syndrome $B^\top y$, instead of dividing this task into Steps (a), (b), (c) as in the general setting described above, we use the fast decoding based on continued fractions and the fast extended Euclidean algorithm~\cite{reed78:GCDdecoding,aho74:algoDesign}. As stated in Section~\ref{sec:reed-solomon}, this decoding can be performed in time $\OO(p\polylog p)$ using \QRAMq. The further details of this decoding algorithm are given in Appendix~\ref{app:FastDecoding}.

\section{Performance analysis}
\label{sec:performance}

As given in \eqref{eq:psi4}, the final state of the DQI algorithm is
\[
|\algostate_4\> = (F_p^{-1})^{\otimes n}\sum_{k=0}^\ell \sum_{\substack{y\in\F_p^m\\|y|=k}}
\frac{w_k}{\sqrt{\binom{m}{k}}}\beta_y|B^\top y\>,
\]
where the coefficients $\beta_y$ depend on the input sets $S_1,\ldots,S_m$. The amplitudes $w_k$, in principle, can be chosen arbitrarily when designing the algorithm, as long as $\sum_k|w_k|^2=1$, and in Section~\ref{sec:algo} we show how for a certain choice of $w_k$ we can prepare the state $|\algostate_4\>$ more efficiently than before. 

In this section, we first state how these amplitudes relate to expected number of satisfied constraints. This relation was already given by Jordan et al., and, for completeness, we present its proof in Appendix~\ref{app:expected_ratio_proof}.

Then we focus on our choice of amplitudes $w_k$, for which we can can prepare the superposition over Dicke states $\sum_{k=0}^\ell w_k|D_k^m\>\approx |\algostate_2'\>$ in nearly linear time, with $|\algostate_2'\>$ being an intermediary state on our way to preparing $|\algostate_4\>$.%
\footnote{The state $|\algostate_2'\>=\sum_{k=0}^m w'_k|D_k^m\>$ defined in \eqref{eq:psi2prime} with $w_k'$ as in \eqref{eq:wprime_k_def} can be prepared in linear time.}
We show that our choice of amplitudes $w_k$ yields asymptotically as good a result as the best possible choice of $w_k$.

\subsection{Relation between amplitudes and expected number of constraints}
\label{ssec:WeightsToExp}

For a given max-LINSAT problem, let $\overline f(x)$ be the number of constraints $b_i\cdot x\in S_i$ satisfied by an assignment $x\in\F_p^n$ of optimization variables, where $i=1,\ldots,m$. That is, $\overline f(x)=|\{i\colon b_i\cdot x\in S_i\}|$. The DQI algorithm produces the state $|\algostate_4\>$, which is a superposition over such assignments.

Let $\Phi_f$ be the observable that for any superposition over $|x\>$ yields the expected value of $\overline f$ when this superposition is measured. That is,  
\[
\Phi_f: = \sum_{x\in\F_p^n} \overline f(x)|x\>\<x|.
\]
The following theorem from \cite{jordan25:DQI-original} (see Appendix~\ref{app:expected_ratio_proof} for a proof) gives the expected number of satisfied constraints by $x$ resulting form measuring the state $|\algostate_4\>$.

\begin{thm}
\label{thm:WeightsToExp}
Let $|\algostate_4\>$ be the final state of the DQI algorithm, as in \eqref{eq:psi4}.
We have 
\[
\<\algostate_4|\Phi_f|\algostate_4\> = \frac{mr}{p} + \frac{p-2r}{p}
\sum_{k=1}^\ell |w_k^2| k
+   \frac{2\sqrt{r(p-r)}}{p}
 \sum_{k=0}^{\ell-1} \Re(w_k^* w_{k+1})
\sqrt{(k+1)(m-k)}.
\]
\end{thm}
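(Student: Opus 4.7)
The plan is to compute $\<\algostate_4|\Phi_f|\algostate_4\>$ by Fourier analysis over $\F_p$. Writing $\Phi_f=\sum_{i=1}^m P_i$ with $P_i=\sum_{x:b_i\cdot x\in S_i}|x\>\<x|$, I would use the character-sum identity $\sum_{x:b_i\cdot x=z}|x\>\<x|=p^{-1}\sum_{s\in\F_p}\omega_p^{-sz}Z_{sb_i}$, where $Z_u|x\>:=\omega_p^{u\cdot x}|x\>$, together with the decomposition $f_i(z)=r/p+\sqrt{r(p-r)/p}\,g_i(z)$, to arrive at
\[
P_i \;=\; \frac{r}{p}\,I \;+\; \frac{\sqrt{r(p-r)}}{p}\sum_{s\in\F_p}\widehat{g}_i(s)\,Z_{-sb_i}.
\]
Summed over $i$, the identity piece produces the $mr/p$ term of the theorem; the remainder of the proof evaluates the weighted sum of the $Z_{-sb_i}$ against $|\algostate_4\>$.

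For the evaluation I would push through the final inverse QFTs in $|\algostate_4\>=(F_p^{-1})^{\otimes n}|\algostate_4'\>$. A direct computation gives $F_p^{\otimes n}Z_u(F_p^{-1})^{\otimes n}=X_{-u}$, where $X_v|w\>:=|w+v\>$ is the shift operator, so $\<\algostate_4|Z_{-sb_i}|\algostate_4\>=\<\algostate_4'|X_{sb_i}|\algostate_4'\>$. Because $\ell\le d^\perp/2-1$, the map $y\mapsto B^\top y$ is injective on weight-$\le\ell$ vectors (any two such would differ by a codeword of weight below $d^\perp$), so the syndrome kets in the support of $|\algostate_4'\>$ are mutually orthogonal; using $X_{sb_i}|B^\top y\>=|B^\top(y+se_i)\>$ one then obtains
\[
\<\algostate_4'|X_{sb_i}|\algostate_4'\>=\sum_{y\in\F_p^m}c(y)^*\,c(y-se_i),
\]
where $c(y):=w_{|y|}\beta_y/\sqrt{\binom{m}{|y|}}$ when $|y|\le\ell$ and $c(y):=0$ otherwise.

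I would next factor this sum by exploiting the product structure $\beta_y=\prod_{j:y_j\ne 0}\widehat{g}_j(y_j)$. Splitting $y=(y_i,y_{-i})$ and summing over $y_{-i}$ at fixed Hamming weight $k=|y_{-i}|$ yields a factor of $\binom{m-1}{k}$ by the Parseval identity $\sum_{z\ne 0}|\widehat{g}_j(z)|^2=1$. The remaining sum over $y_i\in\F_p$ separates into two boundary terms ($y_i\in\{0,s\}$) and a bulk: using $\widehat{g}_i(s)^*=\widehat{g}_i(-s)$, the boundary contributes $2\,\Re(w_k^*w_{k+1})\widehat{g}_i(-s)/\sqrt{\binom{m}{k}\binom{m}{k+1}}$, while the bulk is evaluated via
\[
\sum_{u\in\F_p}\widehat{g}_i(u)^*\widehat{g}_i(u-s)\;=\;\sum_z g_i(z)^2\,\omega_p^{-sz}\;=\;\frac{p-2r}{\sqrt{r(p-r)}}\widehat{g}_i(-s)\qquad(s\ne 0)
\]
and equals $|w_{k+1}|^2(p-2r)\widehat{g}_i(-s)/\bigl(\binom{m}{k+1}\sqrt{r(p-r)}\bigr)$; crucially, $\widehat{g}_i(0)=0$ lets us freely re-include the omitted indices $u=0,s$ in the full Fourier sum above.

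To finish, multiplying by the outer $\widehat{g}_i(s)$ and summing over $s\ne 0$ collapses $\sum_{s\ne 0}\widehat{g}_i(s)\widehat{g}_i(-s)=\sum_{s\ne 0}|\widehat{g}_i(s)|^2=1$, which erases all $i$-dependence and turns the sum over $i$ into a mere factor of $m$. Applying the elementary binomial identities $\binom{m-1}{k}/\sqrt{\binom{m}{k}\binom{m}{k+1}}=\sqrt{(k+1)(m-k)}/m$ and $\binom{m-1}{k}/\binom{m}{k+1}=(k+1)/m$, then reindexing $k+1\mapsto k$ in the diagonal term, puts the expression in the form stated in the theorem. The main obstacle is the careful bookkeeping at the boundary components where $y_i$ or $y_i-s$ vanishes; the zero-mean property of $g_i$ encoded by $\widehat{g}_i(0)=0$ is exactly what allows the boundary and bulk pieces to combine into the closed-form expression of the theorem.
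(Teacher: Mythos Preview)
Your proof is correct and follows essentially the same route as the paper's Appendix~\ref{app:expected_ratio_proof}: both conjugate $\Phi_f$ through the QFTs to obtain shift operators, invoke the injectivity of $y\mapsto B^\top y$ on weight-$\le\ell$ vectors from the distance bound $\ell\le d^\perp/2-1$, and then split the resulting sum according to whether $y_i$ and $y_i-s$ vanish. The only organizational difference is that you separate the identity part of $P_i$ upfront via $f_i=r/p+\sqrt{r(p-r)/p}\,g_i$, which yields the $mr/p$ term immediately and lets the remaining $\widehat g_i$-weighted shifts collapse via $\sum_s|\widehat g_i(s)|^2=1$, whereas the paper keeps $\sum_{v\in S_i}\omega_p^{-av}$ explicit and evaluates it within each of its four sub-sums; the underlying computations are the same.
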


We remark that the theorem is true even when amplitudes $w_k$ are not chosen as in \eqref{eq:wk_def}, as long as the vector $w:=(w_0,w_1,\ldots,w_\ell)\in\C^{\ell+1}$ formed by these amplitudes has a unit norm. In particular, if $|\algostate_4\>$ corresponds to such a vector $w$, then Theorem~\ref{thm:WeightsToExp} states that
\[
\<\algostate_4|\Phi_f|\algostate_4\>  = \rho m+
\sqrt{\rho(1-\rho)}\, w^* A^{(m,\ell,d)}w 
\]
where $\rho:=r/p$ and $A^{(m,\ell,d)}$ is a symmetric real $(\ell+1)\times(\ell+1)$ tridiagonal matrix 
\[
A^{(m,\ell,d)}:=
\left[
\begin{array}{ccccc}
0 & a_1 & & & \\
a_1 & d & a_2 & & \\
& a_2 & 2d & \ddots & \\
& & \ddots & & a_\ell \\
& & & a_\ell & \ell d \\
\end{array}
\right]
\]
with $a_k:=\sqrt{k(m-k-1)}$ and $d:=(1-2\rho)\big/\sqrt{\rho(1-\rho)}$. Hence, the choice of $w$ that maximizes the expected number of satisfied constraints $\<\algostate_4|\Phi_f|\algostate_4\>$ is an eigenvector of $A^{(m,\ell,d)}$ corresponding to its largest eigenvalue.%
\footnote{Note that the largest eigenvalue of $A^{(m,\ell,d)}$ equals $\|A^{(m,\ell,d)}\|$ if and only if $d\ge 0$.}

Let $\lambda:=\ell/m\le 1/2$. Because of our definition
$\ell=\min\{\lfloor d^\perp/2\rfloor-1, m(1-\rho)\}$, we are guaranteed that $\lambda+\rho\le 1$, and Jordan et al.~\cite[Lemma 6.3]{jordan25:DQI-original} show that, given this assumption,%
\footnote{They assume $d\ge-(1-2\lambda)/\sqrt{\lambda(1-\lambda)}$, which is equivalent to $\lambda+\rho\le 1$.} 
 the largest eigenvalue of $A^{(m,\ell,d)}$ is at most $2\sqrt{m}+\ell d+2\sqrt{\ell(m-\ell)}$, and thus one can see that the expected fraction of satisfied constraints can be upper bounded as 
\begin{align*}
\frac{\<\algostate_4|\Phi_f|\algostate_4\>}{m}
& \le 
\big(\sqrt{\lambda(1-\rho)}+\sqrt{\rho(1-\lambda)}\big)^2
+ \frac{1}{2\sqrt{m}}.
\end{align*}
Asymptotically, as $m$ goes to infinity and $\ell$ grows linearly with $m$, this upper bound goes to $(\sqrt{\lambda(1-\rho)}+\sqrt{\rho(1-\lambda)})^2$. 

We next show that $w$ with the amplitudes $w_k$ corresponding our construction---given in \eqref{eq:wk_def}---asymptotically matches the upper bound. In particular, we show that assuming that $4m^{1/2+c}\le\ell\le m/2$ and that there exists a constant $\lambda_\star$ such that $\ell\ge \lambda_\star m$, the expected fraction of satisfied constraints by our algorithm is
\begin{align*}
\frac{\<\algostate_4|\Phi_f|\algostate_4\>}{m}
 \ge 
\big(\sqrt{\lambda(1-\rho)}+\sqrt{\rho(1-\lambda)}\big)^2
- \frac{6+2/\sqrt{\lambda_\star}}{m^{1/2-c}}
- \frac{4}{\ee^{m^{2c}/2}}.
\end{align*}
Note that, by assuming $4m^{1/2+c}\le\ell\le m/2$, we implicitly assume $m\ge 64$.
We prove this inequality in the next section.

\subsection{Optimality of our amplitudes}

In our construction, $w_k\ge 0$ for all $k$, and thus $\Re(w_k^*w_{k-1})=w_kw_{k-1}$ and $|w_k^2|=w_k^2$ for all $k$.
For convenience, let us denote $\Delta:=\ell-\nmean m=m^{1/2+c}$. In Appendix~\ref{app:binomial} we show that almost all the probability mass of the probability distribution over $k$ with the probability mass function $w_k^2$ is concentrated between $\ell-2\Delta$ and $\ell$, with its mode and mean both being approximately $\ell-\Delta$.

Let's see how diagonal-adjacent elements $a_1$, $a_2$, $a_3$, $\ldots$ and diagonal elements $d$, $2d$, $3d$, $\ldots$ of $A^{(m,\ell,d)}$ contribute to $w^* A^{(m,\ell,d)}w$.

\paragraph{Off-diagonal contribution.}

We lower bound $\sum_{k=1}^{\ell} a_k w_k w_{k-1}$ as follows.

First, because $\ell< m/2$, we have $a_0<a_1<\ldots < a_\ell$  and, in turn, we have
\[
\sum_{k=1}^{\ell} a_k w_k w_{k-1}
 \ge \sum_{k=1}^{\ell} a_k \min\{w_k^2, w_{k-1}^2\}   
 \ge a_{\ell-2\Delta} \sum_{k=\ell-2\Delta}^{\ell} \min\{w_k^2, w_{k-1}^2\}.
\]
In Claim~\ref{clm:mode_k} of Appendix~\ref{ssec:mode} we show that $\kappa:=\lceil \nmean(m+1)\rceil - 1$ is the mode of the probability distribution $(w_k^2)_k$ and that we have $w_{k-1}^2 \le w_{k}^2$ for $k\le \kappa$ and $w_{k-1}^2 \ge w_{k}^2$ for $k > \kappa$. Hence,
\[
\sum_{k=\ell-2\Delta}^{\ell} \min\{w_k^2, w_{k-1}^2\}
=
\sum_{k=\ell-2\Delta}^{\kappa} w_{k-1}^2
+ \sum_{k=\kappa+1}^{\ell} w_k^2
= \sum_{k=\ell-2\Delta-1}^{\ell} w_{k}^2 - w_\kappa^2.
\]
We lower bound $\sum_{k=\ell-2\Delta-1}^{\ell} w_{k}^2$ in Claim~\ref{clm:tail_bound} using the Chernoff bound and we
 upper bound the probability weight of mode in $w_\kappa^2$ in Claim~\ref{clm:mode_mass} using Stirling's formula.

\bigskip

Recall from \eqref{eq:wprime_k_def}--\eqref{eq:wk_def} that
$(w'_k)^2 = \nmean^k(1-\nmean)^{m-k}\binom{m}{k}$
for $k=0,1,\ldots,m$, that
$\epsilon = \sum_{k=\ell+1}^m ( w'_k)^2$,
and
$w_k^2= (w_k')^2/(1-\epsilon)$ for  $k=0, 1,\ldots, \ell$.
For convenience, let us define $\epsilon':= \ee^{-m^{2c}/2}$.
Claim~\ref{clm:tail_bound} states that both $\epsilon\le \epsilon'$ and $\sum_{k=0}^{\ell-2\Delta-1} ( w'_k)^2 \le \epsilon'$.
Therefore, first, we get that
\beE
\label{eq:most_w_mass}
\sum_{k=\ell-2\Delta-1}^{\ell} w_{k}^2
\ge \sum_{k=\ell-2\Delta-1}^{\ell} (w'_{k})^2
= 1-\sum_{k=0}^{\ell-2\Delta-2} ( w'_k)^2-\sum_{k=\ell+1}^m ( w'_k)^2
\ge 1-2\epsilon'.
\enE

Regarding the probability mass of the mode $\kappa$, 
Claim~\ref{clm:mode_mass} says that (assuming $m\ge 7$), we have $(w'_\kappa)^2\le 3/\sqrt{m\nmean(1-\nmean)}$. 
Note that, because $m^{2c}\ge 1$, we have 
$\epsilon'\le 1/\sqrt{\ee}$, and therefore $1/(1-\epsilon')\le 1+3\epsilon'$ and, in turn, $w_\kappa^2\le 3(1+3\epsilon')/\sqrt{m\nmean(1-\nmean)}$.

\bigskip

Now we need to lower bound $a_{\ell-2\Delta}$.
Let us recall that $\lambda=\ell/m$, and 
thus we have
\begin{multline*}
a_{\ell-2\Delta}
 = \sqrt{(\ell-2\Delta)(m-\ell+2\Delta-1)}
 \ge \sqrt{(\ell-2\Delta)(m-\ell)}
\\
 = m\sqrt{\lambda(1-\lambda)} \cdot \sqrt{1-\frac2{\lambda m^{1/2-c}}}.
\ge m\sqrt{\lambda(1-\lambda)} \cdot \left(1-\frac2{\lambda m^{1/2-c}}\right).
\end{multline*}
We have thus shown that 
\begin{align*}
\sum_{k=1}^{\ell} \frac{a_k w_k w_{k-1}}{m}
& \ge
\sqrt{\lambda(1-\lambda)}
\left(1-\frac2{\lambda m^{1/2-c}}-2\epsilon'-\frac{3(1+3\epsilon')}{\sqrt{m\nmean(1-\nmean)}}\right).
\end{align*}
Recall that $\nmean=\lambda-1/m^{1/2-c}$, from which we get that
\[
\frac{\sqrt{\lambda(1-\lambda)}}{\sqrt{\nmean(1-\nmean)}} 
\le
\frac{\sqrt{\lambda}}{\sqrt{\nmean}} 
= \frac1{\sqrt{1-1/(\lambda m^{1/2-c})}}
\le \frac1{1-1/(\lambda m^{1/2-c})}.
\]
The assumption $\ell\ge 4m^{1/2+c}$ is equivalent to $\lambda m^{1/2-c} \ge 4$, and therefore we have
$\sqrt{\lambda(1-\lambda)}/\sqrt{\nmean(1-\nmean)} \le 4/3$.
We also have $\sqrt{\lambda(1-\lambda)}\le 1/2$ and $\sqrt{1-\lambda}\le 1$, therefore
\begin{align}
\notag
\sum_{k=1}^{\ell} \frac{a_k w_k w_{k-1}}{m}
& \ge
\sqrt{\lambda(1-\lambda)}
-\frac2{\sqrt{\lambda} m^{1/2-c}}-\epsilon'-\frac{4(1+3\epsilon')}{\sqrt{m}}
\\ \label{eq:off-diag-bound}& \ge
\sqrt{\lambda(1-\lambda)}
-\frac{4+2/\sqrt{\lambda_\star}}{ m^{1/2-c}}-3\epsilon',
\end{align}
where for the last inequality we have used the implicit assumption that $m\ge 64$ and the assumption that $\lambda \ge \lambda_\star$.

\paragraph{Diagonal contribution.}

Now let us address the diagonal of $A^{(m,\ell,d)}$. 
We want to lower bound $d \sum_{k=0}^\ell k w_k^2$.
When $d\le 0$, we require the upper bound
\[
\sum_{k=0}^\ell kw_k^2
\le \ell \sum_{k=0}^\ell w_k^2 
= \ell,
\]
and, when $d\ge 0$, we require the lower bound 
\[
\sum_{k=0}^\ell kw_k^2
\ge (\ell-2\Delta)\sum_{k=\ell-2\Delta}^\ell w_k^2
\ge \ell\sum_{k=\ell-2\Delta}^\ell w_k^2
-2\Delta \sum_{k=0}^\ell w_k^2
\ge
\ell-m\epsilon'-2m^{1/2+c},
\]
where we have used inequalities \eqref{eq:most_w_mass} and $2\ell\le m$ and the definition $\Delta=m^{1/2+c}$.
Regardless whether $d$ is positive or negative, we have
\beE
\label{eq:diag-bound}
d\sum_{k=0}^\ell \frac{k w_k^2}{m}
\ge d\lambda - |d|(\epsilon'+2m^{-1/2+c})
\ge \frac{(1-2\rho)\lambda - \epsilon'-2m^{-1/2+c}}{\sqrt{\rho(1-\rho)}},
\enE
where for the last inequality we used that $d=(1-2\rho)/\sqrt{\rho(1-\rho)}$ and, thus, $|d|\le 1/\sqrt{\rho(1-\rho)}$.

\paragraph{Combining the contributions.}

By combining the contributions from the diagonal and the diagonal-adjacent entries of $A^{(m,\ell,d)}$---Eqns.~\eqref{eq:diag-bound} and \eqref{eq:off-diag-bound}, respectively---we get
\begin{align*}
\frac{\<\algostate_4|\Phi_f|\algostate_4\>}{m} 
&= \rho +
\sqrt{\rho(1-\rho)}\frac{w^* A^{(m,\ell,d)}w}{m}
\\ & \ge 
\rho+
(1-2\rho)\lambda - \epsilon'-\frac{2}{m^{1/2-c}}
\\ & \qquad 
+2
\sqrt{\rho(1-\rho)}\left(
\sqrt{\lambda(1-\lambda)}
-\frac{4+2/\sqrt{\lambda_\star}}{m^{1/2-c}}
- 3\epsilon'
\right)
\\ & \ge 
\big(\sqrt{\lambda(1-\rho)}+\sqrt{\rho(1-\lambda)}\big)^2
- \frac{6+2/\sqrt{\lambda_\star}}{m^{1/2-c}}
- 4\epsilon',
\end{align*}
where we have used $2\sqrt{\rho(1-\rho)}\le 1$ for the last inequality.

\section*{Acknowledgements}

I thank Asmae Benhemou, Harry Buhrman, Simon Burton, Ben Criger, Niklas Galke, Alexandre Krajenbrink, and Matthias Rosenkranz for valuable discussions and feedback.

\newpage

\addcontentsline{toc}{section}{References}
{
\small
%
\newcommand{\etalchar}[1]{$^{#1}$}

}

\appendix

\section[Implementation of Gi]{Implementation of $G_i$}
\label{app:Gorver}

Here we present an efficient implementation of the procedure $G_i$ that maps $|\widehat 0\>$ to $|G_i\>$. First, in Appendix~\ref{app:Gorver_angles}, we prove Claim~\ref{clm:exact_Grover}, which shows how $G_i$ can be performed by alternating between parametrized rotations $D^\phi$ and $\Xi_i^\psi$. In Appendix~\ref{sec:Grover_Single}, we briefly describe the parameter regime relevant to the OPI problem. Finally, in Appendix~\ref{app:efficient_reflections}, we provide efficient implementations of $D^\phi$ and $\Xi_i^\psi$.

\subsection{Proof of Claim~\ref{clm:exact_Grover}}
\label{app:Gorver_angles}

Recall that $\rho=r/p$, which we assume to be at most $1/2$, and recall from \eqref{eq:Grover_angles} the angles $\theta_\rho$, $\beta_\rho$, $\phi_\rho$, $\psi_\rho$ and the iteration count $\tau_\rho$, all
parametrized by $\rho$. Also recall the rotations $D^\phi$ and $\Xi_i^\psi$ from \eqref{eq:diffusion_phi} and \eqref{eq:oracle_psi}, respectively. In this section we prove the following.

\bigskip

\noindent
{\bf{}Claim~\ref{clm:exact_Grover}} (restated){\bf{}.}
\emph{
We have
$
\ee^{\ii(\pi-\psi_\rho)}\Xi_i^{\pi+2\psi_\rho} D^{\phi_\rho} (\Xi_i^\pi D^\pi)^{\tau_\rho} |\widehat 0\>= |G_i\>.
$
}

\bigskip

In the proof, for simplicity, let us drop the subscript $\rho$ from the angles $\theta$, $\beta$, $\phi$, $\psi$. We will show how the values of these angles, given in \eqref{eq:Grover_angles}, emerge when we are attempting to map $|\widehat 0\>$ to $|G_i\>$. The same way, we drop the subscript $\rho$ from the iteration count $\tau$.

Let us define unit vectors
\[
|\Yes_i\>:=
\frac1{\sqrt{r}}\sum_{z\in S_i}|z\>
\qqAnd
|\No_i\>:=
\frac1{\sqrt{p-r}} \sum_{z\in \F_p\setminus S_i}|z\>.
\]
Then $\theta=\arcsin\sqrt\rho$ of \eqref{eq:Grover_angles} is simply the angle between $|\widehat 0\>$ and $|\No_i\>$ and, equivalently, the angle between $|G_i\>$ and $|\Yes_i\>$. We have
\begin{equation}
\label{eq:0andSi}
\begin{cases}
|\widehat 0\> = \sin\theta|\Yes_i\>+\cos\theta|\No_i\>, \\
|G_i\> = \cos\theta|\Yes_i\>-\sin\theta|\No_i\>,
\end{cases}
\qqAnd
\begin{cases}
|\Yes_i\> = \sin\theta|\widehat 0\>+\cos\theta|G_i\>, \\
|\No_i\> = \cos\theta|\widehat 0\>-\sin\theta|G_i\>,
\end{cases}
\end{equation}
therefore $\{|\widehat 0\>,|G_i\>\}$ and $\{|\Yes_i\>,|\No_i\>\}$ are two orthonormal basis of the same two-dimensional subspace of $\C^p$, which we denote by $\cH_i$. Effectively, we will only have to consider states in this subspace.

The reflections $D^\pi$ and $\Xi_i^\pi$ keep the space $\cH_i$ invariant, and their restrictions to $\cH_i$ are, respectively,
\[
D^\pi |_{\cH_i} = |\widehat 0\>\<\widehat 0| - |G_i\>\<G_i|
\qqAnd
\Xi_i^\pi |_{\cH_i} = |\No_i\>\<\No_i| - |\Yes_i\>\<\Yes_i|.
\]
Their product in the basis  $\{|\widehat 0\>,|G_i\>\}$ is 
\begin{align*}
\Xi_i^\pi D^\pi |_{\cH_i}
&= \big( |\No_i\>\<\No_i| - |\Yes_i\>\<\Yes_i|\big)
\big(|\widehat 0\>\<\widehat 0| - |G_i\>\<G_i|\big)
\\ & =
\cos 2\theta \big( |\widehat 0\>\<\widehat 0| + |G_i\>\<G_i|\big)
+ \sin 2\theta \big( |\widehat 0\>\<G_i| - |G_i\>\<\widehat 0|\big),
\end{align*}
where the second equality comes from basis switching according to \eqref{eq:0andSi}.
We can see that, in the basis  $\{|\widehat 0\>,|G_i\>\}$, this operator performs a rotation by angle $2\theta$. Thus, its $t$-th power is
\begin{align*}
(\Xi_i^\pi D^\pi)^t |_{\cH_i}
= 
\cos 2t\theta \big( |\widehat 0\>\<\widehat 0| + |G_i\>\<G_i|\big)
+ \sin 2t\theta \big( |\widehat 0\>\<G_i| - |G_i\>\<\widehat 0|\big),
\end{align*}
and therefore
\begin{align*}
|\zeta_t\> := 
(\Xi_i^\pi D^\pi)^t |\widehat 0\>
= 
\cos 2t\theta  |\widehat 0\> - \sin 2t\theta  |G_i\>.
\end{align*}

In Claim~\ref{clm:exact_Grover}, we effectively choose $t$, the number of Grover's iterations, to be $\tau=\lfloor \pi/(4\theta)\rfloor$, the largest integer value such that $2\tau\theta\le \pi/2$. This value is chosen so that Grover's iterations rotate the state from $|\widehat 0\>$ towards $|G_i\>$ as close as possible without overshooting it.
Then $\beta=\pi/2-2\tau\theta\in [0,2\theta)$, as given in \eqref{eq:Grover_angles}, is the remaining angle by which we still need to rotate the state $|\zeta_\tau\>$ towards $|G_i\>$.
 Indeed, we have
\beE
 \label{eq:penultimateGrover}
|\zeta_\tau\>
=\cos(2 \tau \theta)|\widehat 0\> - \sin(2\tau\theta)|G_i\>
=\sin\beta|\widehat 0\>-\cos\beta|G_i\>.
\enE

To perform this remaining ``real rotation'' by angle $\beta$, we once again employ a pair of ``complex rotations'' $D^\phi$ and $\Xi_i^{\pi+2\psi}$, except that now the rotation phases $\phi$ and $\pi+2\psi$ are not $\pi$.
See Figure~\ref{fig:Bloch_sphere} for an illustration.

\def\rB{3} 
\def\rP{0.02} 

\tikzset{
  Bloch2D/.pic={
  \draw (0,0) circle (1);
  \coordinate (P1) at (0, -1);
  \coordinate (P2) at (-0.96,-0.28);
  \coordinate (P3) at (0.96,-0.28);
  \coordinate (P4) at (-0.5376,0.8432);
  \coordinate (P5) at (0.209066666,0.8432);
  \coordinate (P6) at (0,1);
  \coordinate (A1) at (-0.6, -0.8);
  \coordinate (A2) at (0.6, 0.8);
 \draw[red,dashed,thick] (P1) -- (P2); 
 \draw[blue,dashed,thick] (P2) -- (P3);
 \draw[red,dashed,thick] (P3) -- (P4); 
 \draw[blue,dashed,thick] (P4) -- (P5);
 \draw[red,dashed,thick] (P5) -- (P6); 
 \draw[red,opacity = 0.15] (P1) -- (P2); 
 \draw[blue,opacity = 0.15] (P2) -- (P3);
 \draw[red,opacity = 0.15] (P3) -- (P4); 
 \draw[blue,opacity = 0.15] (P4) -- (0.5376,0.8432);
 \draw[red,opacity = 0.15] (0.96,0.28) -- (P6); 
\draw[blue, thick] (P1) -- (P6);
\draw[red, thick] (A1) -- (A2);
\filldraw[blue] (P1) circle (\rP);
\filldraw[blue] (P6) circle (\rP);
\filldraw[red]  (A1) circle (\rP);
\filldraw[red]  (A2) circle (\rP);
}}

\tikzset{
  BlueEllipse/.pic={
  \draw[blue, opacity = 0.15] (0,0) ellipse (1 and 0.479426);
}}
\tikzset{
  RedEllipse/.pic={
  \draw[red, opacity = 0.15, rotate = -40.5179] (0,0) ellipse (1 and 0.38354);
}}
\tikzset{
  BlueArcFull/.pic={
  \draw[blue,dashed,thick] (0,0) arc[start angle=-180, end angle=0, x radius=1, y radius=0.479426];   
}}
\tikzset{
  BlueArcPart/.pic={
  \draw[blue,dashed,thick] (0,0) arc[start angle=-180, end angle=-67.1146, x radius=1, y radius=0.479426];   
}}
\tikzset{
  RedArcFull/.pic={
  \draw[red,thick,dashed, rotate = -40.5179] (0,0) arc[start angle=-161.852, end angle=18.1482, x radius=1, y radius=0.38354];   
}}
\tikzset{ 
  RedArcPart/.pic={
  \draw[red,thick,dashed, rotate = -40.5179] (0,0) arc[start angle=-161.852, end angle=-106.216, x radius=1, y radius=0.38354];   
}}
\tikzset{
  Bloch3D/.pic={
  \draw (0,0) circle (1);
  \coordinate (P1) at (0., -0.877583);
  \coordinate (P2) at (-0.96, -0.245723);
  \coordinate (P3) at (0.96, -0.245723);
  \coordinate (P4) at (-0.5376, 0.739978);
  \coordinate (P5) at (0.209067, 0.502526);
  \coordinate (P6) at (0., 0.877583);
  \coordinate (A1) at (-0.6, -0.702066);
  \coordinate (A2) at (0.6, 0.702066);
  \coordinate (C1) at (-0.48, -0.561653);
  \coordinate (C2) at (0., -0.245723);
  \coordinate (C3) at (0.2112, 0.247127);
  \coordinate (C4) at (0., 0.739978);
  \coordinate (C5) at (0.48, 0.561653);

\draw[blue, thick] (P1) -- (P6);
\draw[red, thick] (A1) -- (A2);

\pic[scale=0.6*\rB] at (C1) {RedEllipse};
\pic[scale=0.96*\rB] at (C2) {BlueEllipse};
\pic[scale=0.936*\rB] at (C3) {RedEllipse};
\pic[scale=0.5376*\rB] at (C4) {BlueEllipse};
\pic[scale=0.6*\rB] at (C5) {RedEllipse};

\pic[scale=0.6*\rB] at (P2) {RedArcFull};
\pic[scale=0.96*\rB] at (P2) {BlueArcFull};
\pic[scale=0.936*\rB] at (P4) {RedArcFull};
\pic[scale=0.5376*\rB] at (P4) {BlueArcPart};
\pic[scale=0.6*\rB] at (P6) {RedArcPart};

\filldraw[blue] (P1) circle (\rP);
\filldraw[blue] (P6) circle (\rP);
\filldraw[red]  (A1) circle (\rP);
\filldraw[red]  (A2) circle (\rP);
}}

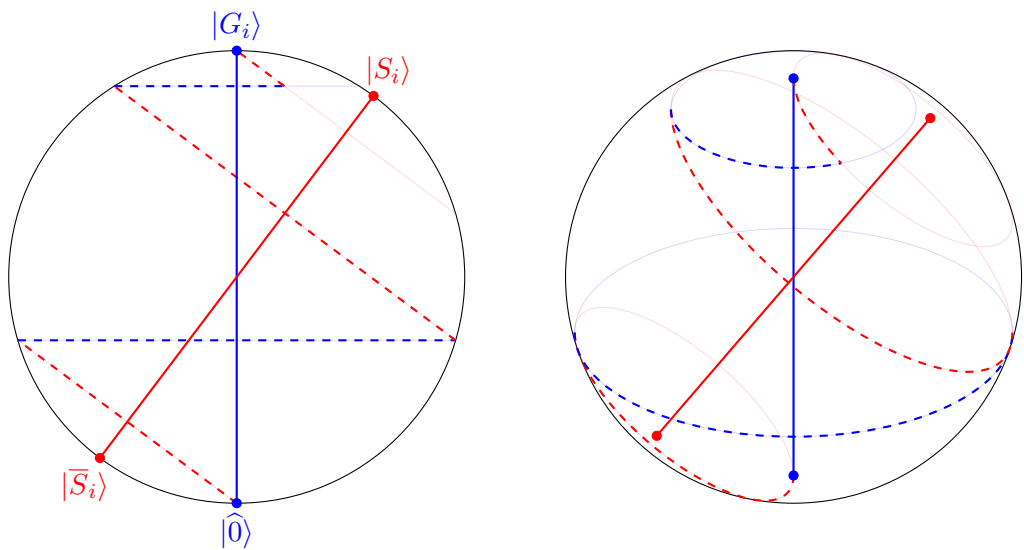
\begin{figure}[!h]
\centering
\begin{tikzpicture}
\pic[scale=\rB] at (0,0) {Bloch2D};
\node[anchor = south,blue] at (0,\rB) {$|G_i\>$};
\node[anchor = north,blue] at (0,-\rB) {$|\widehat 0\>$};
\node[anchor = south, red] at (0.6*\rB+0.2,0.8*\rB) {$|\Yes_i\>$};
\node[anchor = north, red] at (-0.6*\rB-0.2,-0.8*\rB) {$|\No_i\>$};
\filldraw[white, opacity=0.2] (0,3.7) circle (2pt);
\filldraw[white, opacity=0.2] (0,-3.7) circle (2pt);
\end{tikzpicture}
\hspace{1cm}
\begin{tikzpicture}
\pic[scale=\rB] at (0,0) {Bloch3D};
\filldraw[white, opacity=0.2] (0,3.7) circle (2pt);
\filldraw[white, opacity=0.2] (0,-3.7) circle (2pt);
\end{tikzpicture}
\caption{
\footnotesize
The Bloch sphere corresponding to basis
 $\{|\widehat 0\>,|G_i\>\}$ from two angles.
The blue axis is between $|\widehat 0\>$ and $|G_i\>$, and the red axis is between $|\Yes_i\> = \sin\theta|\widehat 0\>+\cos\theta|G_i\>$ and $|\No_i\> = \cos\theta|\widehat 0\>-\sin\theta|G_i\>$. 
The transformation of the initial state  $|\widehat 0\>$ and into the final state $|G_i\>$ follows the dashed lines.
}
\label{fig:Bloch_sphere}
\end{figure}

First, note that
\[
D^\phi |_{\cH_i} = |\widehat 0\>\<\widehat 0|+\ee^{\ii\phi} |G_i\>\<G_i|,
\]
and therefore we have
\begin{align*}
D^\phi |\zeta_t\>
&=
\sin\beta|\widehat 0\> - \ee^{\ii\phi} \cos\beta |G_i\>
\\ &=
\big(\sin\beta \sin\theta - \ee^{\ii\phi}\cos\beta \cos\theta\big) |\Yes_i\>
+ \big(\sin\beta \cos\theta + \ee^{\ii\phi}\cos\beta \sin\theta\big)|\No_i\>
\end{align*}
The phase $\phi$ is chosen so that \emph{absolute value} of the fraction of the amplitudes of $|\Yes_i\>$ and $|\No_i\>$ is $\cot\theta$. As we will see, this is achieved by choosing
$
\phi = \arccos\left(-\frac{\tan\beta}{\tan2\theta}\right)\in [\pi/2, \pi), 
$
for which
\[
\ee^{\ii\phi} = \frac{-\cos2\theta \sin\beta+\ii\sqrt{\sin^22\theta -\sin^2\beta}}{\sin2\theta \cos\beta}.
\]
Inserting this expression for $\ee^{\ii\phi}$ in the above expression for $D^\phi |\zeta_t\>$, after some straight forward yet tedious derivations, we get
\begin{align*}
D^\phi |\zeta_t\>
&=
\frac{\sin\beta-\ii\sqrt{\sin^22\theta -\sin^2\beta}}{\sin2\theta}\cdot\cos\theta |\Yes_i\>
+\frac{\sin\beta+\ii\sqrt{\sin^22\theta -\sin^2\beta}}{\sin2\theta}\cdot\sin\theta |\No_i\>
\\ & = 
\ee^{-\ii\psi}\cos\theta |\Yes_i\>
+\ee^{\ii\psi}\sin\theta |\No_i\>,
\end{align*}
where 
$\psi = \arccos \frac{\sin\beta}{\sin2\theta} \in (0,\pi/2]$.
Note that these choices of $\phi$ and $\psi$ are the same as in \eqref{eq:Grover_angles}.

Now recall $\Xi_i^{\psi}$ from \eqref{eq:diffusion_phi}. The restriction of $\Xi_i^{\pi+2\psi}$ to the subspace $\cH_i$ is 
\[
\Xi_i^{\pi+2\psi} |_{\cH_i} = 
|\No_i\>\<\No_i| - \ee^{\ii2\psi} |\Yes_i\>\<\Yes_i|.
\]
Therefore, we can see that
\[
\Xi_i^{\pi+2\psi}D^\phi |\zeta_t\> =
-\ee^{\ii\psi}\cos\theta |\Yes_i\>
+\ee^{\ii\psi}\sin\theta |\No_i\>
= \ee^{\ii(\pi+\psi)} |G_i\>,
\]
which concludes the proof.

 \subsection{Balanced case}
 \label{sec:Grover_Single}
 
Let us here consider $r=(p-1)/2$, which is the parameter regime relevant for the OPI problem. We have $\rho = 1/2 - 1/(2p)$, and, since $\rho > 1/2 -1/(2\sqrt 2) = \sin^2(\pi/8)$, we have  $\theta_\rho=\arcsin\sqrt\rho\in(\pi/8,\pi/4)$. Hence, in turn, $\tau_\rho=\lfloor \pi/(4\theta_\rho)\rfloor=1$ and $\beta_\rho=\pi/2-2\theta_\rho$.

We can see that
\[
\sin\theta_\rho=\sqrt{1/2 - 1/(2p)},
\qquad
\cos\theta_\rho=\sqrt{1/2 + 1/(2p)},
\]
and, because $\beta_\rho=\pi/2-2\theta_\rho$, we also have
\[
\sin\beta_\rho=\cos 2\theta_\rho = 1/p,
\qquad
\cos\beta_\rho=\sin 2\theta_\rho = \sqrt{p^2-1}/p.
\]
From these, in turn, we get
\[
\phi_\rho = \arccos\frac{-1}{p^2-1},
\qquad
\psi_\rho = \arccos\frac{1}{\sqrt{p^2-1}}.
\]
These are the values of $\phi_\rho$ and $\psi_\rho$ given for the OPI problem in Section~\ref{sec:Exact_Grover_main} to intuitively justify the approximation $|\widetilde G_i\>=-\Xi_i^\pi |\widehat 0\>$ for $|G_i\>$. From now on, however, we can ignore them.

Recall that, in the parameter regime relevant for OPI, we have $m=p-1$ and $\ell=\lfloor p/20\rfloor$. 
Therefore, $q = 1/20 + \OO(p^{-1/3})$.
We want to show that the states
\[
|\algostate_3'\> = \bigotimes_{i=1}^{p-1}\left(\sqrt{1-\nmean}|0\>|\widehat 0\>+\sqrt{q}|1\>|G_i\>\right),
\quad
|\widetilde\algostate_3'\> = \bigotimes_{i=1}^{p-1}\left(\sqrt{1-\nmean}|0\>|\widehat 0\>+\sqrt{q}|1\>|\widetilde{G}_i\>\right)
\]
are almost equal, in particular, that $\||\algostate_3'\>-|\widetilde\algostate_3'\>\|=\OO(1/\sqrt{p})$.

First, let us examine the inner product of $|G_i\>$ and $|\widetilde{G}_i\>$. From \eqref{eq:0andSi} and the expression for $\Xi_i^\pi|_{\cH_i}$, we have $ |G_i\> = \cos\theta_\rho|\Yes_i\>-\sin\theta_\rho|\No_i\>$ and
\[
|\widetilde G_i\> =
-\Xi_i^\pi |\widehat 0\>
 =  - \sin\theta_\rho \Xi_i^\pi |\Yes_i\> -\cos\theta_\rho \Xi_i^\pi |\No_i\>
 = \sin\theta_\rho|\Yes_i\>-\cos\theta_\rho|\No_i\>.
\]
Hence
\[
\<G_i|\widetilde G_i\> = 2\sin\theta_\rho\cos\theta_\rho = 
\sin 2\theta_\rho = \sqrt{1-1/p^2}.
\]
Note that this value is real, and thus so is
\begin{multline*}
\<\algostate_3'|\widetilde\algostate_3'\>
= \prod_{i=1}^{p-1}
\left(1-\nmean+\nmean \<G_i|\widetilde G_i\> \right)
 \\ =
\left(1-\nmean+\nmean \sqrt{1 - 1/p^2}\right)^{p-1}
\ge
1-\nmean p\left(1-\sqrt{1 - 1/p^2}\right).
\end{multline*}
Therefore, we have
\[
\big\||\algostate_3'\>-|\widetilde\algostate_3'\>\big\|^2
= 2-2 \<\algostate_3'|\widetilde\algostate_3'\>
\le 2 qp\left(1-\sqrt{1 - 1/p^2}\right) .
\]
Recall that $q = 1/20 + \OO(p^{-1/3})$, and we can also see that
$1-\sqrt{1 - 1/p^2}=1/(2p^2)+o(1/p^2)$. Hence, we have
$\||\algostate_3'\>-|\widetilde\algostate_3'\>\|^2 \le 1/(20 p) + o(1/p)$,
or, more loosely, $\||\algostate_3'\>-|\widetilde\algostate_3'\>\| = \OO(1/\sqrt{p})$.

 \subsection{Efficient implementations of complex rotations}
 \label{app:efficient_reflections}

In this section we present efficient implementations of controlled-$D^\phi$ and controlled-$\Xi_i^\psi$.

\paragraph{Controlled-$D^\phi$.}

Recall that $D^\phi = |\widehat 0\>\<\widehat 0|+\ee^{\ii\phi} (I_p-|\widehat 0\>\<\widehat 0|)$, which acts on a $p$-dimensional qudit. When we talk about implementing $D^\phi$ (and, later, controlled-$D^\phi$), we have to discuss how exactly each qudit gets represented in a binary memory. The most natural approach, which is also what we consider here, is to use $\lceil \log_2 p\rceil$ qubits for that purpose.

For integers $z\ge 0$ and $\ell > \log_2 z$, let $z_\ell$ be the binary representation of $z$ using $\ell$ bits, and let $|z_\ell\>$ be the corresponding $\ell$-qubit state. Let $\ell^\star:=\lceil \log_2 p\rceil$, for short. When implementing $D^\phi$, we are particularly interested in the state $|\widehat 0\>$, which is represented in the memory as
$|\widehat 0\> = \frac{1}{\sqrt{p}} \sum_{z=0}^{p-1}|z_{\ell^\star}\!\>$.

Let $U$ be any unitary acting on $\ell^\star$ qubits that maps $|0_{\ell^\star}\!\>=|0\>^{\otimes \ell^\star}$ to $|\widehat 0\>$ and let
\[
\mathsf{Diag}^\phi:= |0_{\ell^\star}\>\<0_{\ell^\star}|+\ee^{\ii\phi} (I_p-|0_{\ell^\star}\>\<0_{\ell^\star}|).
\]
Then we can express $D^\phi$ as $D^\phi = U \mathsf{Diag}^\phi U^*$. It is not hard to see that $\mathsf{Diag}^\phi$ can be implemented using $\OO(\log p)$ ancilla qubits and $\OO(\log p)$ gates, and it is left to see how to implement $U$.

\bigskip

Consider the following recursive expression relating uniform superpositions.  Let $q\ge 1$ and $\ell\ge \log_2 q$ be positive integers. Note that $\ell$ bits suffice to express in binary any integer $z$ such that $0\le z\le q-1$. We have
\[
\frac{1}{\sqrt{q}}\sum_{z=0}^{q-1}|z_{\ell}\> =
\begin{cases}
|0\>
\frac{1}{\sqrt{q}}\sum_{z=0}^{q-1}|z_{\ell-1}\>
& \text{if }q\le 2^{\ell-1},
\\
\sqrt\frac{2^{\ell-1}}{q}|0\>|+\>^{\otimes (\ell-1)}
+ \sqrt{1-\frac{2^{\ell-1}}{q}}|1\>
\frac{1}{\sqrt{q'}}\sum_{z=0}^{q'-1}|z_{\ell-1}\>
& \text{if }q\ge 2^{\ell-1},
\end{cases}
\]
where $q':= q-2^{\ell - 1}\le 2^{\ell - 1}$, for short. Having this recursive expression at our disposal, it is then not hard to see how to map $|0_{\ell^\star}\!\>$ to $|\widehat 0\>=\frac{1}{\sqrt{p}} \sum_{z=0}^{p-1}|z_{\ell^\star}\!\>$ using $\OO(\log^2 p)$ gates and $\OO(\log p)$ ancilla bits. This implements $U$, and $U^*$ can be simply implemented by reversing the circuit for $U$.

To implement controlled-$D^\phi$, we it suffices to implement controlled-$U$ and controlled-$\mathsf{Diag}^\phi$. This can be simply achieved by controlling every single- and two-qubit gate in implementations of $U$ and $\mathsf{Diag}^\phi$.

\paragraph{Controlled-$\Xi_i^\psi$.}

Recall that, for the max-LINSAT problem, the input is a collection of sets $S_1,\ldots,S_m\subset\F_p$ and a collection of vectors $b_1,\ldots,b_m\in\F_p^n$. We assume that this input is encoded in binary as follows. First, for every $i\in\{1,\ldots,m\}$, the set $S_i$ is encoded as a $p$-bit string $(f_i(0),f_i(1),\ldots,f_i(p-1))$ where, as in \eqref{eq:g_i_def}, $f_i(z)=1$ if and only if $z\in S_i$. And then, for every $i\in\{1,\ldots,m\}$, the vector $b_i$ is encoded component-by-component as an $n\lceil\log_2p\rceil$-bit string. Hence, in total, the input is an $N:=mp+mn\lceil\log_2p\rceil$-bit string $(k_0,k_1, \ldots, k_{N-1})$.

As described in Section~\ref{sec:Comp_model}, we assume that we have a read-only \QRAMc access to the input via the gate
$R_{\mathrm C}$ acting on the random access and the input registers as
\[
R_{\mathrm C}\colon |a,b\>|k_0,k_1,\ldots,k_{N-1}\> \mapsto |a,b\oplus k_a\>|k_0,k_1,\ldots,k_{N-1}\>
\]
where $a \in \{0,1,\ldots,N-1\}$ is an address encoded in $\lceil\log_2 N\rceil$ bits and $b \in \{0,1\}$ is an interface bit.
For the purposes of implementing $\Xi_i^\psi$, we will only need to access input bits $(k_{(i-1)p},k_{(i-1)p+1},k_{ip-1})=(f_i(0),f_i(1),\ldots,f_i(p-1))$ of the input, which encode the set $S_i$.

\bigskip

Let us introduce three more registers: the \emph{control qubit}, which will control whether to apply $\Xi_i^\psi$, the \emph{set number} register, corresponding to $i\in\{1,\ldots,m\}$, and the \emph{set element} register, corresponding to $x\in\F_p$. We will implement controlled-$\Xi_i^\psi$ on the control and the set element registers given that the set number register is in state $|i\>$.

Suppose we have an ``address translator'' unitary $T$ that acts on the set number, the set element, and the address registers as
\[
T\colon |i,x,0\> \mapsto |i,x, (i-1)p+x \>.
\]
where $i\in\{1,\ldots,m\}$ and $x\in\{0,1,\ldots,p-1\}$.
The translator $T$ performs simple arithmetic, and can be implemented using $\OO(\polylog(p+m))$ gates.%
\footnote{We could also consider that the input is organized in a memory in such a way that an address translator is not necessary at all. This would be especially reasonable when reducing the OPI problem to max-LINSAT, because for the OPI problem the input consists of the sets $S_1,\ldots,S_m$ alone.}
We have
\[
T^* R_{\mathrm C} T \colon |i,x,0,b\>|k_0,k_1,\ldots,k_{N-1}\> \mapsto |i,x,0,b\oplus f_i(x)\>|k_0,k_1,\ldots,k_{N-1}\>
\]
where the four registers corresponding to the state $|i,x,0,b\>$ are, respectively,  the set number, the set element, the address, and the interface registers.

Let $M_i$ be the restriction of $T^* R_{\mathrm C} T$ to the subspace where the set number, the address, and the input registers are, respectively, in states $|i\>$, $|0\>$, and $|k_0,k_1,\ldots,k_{N-1}\>$.
The unitary $M_i$ is essentially what is known as the membership oracle in search algorithms, it acts on the set element and the interface registers, and it can be expressed as
\[
M_i
:= \sum_{z\in\F_p}{} |z\>\<z|\otimes \big(|f_{i}(z)\>\<0| + |\overline{f_{i}(z)}\>\<1|\big) 
= \sum_{z\in\F_p}{} |z\>\<z|\otimes \big(|+\>\<+| + (-1)^{f_{i}(z)}|-\>\<-|\big).
\]
Having $M_i$ at our disposal, the implementation of controlled-$\Xi_i^\psi$ and, its special case, controlled-$\Xi_i^\pi$ are given in Figure~\ref{fig:circuits_for_Xi}.

\def\ywA{0.13} 
\def\ywB{0.48} 
\def\xwA{2.6} 
\def\xwB{2.25} 

\tikzset{
  wires/.pic={
    \draw (-\xwA,\ywA+\ywB)--(\xwA,\ywA+\ywB);
    \foreach \i in {-1,...,1}      \draw (-\xwA,\ywA*\i)--(\xwA,\ywA*\i);
    \draw (-\xwB,-\ywA-\ywB)--(\xwB,-\ywA-\ywB);
}}

\def\ypA{0.2} 
\def\ypB{0.3} 
\def\xgw{0.35} 
\def\xgc{1.1} 

\def\rCXctrl{0.05} 

\tikzset{
  ctrlGate/.pic={
\draw [fill = black] (0,\ywA+\ywB) circle (\rCXctrl);
\draw (0,\ywA+\ywB)--(0,-\ywA-\ywB+\ypB);
\draw [fill=white] (-\xgw,-\ywA-\ywB-\ypB) rectangle (\xgw,-\ywA-\ywB+\ypB);
}}

\tikzset{
  MiGate/.pic={
    \draw [fill=white] (-\xgw,-\ywA-\ywB-\ypB) rectangle (\xgw,\ywA+\ypA);
    \node at (0,-0.5*\ywB-0.5*\ypB+0.5*\ypA) {$M_i$};
}}

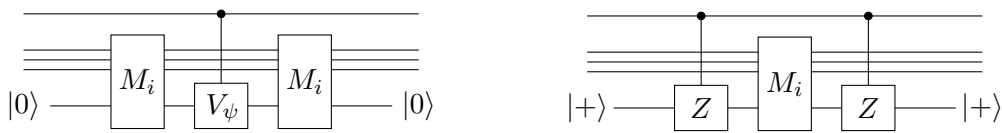
\begin{figure}[!h]
\centering
\centering
\begin{tikzpicture}
\pic at (0,0) {wires};
\pic at (-\xgc,0) {MiGate};
\pic at (\xgc,0) {MiGate};
\pic at (0,0) {ctrlGate};
\node at (0.02,-\ywA-\ywB-0.03) {$V_\psi$};
\node at (-\xwA,-\ywA-\ywB) {$|0\>$};
\node at (\xwA,-\ywA-\ywB) {$|0\>$};
\end{tikzpicture}
$\hspace{1.2cm}$
\begin{tikzpicture}
\pic at (0,0) {wires};
\pic at (0,0) {MiGate};
\pic at (-\xgc,0) {ctrlGate};
\node at (-\xgc,-\ywA-\ywB) {$Z$};
\pic at (\xgc,0) {ctrlGate};
\node at (\xgc,-\ywA-\ywB) {$Z$};
\node at (-\xwA,-\ywA-\ywB) {$|+\>$};
\node at (\xwA,-\ywA-\ywB) {$|+\>$};
\end{tikzpicture}
\caption{
\footnotesize
Implementation of controlled-$\Xi_i^\psi$ (left) and controlled-$\Xi_i^\pi$ (right), requiring, respectively, two and a single call to \QRAMc. 
Here $V_\psi:=|0\>\<0|+\ee^{\ii\psi}|1\>\<1|$ is a Pauli-$Z$ rotation and $Z=V_\pi$ is the Pauli-$Z$ gate.
The top wire represents the control qubit, the center wires represent the set element register, and the bottom wire is the interface register. Both circuits return the interface register to its initial state.
}
\label{fig:circuits_for_Xi}
\end{figure}

Note that, when implementing controlled-$\Xi_i^\psi$, we use \QRAMc to access only $p$ out of $N$ input bits. Thus, if we wanted to simulate this \QRAMc access to the input in the standard model, as described in Section~\ref{sec:Comp_model}, simulating each access would require only $\OO(p)$ gates, not $\OO(N)$.

\section{Fast implementation of the number-theoretic transform}
\label{app:FastNTT}

In Section~\ref{sec:NTT}, we introduced the number-theoretic transform $\NTT_{p,\gamma}$ for a primitive element $\gamma \in \F_p$. In order to see that this transform can be computed in time $\OO(p \polylog p)$, it is helpful to generalize the definition and consider $\NTT_{p,\beta}$ for an arbitrary nonzero element $\beta \in \F_p^*$.

Given $\beta \in \F_p^*$, let $o_p(\beta)$ denote the multiplicative order of $\beta$ in $\F_p^*$, that is, the smallest positive integer $n$ such that $\beta^n = 1$.  The number-theoretic transform $\NTT_{p,\beta}$ is the linear map on $\F_p^{o_p(\beta)}$ that maps  $x=(x_0,x_1,\ldots,x_{o_p(\beta)-1})$ to $X=(X_0,X_1,\ldots,X_{o_p(\beta)-1})$ where $X_j=\sum_{i=0}^{o_p(\beta)-1}\beta^{ij}x_i$ for all $j$. 
The inverse map $\NTT_{p,\beta}^{-1}$ is given by $x_i=(o_p(\beta))^{-1}\sum_{j=0}^{o_p(\beta)-1}\beta^{-ij}X_j$ for all $i$, where $o_p(\beta)^{-1}$ is the multiplicative inverse of $o_p(\beta)$ in $\F_p$.

\bigskip

In this section, we briefly sketch the main ideas behind fast algorithms that compute $\NTT_{p,\beta}$ using $\OO(o_p(\beta) \polylog p)$ logical gates. 
The choice of algorithm depends on $o_p(\beta)$: when $o_p(\beta)$ is a composite number, the Cooley--Tukey algorithm can be used; when $o_p(\beta)$ is prime, Rader's algorithm applies.

As a special case, when $\beta = \gamma$ is a primitive element of $\F_p$, we have $o_p(\gamma) = p - 1$, so the runtime becomes $\OO(p \polylog p)$, as claimed. The same complexity applies to computing the inverse transform $\NTT_{p,\beta}^{-1}$.

\paragraph{The Cooley--Tukey algorithm.}

The Cooley--Tukey algorithm provides a recursive method for computing $\NTT_{p,\beta}$ when the order $o_p(\beta)$ is a composite integer. This method was originally discovered by Gauss and later independently rediscovered by Cooley and Tukey~\cite{cooley65:dft,heideman84:gauss}.

Suppose $o_p(\beta) = n'n''$ for integers $n', n'' \ge 2$, and define
Define $\beta':=\beta^{n''}$ and $\beta'':=\beta^{n'}$.
Then $\beta'$ and $\beta''$ have orders $o_p(\beta') = n'$ and $o_p(\beta'') = n''$, respectively. 
The Cooley--Tukey algorithm recursively computes $\NTT_{p,\beta}$ by performing the following steps:
\begin{enumerate}
  \item Compute $n'$ independent instances of $\NTT_{p,\beta''}$.
  \item Multiply the resulting values by powers of $\beta$ (known as twiddle factors).
  \item Compute $n''$ independent instances of $\NTT_{p,\beta'}$.
\end{enumerate}
Assuming the existence of algorithms for computing $\NTT_{p,\beta'}$ and $\NTT_{p,\beta''}$ in time $\OO(o_p(\beta') \, \polylog p)$ and $\OO(o_p(\beta'') \, \polylog p)$, respectively, this recursive approach yields an overall runtime of $\OO(o_p(\beta) \, \polylog p)$ for computing $\NTT_{p,\beta}$.

\paragraph{Rader's algorithm.} 
 
When $q:=o_p(\beta)$ is a prime, Rader's algorithm reduces the computation of $\NTT_{p,\beta}$ to a computation of a (cyclic) convolution over $\F_p$ of two sequences of length linear in $q$. The reduction proceeds as follows.

The goal is to compute $X_j = \sum_{i\in\F_q} \beta^{ij}x_i$ for all $j\in\F_q$. We begin by separately computing $X_0 = \sum_{i\in\F_q} x_i$, which can be done in time $\OO(q\log p)$. The remaining task is to jointly compute
\begin{equation}
\label{eq:Xj_Rader1}
X_j = x_0 + \sum_{i\in\F_q^*}x_i \beta^{ij}\qquad\text{for all}\qquad j\in\F_q^*.
\end{equation}

To proceed, we choose a primitive element $\zeta$ of $\F_q$, so that $\{\zeta^0,\zeta^1,\ldots,\zeta^{q-2}\}=\F_q^*$.
We can reindex  (\ref{eq:Xj_Rader1}) by letting $j = \zeta^{-k}$ and $i = \zeta^{k - \ell}$, therefore computing (\ref{eq:Xj_Rader1}) is equivalent to computing
\[
X_{\zeta^{-k}}
= x_0+\sum_{\ell=0}^{q-2} x_{\zeta^{k-\ell}} \beta^{\zeta^{-\ell}}
\qquad\text{for all}\qquad k\in\{0,1,\ldots,q-2\}.
\]
The above sum over $\ell$ is a convolution of two sequences of length $q-1$.

To simplify the computation of this convolution, particularly in cases where $q - 1$ has large prime divisors, we instead embed it into a convolution of two longer sequences whose lengths are powers of $2$.
Let $n:=2^{\lceil \log_2(2q-3)\rceil}$ to be the smallest power of $2$ that is at least $2q-3$, and
define two sequences $a = (a_0,a_1, \ldots, a_{n-1})$ and $b = (b_0,b_1, \ldots, b_{n-1})$ over $\F_p$ as
\begin{equation}
\label{eq:ab_for_convolution}
\begin{split}
&a := (x_{\zeta^0},x_{\zeta^1},\ldots,x_{\zeta^{q-2}},0,\ldots,0,x_{\zeta^1},\ldots,x_{g^{\zeta-2}}),
\\&b :=(\beta^{\zeta^{-0}},\beta^{\zeta^{-1}},\ldots,\beta^{\zeta^{-(q-2)}},0,\ldots,0,0,\ldots,0),
\end{split}
\end{equation}
where the zero-padding ensures that both sequences have length $n$.
We can observe that
\[
 X_{\zeta^{-k}} 
= x_0+ \sum_{\ell=0}^{n-1} a_{k-\ell} b_\ell 
\qquad\text{for all}\quad
k=0,1,\ldots,q-2
\]
where $k-\ell$ is taken modulo $n$.
Thereby, if the convolution $a * b$ over $\F_p$ can be evaluated in time $\OO(q\polylog p)$, then so can $\NTT_{p,\beta}$.

\bigskip

For the sequences $a, b \in \F_p^n$ from (\ref{eq:ab_for_convolution}), let $a_\C, b_\C \in \C^n$ denote the corresponding sequences where each entry is interpreted as an integer in $\{0, 1, \ldots, p-1\} \subset \C$, rather than as an element of the finite field $\F_p$. Let $a_\C * b_\C$ denote their (cyclic) convolution over $\C$, each entry of which is an integer in the range $\{0, 1, \ldots, n(p - 1)^2\}$. Note that
$a * b = a_\C * b_\C \mod p$,
where the modulus is applied entrywise and interpreted as mapping integers to elements of $\F_p$.
Therefore, to compute $a * b$, it suffices to approximate $a_\C * b_\C$ to within an additive error of at most $1/4$ in each entry and then round each value to the nearest integer. 
The convolution $a_\C * b_\C$ can be computed efficiently using the discrete Fourier transform over $\C$ in conjunction with the convolution theorem~(\ref{eq:convolution_theorem}).

\paragraph{Discrete Fourier Transform (DFT) over $\C$.}

Recall that, for a positive integer $n$, $\omega_n=\ee^{2\pi\mathsf{i}/n}$ denotes the $n$-th root of unity. Given a sequence $x=(x_0,x_1,\ldots,x_{n-1})\in\C^n$, the discrete Fourier transform of $x$ is the sequence 
$\DFT_n(x):=X=(X_0,X_1,\ldots,X_{n-1})\in\C^n$ such that $X_j=\sum_{i=0}^{n-1}\omega_n^{-ij}x_i$ for all $j$. The inverse tranform is given by $x_i=\frac{1}{n}\sum_{j=0}^{n-1}\omega_n^{ij}X_j$ for all $i$.

The convolution theorem states that, for any two sequences  $x,y\in\C^n$,
\begin{equation}
\label{eq:convolution_theorem}
x*y=\DFT_n^{-1}(\DFT_n(x)\circ\DFT_n(y)),
\end{equation}
where $\circ$ denotes the entrywise product of two sequences. 
For fast computation of $\DFT_n$ and its inverse, one can again use the Cooley--Tukey algorithm, now applied over $\C$. 
This algorithm is particularly well-suited to the case when when $n$ is a power of 2, and  both $\DFT_n$ and $\DFT_n^{-1}$ can be evaluated using $\OO(n \log n)$ arithmetic operations.

In practice, one can approximately carry out these arithmetic operations using fixed-point arithmetic. Consequently, the number of logical gates required to implement them depends on the bit-length of the input values and the desired level of precision.

It can be shown that, to achieve sufficient accuracy when evaluating $a_\C * b_\C$ via the convolution theorem, it suffices to represent all intermediate values using $\OO(\log p)$ bits. Consequently, each approximate arithmetic operation involved in computing $\DFT_n$ and $\DFT_n^{-1}$ can be implemented using $\OO(\polylog p)$ logical gates. This yields an overall gate complexity of $\OO(q \, \polylog p)$ for computing $a_\C * b_\C$, and thus also for computing $a * b$ over $\F_p$.

\section{Fast decoding of Reed--Solomon codes}
\label{app:FastDecoding}

The goal of this appendix is to guide a reader through the main steps and ideas of the nearly linear-time decoding algorithm of the narrow sense Reed--Solomon codes. The techniques described here are either directly taken of adapted from~\cite{aho74:algoDesign,reed78:GCDdecoding}.

\bigskip

Let us recall the goal of a decoder. For an unknown error $y=(y_0,y_1,\ldots,y_{p-2})$ of Hamming weight at most $t$, we are given the syndrome $s=(s_0,s_1,\ldots,s_{2t-1})$, where
\beE
\label{eq:synd_def}
s_{j}:=\sum_{i=0}^{p-2}\gamma^{ij}y_{i},
\qquad \forall j\in\Z.
\enE
While $s_j$ for $j=0,1,\ldots,2t-1$ are of particular interest to us, we define $s_j$ for all integers $j$. Note that, because $\gamma^{p-1}=1$, we have $s_{j+p-1}=s_j$ for all $j\in\Z$. The goal is to find $y$ in time $\OO(p\polylog p)$.

Here we show how to achieve an equivalent goal: to find all $s_0,s_1,\ldots,s_{p-2}$. These two goals are equivalent, because in time $\OO(p\polylog p)$ one can apply the inverse number-theoretic transform $\NTT_{p,\gamma}^{-1}$ to $(s_0,s_1,\ldots,s_{p-2})$, which yields $y$.

\subsection{Field of formal series}

Let \( \F_p(\!(x)\!) \) denote the field of \emph{formal Laurent series}, consisting of expressions of the form
$\fa = \sum_{i=0}^{\infty} a_{d-i} x^{d-i}$,
where each coefficient $a_{d-i} \in \F_p$ and where the integer $d = \deg \fa $ is called the \emph{degree} of $\fa$ and satisfies $a_d \ne 0$.%
\footnote{Strictly speaking, the standard notation for this field is \( \F_p(\!(x^{-1})\!) \), but we adopt \( \F_p(\!(x)\!) \) for convenience.}
For brevity, we refer to elements of \( \F_p(\!(x)\!) \) simply as \emph{formal series}.
Let $\lfloor\fa\rceil$ denote $\sum_{i=0}^{d} a_{d-i}x^{d-i}$ if $d\ge 0$ and let $\lfloor\fa\rceil=0$ if $d<0$. Essentially, $\lfloor\fa\rceil$ is the polynomial part of $\fa$.
Similarly, for an integer $k$, define $\polyr\fa_{k}:=x^k\polyr{x^{-k}\fa}$. This, informally speaking, is the rounding of $\fa$ to its most significant terms, down to until the term of degree $x^k$. For any $\fa$, the statements $\polyr{\fa}_k=0$ and $\deg\fa<k$ are equivalent.
 We denote the zero-element of the field simply by $0$, whose degree we define as $\deg0:=-\infty$.

Given a non-zero element $\fa \in \F_p(\!(x)\!)$ of degree $d$, we can obtain the coefficients of its multiplicative inverse  $\fb := \fa^{-1}$ recursively as follows.
Since $\fb$ must satisfy $\fa\fb = 1$, it has the form $\fb=\sum_{j=0}^\infty b_{-d-j}x^{-d-j}$, where the coefficients $b_{-d-j}$ are given recursively as
\[
b_{-d}=a_d^{-1}
\qqAnd
b_{-d-j}  = -a_d^{-1}\sum_{i=1}^{j}a_{d-i}b_{-d-j+i} 
 \quad \text{for }j\ge 1.
\]
Observe that, for any $k \ge 0$, the $k+1$ most significant coefficients of $\fa^{-1}$ depend only on the $k+1$ most significant coefficients of $\fa$.
More formally, we have
\begin{align}
\label{eq:inverse_leading_terms}
\polyr{\fa^{-1}}_{-\deg\fa-k}=\polyr{\polyr{\fa}^{-1}_{\deg\fa-k}}_{-\deg\fa-k}.
\end{align}
Similarly, the $k+1$ most significant coefficients of the product $\fa \fa'$ of two non-zero formal series $\fa, \fa' \in \F_p(\!(x)\!)$ depend only on the $k+1$ most significant coefficients of each operand. Combined with~\eqref{eq:inverse_leading_terms}, this implies that the same holds for their division $\fa / \fa' := \fa (\fa')^{-1}$.

\paragraph{Fast multiplication and division of polynomials.}

Polynomials over $\F_p$ can be multiplied and divided efficiently, as the following theorem demonstrates. However, note that the division of two polynomials $\fa$ and $\fa'$ is a formal series in $\F_p(\!(x)\!)$ and may, in general, have infinitely many non-zero coefficients. For this reason, we compute a ``rounded division''.

\begin{thm}
\label{thm:fast_mult_and_div}
Let $\fa = \sum_{i=0}^{d} a_i x^i$ and $\fa' = \sum_{i=0}^{d'} a'_i x^i$ be polynomials of degrees $d$ and $d'$, respectively, represented by their coefficient lists $a_0, a_1, \ldots, a_d$ and $a'_0, a'_1, \ldots, a'_{d'}$. Suppose we are also given a non-negative integer $k$, and assume that $d$, $d'$, and $k$ are all in $\OO(p)$.
Then the coefficients $b_0, b_1, \ldots, b_{d+d'}$ of the product
$\fb:=\fa\fa'=\sum_{i=0}^{d+d'}b_ix^i$
and the coefficients $c_{-k}, c_{-k+1}, \ldots, c_{d-d'}$ of the rounded division
$\fc := \polyr{\fa/\fa'}_{-k} = \sum_{i=-k}^{d-d'} c_i x^i$
can both be computed in time $\OO(p \polylog p)$.
\end{thm}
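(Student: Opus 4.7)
I would establish the two claims in sequence, both ultimately reducing to a fast integer convolution.

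For the product $\fb = \fa\fa'$, the coefficient list of $\fb$ is the linear convolution of the coefficient lists of $\fa$ and $\fa'$. I would lift the $\F_p$-coefficients to integers in $\{0,1,\ldots,p-1\}$, so that the corresponding \emph{integer} convolution has entries bounded by $\min(d,d')(p-1)^2 = \OO(p^3)$. Padding both sequences to the nearest power of two $n = 2^{\lceil \log_2(d+d'+1)\rceil} = \OO(p)$ and applying the complex-FFT-based convolution exactly as at the end of Appendix~\ref{app:FastNTT}, it suffices to carry $\OO(\log p)$ bits of fixed-point precision through the $\OO(n \log n)$ arithmetic operations inside $\DFT_n$ and $\DFT_n^{-1}$ for the rounded outputs to equal the true integer coefficients. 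Each such arithmetic operation uses $\OO(\polylog p)$ logical gates, and a final entrywise reduction modulo $p$ adds a further $\OO(p\polylog p)$ gates overall, giving the total bound $\OO(p \polylog p)$ for the first claim.

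For the rounded division $\fc = \polyr{\fa/\fa'}_{-k}$, set $g := (\fa')^{-1} \in \F_p(\!(x)\!)$ and $K := k+d-d'+1 = \OO(p)$. A short degree calculation shows that any $\fr \in \F_p(\!(x)\!)$ agreeing with $g$ in its top $K$ leading coefficients satisfies $\polyr{\fa\fr}_{-k} = \polyr{\fa g}_{-k} = \fc$. I would produce such an $\fr$ by Newton's iteration: set $\fr_0 := (a'_{d'})^{-1} x^{-d'}$ and
\[
\fr_{j+1} := \polyr{\fr_j (2 - \fa' \fr_j)}_{-d'-2^{j+1}+1}.
\]
Ignoring truncation, the identity $1 - \fa' \fr_{j+1} = (1 - \fa' \fr_j)^2$ holds exactly, which combined with \eqref{eq:inverse_leading_terms} and its multiplicative analogue yields the invariant $\deg(g - \fr_j) \le -d' - 2^j$. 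Choosing $J := \lceil \log_2 K\rceil = \OO(\log p)$ then guarantees that $\fr_J$ agrees with $g$ in its top $K$ coefficients. Each Newton step performs $\OO(1)$ polynomial multiplications on operands of size $\OO(p)$ and therefore, by the multiplication claim just proved, costs $\OO(p\polylog p)$ gates; summing over the $\OO(\log p)$ iterations and absorbing the extra $\log p$ factor into $\polylog p$ gives total Newton cost $\OO(p\polylog p)$. A concluding multiplication of $\fa$ by $\fr_J$ followed by truncation at $x^{-k}$ produces $\fc$ in a further $\OO(p\polylog p)$ gates.

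The main subtlety I expect is the precision bookkeeping in the Newton phase: one must truncate each $\fr_j$ aggressively enough to keep the operand sizes $\OO(p)$ throughout (one cannot afford to accumulate arbitrarily many negative-degree terms from iteration to iteration), yet preserve enough leading terms that the truncation error does not contaminate the top $2^{j+1}$ coefficients of $\fr_{j+1}$. The technical workhorse is \eqref{eq:inverse_leading_terms} together with the analogous fact that the leading $N+1$ coefficients of a product $\fa\fa'$ depend only on the leading $N+1$ coefficients of each factor; these together make the invariant $\deg(g-\fr_j) \le -d'-2^j$ robust to truncation at every step, so that the quadratic convergence of Newton's method survives intact.
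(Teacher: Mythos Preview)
Your proposal is correct and, for the division part, essentially identical to the paper's: the reference \cite[Algorithm~8.3]{aho74:algoDesign} that the paper cites \emph{is} Newton's iteration for the reciprocal, and you have spelled out exactly that algorithm together with the standard truncation bookkeeping.

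For the multiplication part there is a genuine but minor difference in route. The paper works natively over $\F_p$: when $d+d'\le p-2$ it embeds the linear convolution in a length-$(p-1)$ cyclic convolution and evaluates it as $\NTT_{p,\gamma}^{-1}\bigl(\NTT_{p,\gamma}(a)\circ\NTT_{p,\gamma}(a')\bigr)$; for larger degrees it splits $\fa,\fa'$ into $\OO(1)$ blocks of degree $\le(p-3)/2$ and reduces to the small case. Your approach instead lifts coefficients to integers and computes an integer convolution via complex FFT with $\OO(\log p)$ bits of precision---this is precisely the machinery the paper deploys \emph{inside} Rader's algorithm in Appendix~\ref{app:FastNTT}, but not the method it uses for Theorem~\ref{thm:fast_mult_and_div} itself. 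Both achieve the same $\OO(p\polylog p)$ bound. The paper's route has the aesthetic advantage of staying in $\F_p$ throughout and needing no precision analysis; your route avoids the block-decomposition step and reuses already-justified infrastructure. Either is a perfectly good proof.
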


To see that the multiplication can indeed be done efficiently, as claimed by Theorem~\ref{thm:fast_mult_and_div}, consider first the special case when $d+d'\le p-2$. Extend the coefficient lists by setting $a_i = 0$ for $d < i \le p - 2$ and $a'_i = 0$ for $d' < i \le p - 2$, so that $a = (a_0, a_1, \ldots, a_{p-2})$ and $a' = (a'_0, a'_1, \ldots, a'_{p-2})$ are both vectors of length $p-1$. 
 We can then observe that the product $\fb = \fa \fa'$ has coefficient vector $b = (b_0, b_1, \ldots, b_{p-2})$, where each entry is given by the cyclic convolution $b_j = \sum_{i=0}^{p-2} a_{j - i} a'_i$ for all $j = 0, 1, \ldots, p - 2$ where $j-i$ is taken modulo $p=1$. The convolution theorem, analogously to (\ref{eq:convolution_theorem}), tells that
\[
b=a*a'=\NTT_{p,\gamma}^{-1}(\NTT_{p,\gamma}(a)\circ\NTT_{p,\gamma}(a')),
\]
which we can compute using the fast number-theoretic transform described in Section~\ref{sec:NTT} and Appendix~\ref{app:FastNTT}.

When $d + d' > p - 2$ but still $d, d' \in \OO(p)$, we can reduce the multiplication to a constant number of smaller multiplications by partitioning the input polynomials into blocks of degree at most $(p-3)/2$. Specifically, write
\[
\fa = \sum_{\ell=0}^{\lfloor 2d / (p-1) \rfloor} \fa^{(\ell)} x^{\ell(p-1)/2}
\qquad \text{and} \qquad
\fa' = \sum_{\ell'=0}^{\lfloor 2d' / (p-1) \rfloor} \fa^{\prime(\ell')} x^{\ell'(p-1)/2},
\]
where each $\fa^{(\ell)}$ and $\fa^{\prime(\ell')}$ is a polynomial of degree at most $(p - 3)/2$. Then, the product $\fb = \fa \fa'$ can be computed as
\[
\fb = \sum_{\ell, \ell'} \fa^{(\ell)} \fa^{\prime(\ell')}  x^{(\ell + \ell')(p-1)/2},
\]
which involves a constant number of multiplications of polynomials of degree at most $(p - 3)/2$. Each of these smaller products falls into the efficient case handled above. Therefore, the full product $\fa \fa'$ can still be computed in time $\OO(p \polylog p)$.

\bigskip

Regarding the efficient division, first observe that the degree of $\fa/\fa'$ is $d - d'$. So, as per the discussion after (\ref{eq:inverse_leading_terms}), to compute $\polyr{\fa/\fa'}_{-k}$, we only care about the $d-d'+1+k$ most significant coefficients of $\fa'$. More precisely, we have
$\polyr{\fa/\fa'}_{-k} = \polyr{\fa \polyr{(\fa')^{-1}}_{2d'-d-k}}_{-k}$.
Hence, it suffices to compute the rounded reciprocal $\polyr{(\fa')^{-1}}_{2d'-d-k}$.
An efficient method for doing that, which uses fast polynomial multiplication as a subroutine, is described in \cite[Algorithm~8.3]{aho74:algoDesign}.

\subsection{Fast extended Euclidean algorithm}
\label{sec:fast_Euclidean}

Building on fast $\OO(p \polylog p)$-time algorithms for polynomial multiplication and division, we can obtain a fast implementation of the extended Euclidean algorithm, also running in $\OO(p \polylog p)$ time. We implicitly specify the precise goal of the extended Euclidean algorithm below in Theorem~\ref{thm:fast_Euclidean}.

Let $\fR_0 \in \F_p[x]$ be a polynomial of degree $2t \le p - 1$ and $\fR_1 \in \F_p[x]$ a polynomial of degree strictly less than $2t$. Consider a finite sequence of polynomials 
\beE
\label{eq:R_q_sequence}
\fR_0,\fR_1,\fq_1,\fR_2,\fq_2,\fR_3,
\ldots,\fq_{j-2},\fR_{j-1},\fq_{j-1},\fR_{j}
\enE
recursively defined as
\begin{alignat*}{2}
& \fq_i := \polyr{\fR_{i-1}/\fR_i},
&\qquad& \fR_{i+1} := \fR_{i-1}-\fq_i\fR_i
\end{alignat*}
where $j$ is the unique index such that
\[
\deg\fR_{j-1}\ge t > \deg\fR_{j}.
\]
We refer to the polynomials $\fR_i$ as \emph{remainders} and the $\fq_i$ as \emph{quotients}. The sequence terminates when the degree of the remainder becomes less than $t$.
Given the quotients $\fq_1,\fq_2,
\ldots,\fq_{j-2},\fq_{j-1}$, recursively define polynomials 
\beE
\label{eq:polyPL_defs}
\begin{alignedat}{3}
    & \fP_{0} := 1, &\qquad& \fP_{1} := 0, &\qquad& 
    \fP_{i+1} := \fP_{i-1}-\fq_{i}\fP_{i}, \\
    & \fL_{0} := 0, && \fL_{1} := 1, &&
        \fL_{i+1} := \fL_{i-1}-\fq_{i}\fL_{i},
\end{alignedat}
\enE
for $i=1,2,\ldots,j-1$.

\begin{thm}
\label{thm:fast_Euclidean}
Given $\fR_0$ and $\fR_1$, one can compute $\fP_j$ and $\fL_j$ in time $\OO(p\polylog p)$.
\end{thm}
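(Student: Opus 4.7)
The plan is to implement the classical half-GCD (fast extended Euclidean) strategy of Knuth, Schönhage, and Moenck, tuned so that the recursion halts precisely at the index $j$ where $\deg\fR_j$ first falls below $t$. The very first step is bookkeeping: I would repackage the scalar recurrences for $\fR_i,\fP_i,\fL_i$ into a single $2\times 2$ matrix product. Setting
\[
Q_i := \begin{pmatrix} -\fq_i & 1 \\ 1 & 0 \end{pmatrix} \begin{pmatrix} -\fq_{i-1} & 1 \\ 1 & 0 \end{pmatrix} \cdots \begin{pmatrix} -\fq_1 & 1 \\ 1 & 0 \end{pmatrix},
\]
one checks directly that $Q_i \begin{pmatrix} \fR_1 \\ \fR_0 \end{pmatrix} = \begin{pmatrix} \fR_{i+1} \\ \fR_i \end{pmatrix}$ and, using the initial conditions in \eqref{eq:polyPL_defs}, that the columns of $Q_i$ applied to $(1,0)^\top$ and $(0,1)^\top$ yield $(\fL_{i+1},\fL_i)^\top$ and $(\fP_{i+1},\fP_i)^\top$ respectively. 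Consequently the four entries of $Q_{j-1}$ are exactly $\fL_j,\fP_j,\fL_{j-1},\fP_{j-1}$, so the whole task reduces to computing one $2\times 2$ polynomial matrix.

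The engine driving the algorithm is the following truncation-stability observation: for any $h\ge 0$, the initial quotients $\fq_1,\fq_2,\ldots$ produced until the first index $k$ at which $\deg\fR_{k+1}<\deg\fR_0-2h$ depend only on the top $2h+1$ coefficients of $\fR_0$ and $\fR_1$. This follows inductively from \eqref{eq:inverse_leading_terms} and its analogue for products: each $\fq_i=\polyr{\fR_{i-1}/\fR_i}$ has degree $\deg\fR_{i-1}-\deg\fR_i$, and the high-degree part of $\fR_{i+1}=\fR_{i-1}-\fq_i\fR_i$ depends only on the corresponding high-degree parts of $\fR_{i-1}$ and $\fR_i$.

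Armed with this, I would build a recursive subroutine $\mathsf{HGCD}(\fA_0,\fA_1,h)$ that, given $\deg\fA_0\ge\deg\fA_1$ and a half-degree parameter $h$, returns the matrix $Q_k$ (and the pair $\fA_k,\fA_{k+1}$) for the largest $k$ with $\deg\fA_k\ge\deg\fA_0-h$. The body of $\mathsf{HGCD}$ performs the standard divide-and-conquer: (i) truncate $\fA_0,\fA_1$ to their top $\OO(h)$ coefficients and recurse with parameter $\lceil h/2\rceil$ to obtain a matrix $M_1$ advancing the sequence by roughly $h/2$ degrees; (ii) apply $M_1$ to the untruncated pair using Theorem~\ref{thm:fast_mult_and_div} to obtain a new intermediate pair; (iii) perform at most one explicit Euclidean step to cross the midpoint; (iv) recurse on the new pair with parameter $\lceil h/2\rceil$ to obtain $M_2$; and (v) return the matrix product $M_2M_1$ (adjusted by the middle step), again using Theorem~\ref{thm:fast_mult_and_div}. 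The top-level call $\mathsf{HGCD}(\fR_0,\fR_1,t)$ then directly yields $Q_{j-1}$, since $\deg\fR_0=2t$ and the stopping condition $\deg\fR_k\ge\deg\fR_0-t=t$ selects exactly $k=j-1$.

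For the runtime, each recursion level performs $\OO(1)$ polynomial multiplications, divisions, and $2\times 2$ matrix products on polynomials of degree $\OO(h)$, each costing $\OO(h\polylog p)$ by Theorem~\ref{thm:fast_mult_and_div}. The recursion depth is $\OO(\log h)=\OO(\log p)$, so the total cost is $\OO(h\polylog p\cdot\log h)=\OO(p\polylog p)$ for $h=t=\OO(p)$. The main obstacle in making this fully rigorous is not the asymptotic estimate itself but the careful treatment of \emph{abnormal} quotient sequences, where a single $\fq_i$ has unusually large degree and therefore a single Euclidean step drops $\deg\fR$ by much more than expected. Such a step can cause the truncated recursive call to overshoot or undershoot the intended midpoint $k$, and one has to either detect this and fall back to an explicit single Euclidean step, or define the stopping index with an off-by-one tolerance; both remedies are standard and are spelled out in~\cite[Chapter~8]{aho74:algoDesign}, so the remaining work is to verify that these adjustments still respect the $2T(h/2)+\OO(M(h))$ recurrence.
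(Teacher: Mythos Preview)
Your proposal is correct and rests on the same algorithm as the paper---the fast half-GCD of~\cite[Chapter~8]{aho74:algoDesign}---so the approaches are essentially identical. The only noteworthy difference is presentational: the paper invokes \cite[Theorems~8.17 and~8.18]{aho74:algoDesign} as a black box, which yields the index $j^*$ satisfying $\deg\fR_{j^*-1}>t\ge\deg\fR_{j^*}$ rather than the desired $j$ with $\deg\fR_{j-1}\ge t>\deg\fR_j$, and then performs at most one explicit Euclidean step to close the gap when $\deg\fR_{j^*}=t$; you instead sketch the HGCD recursion yourself and tune its stopping parameter so that the top-level call lands on $k=j-1$ directly. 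Both are valid; the paper's version is terser, yours is more self-contained, and your acknowledgment that the abnormal-quotient edge cases must be handled as in~\cite[Chapter~8]{aho74:algoDesign} is exactly right.
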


\begin{proof}
By induction on $i$, one can verify that
\beE
\label{eq:Ri_via_PiLi}
  \fP_{i}\fR_{0} + \fL_{i}\fR_{1} = \fR_{i}.
\enE
According to \cite[Theorems~8.17 and 8.18]{aho74:algoDesign}, in time $\OO(p\polylog p)$ we can compute
$\fP_{j^*-1}$,  $\fL_{j^*-1}$, $\fP_{j^*}$, $\fL_{j^*}$
where $j^*$ is the unique index such that 
\[
\deg\fR_{j^*-1}>t\ge \deg\fR_{j^*}.
\]
Using (\ref{eq:Ri_via_PiLi}), we can then obtain $\fR_{j^*-1}$ and $\fR_{j^*}$ via the fast polynomial multiplication.

If $\deg\fR_{j^*}<t$, then $j=j^*$ and we are done.
Otherwise, if $\deg\fR_{j^*}=t$, then $j=j^*+1$, and we require one additional iteration of the extended Euclidean algorithm. Using the fast polynomial division, we compute $\fq_{j^*} = \polyr{\fR_{j^*-1}/\fR_{j^*}}$, and then, in turn, we compute the sought after $\fP_{j^*+1} = \fP_{j^*-1}-\fq_{j^*}\fP_{j^*}$ and $\fL_{j^*+1} = \fL_{j^*-1}-\fq_{j^*}\fL_{j^*}$.
\end{proof}

Let us bound the degrees of $\fP_j$ and $\fL_j$, starting with the latter. Observe that for all $i\ge 1$, we have $\deg\fq_i = \deg\fR_{i-1} - \deg\fR_i$ and, thus, $\deg\fq_i \ge 1$.

From (\ref{eq:polyPL_defs}), we see that $\fL_{2} = -\fq_{1}$. Moreover, for $i\ge 2$, because $\deg\fL_{i-1}\ne\deg(\fq_{i}\fL_{i})$, we also have
\[
\deg\fL_{i+1} = \max\left\{\deg\fL_{i-1},\deg(\fq_{i}\fL_{i})\right\}
 = \deg\fq_{i}+\deg\fL_{i}.
\]
By induction, we can thus see that $\deg\fL_i=\sum_{k=1}^{i-1}\deg\fq_{k}$ for all $i\ge 2$.
Now, since $\deg\fR_0=2t$ and $\deg\fR_{j-1}\ge t$, we get
\[
\deg\fL_j=\sum_{i=1}^{j-1}\deg\fq_{i}
= \sum_{i=1}^{j-1}\left(\deg\fR_{i-1} - \deg\fR_i\right)
= \deg\fR_0 - \deg\fR_{j-1} \le t.
\]

Similarly, by induction we get that for $i\ge 2$,  $\deg\fP_i=\sum_{k=2}^{i-1}\deg\fq_{k} =\deg\fL_i - \deg\fq_1$. Since $\deg\fq_1\ge 1$, this implies $\deg\fP_j\le t-1$.

\subsection{Syndrome series}
\label{sec:syndrome_series}

From this point onward, the description and analysis of the fast decoding algorithm are adapted from the proof of~\cite[Theorem~2]{reed78:GCDdecoding}.

In this section, we instantiate the polynomials $\fR_0$ and $\fR_1$ that serve as inputs to the extended Euclidean algorithm described in Section~\ref{sec:fast_Euclidean}. Specifically, we choose $\fR_0 := x^{2t}$ and $\fR_1 := \sum_{k=0}^{2t-1}s_k x^{2t-1-k}$, where the values $s_{k}=\sum_{i=0}^{p-2}\gamma^{ik}y_{i}$ were introduced in (\ref{eq:synd_def}) and they are know to us for  $k=0,1,\ldots,2t-1$. 

The \emph{syndrome series} is defined as $\fS:=\sum_{k=0}^\infty s_k x^{-k}$. 
Note that $\fR_1 = \polyr{x^{2t-1}\fS}$.
The following theorem lies at the heart of the fast decoding algorithm. The remainder of this appendix is devoted to guiding the reader through its proof.

\begin{thm}
\label{thm:S_via_Pj_over_Lj}
Let $\fR_0 = x^{2t}$ and $\fR_1 = \polyr{x^{2t-1}\fS}$, and let $\fP_j$ and $\fL_j$ be the polynomials produced by the extended Euclidean algorithm applied to $\fR_0$ and $\fR_1$, as described in Section~\ref{sec:fast_Euclidean}. Then $-x\fP_j/\fL_j=\fS$.
\end{thm}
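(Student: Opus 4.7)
The plan is to identify $\fS$ with a rational function determined by the error pattern, and then invoke the uniqueness of the Euclidean-algorithm output via a short degree argument. First I would let $I := \{i \in \{0,1,\ldots,p-2\} : y_i \ne 0\}$ be the support of the error, so $|I| =: e \le t$, and define the error-locator and (shifted) error-evaluator polynomials
\[
\Lambda(x) := \prod_{i \in I}(x - \gamma^i), \qquad \Omega(x) := x \sum_{i \in I} y_i \prod_{i' \in I \setminus \{i\}}(x - \gamma^{i'}).
\]
Plugging $s_k = \sum_{i \in I}y_i \gamma^{ik}$ into $\fS=\sum_{k\ge 0}s_k x^{-k}$ and summing each inner geometric series in $\F_p(\!(x)\!)$ gives $\fS = \sum_{i \in I} y_i\, x/(x - \gamma^i) = \Omega/\Lambda$. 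Note that $\deg\Lambda = e \le t$, $\deg \Omega \le e$, and $\Omega(0) = 0$, so $\Omega/x$ is a polynomial of degree at most $e - 1$.

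Next I would derive a ``key equation'' that puts $(\Omega,\Lambda)$ on the same footing as $(\fP_j,\fL_j,\fR_j)$. Setting $\mathfrak{e} := x^{2t-1}\fS - \fR_1$, by the definition of $\fR_1$ we have $\deg\mathfrak{e} \le -1$. Using $\Lambda\fS = \Omega$ together with $\fR_0 = x^{2t}$ and $\Omega(0)=0$:
\[
\Lambda\fR_1 + \Lambda\mathfrak{e} \;=\; \Lambda\cdot x^{2t-1}\fS \;=\; x^{2t-1}\Omega \;=\; \fR_0 \cdot (\Omega/x).
\]
Rearranging gives $(-\Omega/x)\fR_0 + \Lambda \fR_1 = -\Lambda\mathfrak{e}$. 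The left-hand side is a polynomial, so $\tilde\fR := -\Lambda\mathfrak{e}$ is too, with $\deg\tilde\fR \le \deg\Lambda + \deg\mathfrak{e} \le e - 1 < t$. Thus the triple $(\tilde\fP,\tilde\fL,\tilde\fR) := (-\Omega/x,\,\Lambda,\,-\Lambda\mathfrak{e})$ satisfies $\tilde\fP\fR_0 + \tilde\fL\fR_1 = \tilde\fR$ with $\deg\tilde\fL \le t$ and $\deg\tilde\fR < t$, matching the degree profile of $(\fP_j,\fL_j,\fR_j)$ established in Section~\ref{sec:fast_Euclidean}.

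The final step is a uniqueness argument. Combining the key equation with the Euclidean-algorithm identity $\fP_j\fR_0 + \fL_j\fR_1 = \fR_j$ (multiplying the former by $\fL_j$, the latter by $\tilde\fL$, and subtracting) yields
\[
(\fL_j \tilde\fP - \tilde\fL \fP_j)\,\fR_0 \;=\; \fL_j\tilde\fR - \tilde\fL\fR_j.
\]
With $\deg\fL_j,\deg\tilde\fL \le t$ and $\deg\fR_j,\deg\tilde\fR < t$, the right-hand side has degree at most $2t - 1$, strictly less than $\deg\fR_0 = 2t$. This forces $\fL_j\tilde\fP - \tilde\fL\fP_j = 0$, since otherwise the left-hand side would have degree at least $2t$. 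Therefore $\fP_j/\fL_j = \tilde\fP/\tilde\fL = -\Omega/(x\Lambda) = -\fS/x$, which rearranges to $-x\fP_j/\fL_j = \fS$.

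I expect the main conceptual step to be spotting the key equation that recasts the algebraic identity $\fS = \Omega/\Lambda$ into the exact shape of the Euclidean-algorithm identity; once that is in hand, everything reduces to routine degree bookkeeping and a uniqueness-by-degrees argument. A subtle point worth flagging is that $\Lambda\mathfrak{e}$ is a priori a formal series with infinitely many nonzero coefficients of negative degree, yet the derivation shows it coincides with the polynomial $x^{2t-1}\Omega - \Lambda\fR_1$, so all its negative-degree coefficients must cancel; the trivial edge case $y = 0$ (giving $\fS = 0$ and $j=1$ with $\fP_1 = 0$, $\fL_1 = 1$) is handled separately.
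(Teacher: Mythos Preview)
Your proof is correct and takes a genuinely more direct route than the paper's. The paper introduces a parallel remainder--quotient sequence $\fR'_0,\fR'_1,\fq'_1,\fR'_2,\ldots$ over the field of formal series (with $\fR'_1 = x^{2t-1}\fS$ rather than its polynomial truncation), proves a Quotient equivalence lemma showing that the two sequences have identical quotients and the same terminal index $j=j'$, and then uses the identity $\fP_i+\fS\fL_i/x=\fR'_i/x^{2t}$ together with $\deg\fR'_{j'}<t$ to force $\fP_j\fL-\fP\fL_j=0$. Your argument bypasses all of this machinery: by writing down the polynomial key equation $(-\Omega/x)\fR_0+\Lambda\fR_1=-\Lambda\mathfrak{e}$ directly and bounding $\deg(-\Lambda\mathfrak{e})\le e-1<t$ via the formal-series degree of $\mathfrak{e}$, you obtain a second Bezout-type relation with the right degree profile and finish with a clean uniqueness-by-degrees comparison against $\fP_j\fR_0+\fL_j\fR_1=\fR_j$. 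This is essentially the classical Sugiyama--Kasahara--Hirasawa--Namekawa argument, and it is both shorter and more elementary. What the paper's approach buys in exchange is a structural explanation of \emph{why} running the Euclidean algorithm on the truncated polynomial $\fR_1$ yields the same quotients as running it on the full series $x^{2t-1}\fS$; your proof treats this as a black box and simply verifies the final output.
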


Recall that Theorem~\ref{thm:fast_Euclidean} states that the polynomials $\fP_j$ and $\fL_j$ can be computed in time $\OO(p  \polylog p)$. Because $\deg \fP_j, \deg \fL_j \in \OO(p)$, so Theorem~\ref{thm:fast_mult_and_div} ensures that 
$\polyr{-x \fP_j / \fL_j}_{-p+2}$ can also be computed in time $\OO(p \polylog p)$.
By Theorem~\ref{thm:S_via_Pj_over_Lj}, this implies that we can compute
\[
 \polyr{\fS}_{-p+2} = \sum_{k=0}^{p-2} s_k x^{-k}.
\]
in time $\OO(p \, \polylog p)$. This solves our problem, because—as discussed at the beginning of this appendix—the coefficients \( s_0, s_1, \ldots, s_{p-2} \) are sufficient to recover the vector \( y \).

\bigskip

The following lemma, which is instrumental towards proving Theorem~\ref{thm:S_via_Pj_over_Lj}, establishes that $\fS$ is a rational function, namely, that $\fS$ can be expressed as the ratio of two polynomials.

\begin{lem}
\label{lem:S_via_the_locator_poly}
There are polynomials $\fP,\fL\in\F_p[x]$ such that $\fS=-x\fP/\fL$ and $\deg\fL\le t$.
\end{lem}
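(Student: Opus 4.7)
The plan is to exhibit $\fL$ and $\fP$ explicitly, built from the support of the (unknown) error $y$, and then verify the identity $\fS = -x\fP/\fL$ by a direct manipulation of the formal series. Let $E := \{ i \in \{0,1,\ldots,p-2\} : y_i \ne 0\}$, so that $|E|$ is the Hamming weight of $y$ and hence $|E| \le t$. The natural candidate for $\fL$ is the ``error-locator'' polynomial $\fL(x) := \prod_{i \in E}(x - \gamma^i)$, whose degree is exactly $|E| \le t$, matching the bound claimed by the lemma.

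The key computation is to rewrite $\fS$ using the definition \eqref{eq:synd_def} of $s_k$, swap the order of summation, and then collapse the inner geometric series. First I would substitute to get
\[
\fS \;=\; \sum_{k=0}^\infty s_k\, x^{-k} \;=\; \sum_{k=0}^\infty \sum_{i \in E} y_i\, \gamma^{ik}\, x^{-k} \;=\; \sum_{i \in E} y_i \sum_{k=0}^\infty (\gamma^i x^{-1})^k.
\]
Next I would invoke the identity $\sum_{k=0}^\infty (\gamma^i x^{-1})^k = (1 - \gamma^i x^{-1})^{-1} = x/(x-\gamma^i)$, which holds in $\F_p(\!(x)\!)$ by the recursive formula for multiplicative inverses reviewed just after the definition of the field (the formal series $1 - \gamma^i x^{-1}$ has degree $0$, so its inverse is a well-defined element of $\F_p(\!(x)\!)$). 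Combining the resulting simple fractions over the common denominator $\fL$ yields
\[
\fS \;=\; \sum_{i \in E} \frac{y_i\, x}{x - \gamma^i} \;=\; \frac{x \sum_{i \in E} y_i \prod_{j \in E \setminus \{i\}} (x - \gamma^j)}{\prod_{i \in E}(x - \gamma^i)},
\]
so setting $\fP(x) := -\sum_{i \in E} y_i \prod_{j \in E \setminus \{i\}}(x - \gamma^j)$ gives exactly $\fS = -x\fP/\fL$.

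There is no serious obstacle here; the one subtlety is that the swap of summation order and the geometric-series collapse must be understood as identities in the field $\F_p(\!(x)\!)$ of formal Laurent series rather than as analytic statements. This is legitimate because both sides are formal series whose coefficients agree in every fixed degree: the coefficient of $x^{-k}$ on the left is $s_k$, and on the right (after multiplying through by $\fL$) one checks directly from $\fL = \prod_{i \in E}(x-\gamma^i)$ and $\fP$ that the same coefficient appears. The degree bound $\deg \fL = |E| \le t$ is immediate from the product form, completing the lemma.
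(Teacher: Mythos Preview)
Your proof is correct and essentially identical to the paper's: you define the same error-locator polynomial $\fL(x)=\prod_{i\in E}(x-\gamma^i)$ and companion $\fP(x)=-\sum_{i\in E}y_i\prod_{j\in E\setminus\{i\}}(x-\gamma^j)$, and verify $\fS=-x\fP/\fL$ by swapping sums and collapsing the geometric series $\sum_{k\ge 0}(\gamma^i x^{-1})^k=x/(x-\gamma^i)$ in $\F_p(\!(x)\!)$. The paper's argument is the same line-for-line, so there is nothing further to add.
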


\begin{proof}
Recall that the Hamming weight of the error vector $y=(y_0,y_1,\ldots,y_{p-2})$ is at most $t$. 
Let $L := \{ \ell \in \{0, 1, \ldots, p - 2\} \colon y_\ell \ne 0 \}$ be the set of error locations.
Define polynomials
\[
\fL:=\prod_{\ell\in L}(x-\gamma^\ell)
\qquad\text{and}\qquad
\fP:=-\sum_{\ell\in L}y_\ell \prod_{\ell'\in L\setminus\{\ell\}}(x-\gamma^{\ell'}),
\]
the former of which is sometimes called the \emph{error-locator polynomial}.
Clearly, $\deg \fL = |L| \le t $, as required.

Note that 
$s_k = \sum_{\ell\in L}\gamma^{k\ell}y_\ell$
and that the inverse of formal series $1-\gamma^\ell x^{-1}\in \F_p(\!(x)\!)$ is  formal series $\sum_{k=0}^\infty \gamma^{k\ell} x^{-k}\in \F_p(\!(x)\!)$.
Hence, we have
\[
\fS=\sum_{k=0}^\infty s_k x^{-k}
= \sum_{\ell\in L}\sum_{k=0}^\infty \gamma^{k\ell}y_\ell x^{-k}
= \sum_{\ell\in L}\frac{y_\ell}{1-\gamma^\ell x^{-1}} = x\sum_{\ell\in L}\frac{y_\ell}{x-\gamma^\ell}
= -x\frac{\fP}{\fL}.
\qedhere
\]
\end{proof}

\subsection{Parallel remainder--quotient sequence for formal series}

Recall the sequence of remainders and quotients from~\eqref{eq:R_q_sequence} in Section~\ref{sec:fast_Euclidean}, which we consider when the initial polynomials of the sequence are $\fR_0=x^{2t}$ and $\fR_1=\polyr{x^{2t-1}\fS}$, as in Section~\ref{sec:syndrome_series}. We now define a closely related sequence of remainders and quotients, this time allowing the remainders to be formal series rather than just polynomials.

We begin by defining the initial terms of the new sequence:
\beE
\label{eq:fRprime01_defs}
\fR'_0 := x^{2t}, \qquad \fR'_1 := x^{2t-1}\fS.
\enE
Using these, define the following finite sequence of formal series
\beE
\label{eq:R_q_prime_sequence}
\fR_0',\fR_1',\fq'_1,\fR_2',\fq'_2,\fR_3',
\ldots,\fq'_{j'-2},\fR'_{j'-1},\fq'_{j'-1},\fR'_{j'}
\enE
where the quotients (which are still polynomials) and the remainders (which are now formal series) are recursively defined as
\beE
\label{eq:R_q_prime_recursion}
 \fq'_i := \polyr{\fR'_{i-1}/\fR'_i},
\qquad \fR'_{i+1} := \fR'_{i-1}-\fq'_i\fR'_i,
\enE
and where $j'$ is the unique index such that
\beE
\label{eq:def_of_jprime}
\deg\fR'_{j'-1}\ge t > \deg\fR'_{j'}.
\enE
The two sequences, (\ref{eq:R_q_sequence}) and (\ref{eq:R_q_prime_sequence}), are very closely related, as shown by the following lemma, whose proof is deferred to Section~\ref{sec:quotient_equivalence}.

\begin{lem}[Quotient equivalence lemma]
\label{lem:series_equiv}
We have $j'=j$ and $\fq'_i=\fq_i$ for all $1\le i\le j-1$. 
\end{lem}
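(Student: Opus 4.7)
The plan is to prove the lemma by strong induction on $k$, using the fact that identity~\eqref{eq:Ri_via_PiLi} extends verbatim to the formal-series sequence. Defining $\fP_k'$ and $\fL_k'$ from the quotients $\fq_k'$ via the same recursion~\eqref{eq:polyPL_defs}, a direct induction shows that $\fR_k' = \fP_k' \fR_0' + \fL_k' \fR_1'$ throughout the formal-series sequence. Since $\fP_k'$ and $\fL_k'$ depend only on $\fq_1',\ldots,\fq_{k-1}'$, once we have $\fq_{k'} = \fq_{k'}'$ for all $k' < k$ it follows immediately that $\fP_k = \fP_k'$ and $\fL_k = \fL_k'$. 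Combined with $\fR_0 = \fR_0'$ and the fact that $\tilde{\fS} := \fR_1' - \fR_1 = \sum_{k\ge 2t} s_k x^{2t-1-k}$ has degree at most $-1$, this yields the compact error formula $\fR_k' - \fR_k = \fL_k \tilde{\fS}$.

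Inserting the degree formula $\deg \fL_k = 2t - \deg \fR_{k-1}$ derived at the end of Section~\ref{sec:fast_Euclidean}, I get $\deg(\fR_k' - \fR_k) \le 2t - \deg \fR_{k-1} - 1 \le t - 1$ for all $k \le j$, since $\deg \fR_{k-1} \ge t$ in that range. Consequently $\deg \fR_k' = \deg \fR_k \ge t$ for $k < j$ while $\deg \fR_j' < t$, which will yield $j' = j$ as soon as the quotient matches are established.

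The inductive step is to show $\fq_k' = \fq_k$ for $1 \le k \le j-1$ assuming it holds for smaller indices. I would factor
\[
\frac{\fR_{k-1}'}{\fR_k'} = \frac{\fR_{k-1}}{\fR_k} \cdot \frac{1 + \fL_{k-1}\,\tilde{\fS}/\fR_{k-1}}{1 + \fL_k\,\tilde{\fS}/\fR_k}
\]
and use the invariants $\deg \fR_{k-2},\deg \fR_{k-1},\deg \fR_k \ge t$ to bound each of the two correction terms in the numerator and denominator by $\deg \fR_k - \deg \fR_{k-1} - 1$. A short manipulation then shows that $\fR_{k-1}'/\fR_k'$ and $\fR_{k-1}/\fR_k$ differ by a formal series of degree at most $-1$, so their polynomial parts coincide, i.e., $\fq_k' = \fq_k$. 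The recursion~\eqref{eq:polyPL_defs} upgrades this to $\fP_{k+1} = \fP_{k+1}'$ and $\fL_{k+1} = \fL_{k+1}'$, thereby re-establishing the compact error formula at index $k+1$ and closing the induction.

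I expect the main obstacle to be bookkeeping of the degree inequalities, particularly at the boundary cases $k=1$ (where $\fL_0 = 0$ annihilates the numerator correction) and $k=2$ (where $\fL_1 = 1$ so the numerator correction simplifies to $\tilde{\fS}/\fR_1$). All other verifications reduce to the two elementary inequalities $\deg \fR_k \ge t$ and $\deg \fR_{k-2} + \deg \fR_k \ge 2t$, both of which follow from the fact that every remainder prior to $\fR_j$ has degree at least $t$.
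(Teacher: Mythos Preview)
Your proof is correct and takes a genuinely different route from the paper. The paper proves a strengthened statement (Lemma~\ref{lem:series_equiv_Plus}) by tracking rounding equalities of the form $\polyr{\fR_h}_{d_h}=\polyr{\fR'_h}_{d_h}$ with $d_h=\sum_{k'=1}^{h-1}\deg\fq_{k'}$, advancing these inductively via three separate claims that rely on general facts about how rounding interacts with division and multiplication of formal series. Your approach is more structural: you leverage the B\'ezout identity $\fR_k=\fP_k\fR_0+\fL_k\fR_1$ (which the paper records as~\eqref{eq:Ri_via_PiLi} but only exploits later, in Claim~\ref{clm:Riprime_via_PiLi}) to obtain the exact error formula $\fR_k'-\fR_k=\fL_k\tilde\fS$. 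From this the required degree control follows in one line from $\deg\fL_k=2t-\deg\fR_{k-1}$ and $\deg\tilde\fS\le -1$, and the quotient equality drops out of the factored form of $\fR_{k-1}'/\fR_k'$. The two arguments establish equivalent information---indeed, $\deg(\fR_k'-\fR_k)\le 2t-\deg\fR_{k-1}-1$ is exactly what the paper's rounding condition $\polyr{\fR_k}_{d_k}=\polyr{\fR'_k}_{d_k}$ encodes---but yours is shorter and makes the mechanism (the tail $\tilde\fS$ propagated through $\fL_k$) transparent, whereas the paper's version avoids invoking the B\'ezout identity at this stage and stays closer to the presentation in~\cite{aho74:algoDesign}.
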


We now show how Lemma~\ref{lem:series_equiv}, together with Lemma~\ref{lem:S_via_the_locator_poly}, implies Theorem~\ref{thm:S_via_Pj_over_Lj}. 
The following is a key identity, analogous to~\eqref{eq:Ri_via_PiLi}.

\begin{clm}
\label{clm:Riprime_via_PiLi}
We have $\fP_i+\fS\fL_i/x = \fR'_i/x^{2t}$ for all $0\le i \le j$. 
\end{clm}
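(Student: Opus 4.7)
The plan is to prove the identity $\fP_i+\fS\fL_i/x = \fR'_i/x^{2t}$ by straightforward induction on $i$, using the parallel three-term recurrences for $\fP_i$, $\fL_i$, and $\fR'_i$, together with the quotient equivalence from Lemma~\ref{lem:series_equiv}.

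First I would verify the two base cases. For $i=0$, using $\fP_0=1$, $\fL_0=0$, and $\fR'_0=x^{2t}$ from \eqref{eq:polyPL_defs} and \eqref{eq:fRprime01_defs}, the left-hand side is $1+\fS\cdot 0/x = 1$ and the right-hand side is $x^{2t}/x^{2t}=1$. For $i=1$, using $\fP_1=0$, $\fL_1=1$, and $\fR'_1=x^{2t-1}\fS$, the left-hand side is $0+\fS/x=\fS/x$ and the right-hand side is $x^{2t-1}\fS/x^{2t}=\fS/x$. Both match.

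For the inductive step, I would assume the identity for indices $i-1$ and $i$, where $1\le i\le j-1$, and derive it for $i+1$. Using the recurrences $\fP_{i+1}=\fP_{i-1}-\fq_i\fP_i$ and $\fL_{i+1}=\fL_{i-1}-\fq_i\fL_i$ from \eqref{eq:polyPL_defs}, I compute
\[
\fP_{i+1}+\fS\fL_{i+1}/x = \bigl(\fP_{i-1}+\fS\fL_{i-1}/x\bigr) - \fq_i\bigl(\fP_i+\fS\fL_i/x\bigr).
\]
Applying the inductive hypothesis to both parenthesized expressions, this equals $\fR'_{i-1}/x^{2t}-\fq_i\fR'_i/x^{2t}$. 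Now I invoke the quotient equivalence $\fq'_i=\fq_i$ from Lemma~\ref{lem:series_equiv}, which is valid for $1\le i\le j-1=j'-1$, so the expression becomes $(\fR'_{i-1}-\fq'_i\fR'_i)/x^{2t}=\fR'_{i+1}/x^{2t}$ by the recursion \eqref{eq:R_q_prime_recursion}. This completes the induction and establishes the claim for all $0\le i\le j$.

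There is essentially no hard step here: the proof is a mechanical induction whose only nontrivial input is the quotient equivalence lemma, which supplies the crucial identification $\fq_i=\fq'_i$ allowing the polynomial recurrence defining $\fP_i,\fL_i$ (built from the Euclidean quotients $\fq_i$) to be synchronized with the formal-series recurrence defining $\fR'_i$ (built from the quotients $\fq'_i$). The reason the claim is phrased only up to $i=j$, rather than beyond, is precisely that Lemma~\ref{lem:series_equiv} guarantees $\fq'_i=\fq_i$ only in this range, and the subsequent application in Theorem~\ref{thm:S_via_Pj_over_Lj} will just need the identity at $i=j$, where $\deg\fR'_j<t$ implies $\fR'_j/x^{2t}$ has no terms of degree $\ge -t$, so rearranging yields $\fP_j+\fS\fL_j/x=\fR'_j/x^{2t}$ and hence $-x\fP_j/\fL_j=\fS+\OO(x^{-t-\deg\fL_j-1})$, agreeing with Lemma~\ref{lem:S_via_the_locator_poly} up to matching degrees.
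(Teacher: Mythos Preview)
Your proof is correct and follows essentially the same approach as the paper: induction on $i$ with base cases $i=0,1$ verified from the definitions, and the inductive step carried out by combining the three-term recurrences for $\fP_i$, $\fL_i$, $\fR'_i$ together with the quotient equivalence $\fq_i=\fq'_i$ from Lemma~\ref{lem:series_equiv}. The only cosmetic difference is that the paper starts from $\fR'_{i+1}/x^{2t}$ and works toward $\fP_{i+1}+\fS\fL_{i+1}/x$, whereas you go in the opposite direction.
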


\begin{proof}
We prove the claim by induction on $ i $. For the base cases $ i = 0 $ and $ i = 1 $, recall from~\eqref{eq:polyPL_defs} that $\fP_0 = \fL_1 = 1 $ and $ \fP_1 = \fL_0 = 0 $.
 Therefore the claim holds due to the definitions (\ref{eq:fRprime01_defs}) of $\fR'_0$ and $\fR'_1$.

For the inductive step, assume the claim holds for all $k \le i$. Then,
\begin{multline*}
\frac{\fR'_{i+1}}{x^{2t}}
= \frac{\fR'_{i-1}}{x^{2t}}-\fq'_i\frac{\fR'_i}{x^{2t}}
= \frac{\fR'_{i-1}}{x^{2t}}-\fq_i\frac{\fR'_i}{x^{2t}}
= \fP_{i-1}+\frac{\fS\fL_{i-1}}{x}-\fq_i
\left(\fP_{i}+\frac{\fS\fL_{i}}{x}\right)
\\ = \left(\fP_{i-1}-\fq_i\fP_{i}\right)+\frac{\fS}{x}\left(\fL_{i-1}-\fq_i\fL_{i}\right)
= \fP_{i+1}+\frac{\fS}{x}\fL_{i+1},
\end{multline*}
where the first equality is by the recursive definition (\ref{eq:R_q_prime_recursion}) of $\fR'_{i+1}$, the second uses Lemma~\ref{lem:series_equiv} to replace $\fq'_i$ with $\fq_i$, the third is by the inductive hypothesis, the forth rearranges terms, and, finally, the last is by the recursive definitions (\ref{eq:polyPL_defs}) of $\fP_{i+1}$ and $\fL_{i+1}$.
\end{proof}

We are now ready to complete the proof of Theorem~\ref{thm:S_via_Pj_over_Lj}.
Recall from Lemma~\ref{lem:S_via_the_locator_poly} that $\fS=-x\fP/\fL$, where $\fP$ and $\fL$ are polynomials with $\deg\fL\le t$. 
  Substituting this into Claim~\ref{clm:Riprime_via_PiLi} with $i = j$, and multiplying both sides of the equality by $\fL$, we obtain
\[
\fP_j\fL - \fP\fL_j = \fR'_j\fL/x^{2t}.
\]
Since $j = j'$ (by Lemma~\ref{lem:series_equiv}) and $\deg \fR'_{j'} \le t-1 $ (by~\eqref{eq:def_of_jprime}), we have
\[
\deg\big(\fP_j\fL - \fP\fL_j\big)
=\deg\big(\fR'_{j'}\fL/x^{2t}\big) \le (t-1) + t - 2t < 0.
\]
But since $\fP_j$, $\fL$, $\fP$, $\fL_j$ are all polynomials, namely, they do not contain negative powers of $x$, $\deg(\fP_j\fL - \fP\fL_j)<0$ implies $\fP_j\fL - \fP\fL_j=0$. As a result, we have
$-x\fP_j/\fL_j=-x\fP/\fL=\fS$, which concludes the proof of Theorem~\ref{thm:S_via_Pj_over_Lj}.

\subsection{Proof of the quotient equivalence lemma}
\label{sec:quotient_equivalence}

Here we prove here the following lemma, which subsumes Lemma~\ref{lem:series_equiv}.

\begin{lem}
\label{lem:series_equiv_Plus}
We have $j=j'$,  and  
\begin{alignat*}{2}
&\fq_i=\fq'_i
&\qquad&\text{for all}\quad 1\le i\le j-1,
\\
&\polyr{\fR_{i}}_{\sum_{k=1}^{i-1}\deg\fq_{k}} = \polyr{\fR'_{i}}_{\sum_{k=1}^{i-1}\deg\fq_{k}}
&\qquad&\text{for all}\quad 0\le i\le j.
\end{alignat*}
\end{lem}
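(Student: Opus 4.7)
The plan is to run an induction on $i$ that tracks the discrepancy $\fc_i := \fR'_i - \fR_i$ between the two remainder sequences. For brevity, let $D_i := \sum_{k=1}^{i-1}\deg\fq_k$; then the identity $D_i = 2t - \deg\fR_{i-1}$ holds for $i\ge 1$. I would aim to prove by induction on $i$ the joint invariant
\[
\text{(i)}\ \fq'_k = \fq_k \text{ for all } 1\le k\le i-1,
\qqAnd
\text{(ii)}\ \deg\fc_k \le D_k - 1 \text{ for all } 0\le k\le i.
\]
Part (ii) is equivalent to the claimed rounded-series identity $\polyr{\fR_i}_{D_i}=\polyr{\fR'_i}_{D_i}$, because then $\fR'_i - \fR_i = \fc_i$ contributes no coefficient of degree at least $D_i$.

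The base cases are immediate: $\fc_0 = 0$ because $\fR'_0 = x^{2t} = \fR_0$, and $\fc_1 = x^{2t-1}\fS - \polyr{x^{2t-1}\fS}$ is precisely the negative-degree tail of $x^{2t-1}\fS$, so $\deg\fc_1\le -1 = D_1 - 1$. For the inductive step at $i$ with $1\le i\le j-1$, to derive $\fq'_i=\fq_i$ I would rewrite
\[
\frac{\fR'_{i-1}}{\fR'_i} - \frac{\fR_{i-1}}{\fR_i}
= \frac{\fc_{i-1}\fR_i - \fc_i\fR_{i-1}}{\fR'_i\,\fR_i}
\]
and check that the right-hand side has strictly negative degree. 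Using the inductive bounds $\deg\fc_{i-1}\le D_{i-1}-1$, $\deg\fc_i\le D_i-1$, and the fact that $\deg\fR_{i-1},\deg\fR_i\ge t$ (valid since $i\le j-1$), together with the identity $D_k=2t-\deg\fR_{k-1}$, a direct calculation shows that both terms in the numerator have degree at most $2t-1$, while the denominator has degree $\deg\fR'_i+\deg\fR_i=2\deg\fR_i\ge 2t$ (here $\deg\fR'_i=\deg\fR_i$ because $\deg\fc_i<\deg\fR_i$). Hence the difference lies in the sub-polynomial part of $\F_p(\!(x)\!)$, and applying $\polyr{\cdot}$ gives $\fq'_i=\fq_i$. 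Once this is established, the Euclidean recursion reduces to $\fc_{i+1}=\fc_{i-1}-\fq_i\fc_i$, and
\[
\deg\fc_{i+1}\le\max\{\deg\fc_{i-1},\deg\fq_i+\deg\fc_i\}\le \max\{D_{i-1},D_i+\deg\fq_i\}-1 = D_{i+1}-1,
\]
which closes the inductive step.

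To conclude $j'=j$, I would use the established bounds $\deg\fc_{j-1}\le D_{j-1}-1\le t-2$ and $\deg\fc_j\le D_j-1\le t-1$. The first yields $\deg\fR'_{j-1}=\deg\fR_{j-1}\ge t$; the second, combined with $\deg\fR_j<t$, gives $\deg\fR'_j<t$. The defining condition $\deg\fR'_{j'-1}\ge t>\deg\fR'_{j'}$ therefore forces $j'=j$, which together with (i) and (ii) is exactly the statement of the lemma.

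The main obstacle is the degree bookkeeping that underpins the step $\fq'_i=\fq_i$: one has to verify that $\fc_{i-1}\fR_i - \fc_i\fR_{i-1}$ has degree strictly less than $2\deg\fR_i$, which amounts to the two simultaneous inequalities $D_{i-1}\le\deg\fR_i$ and $D_i\le 2\deg\fR_i-\deg\fR_{i-1}$. Both ultimately reduce to the hypothesis $\deg\fR_i\ge t$, i.e., to the Euclidean stopping rule, so the argument is tight precisely on the range $i\le j-1$; without this degree-$t$ lower bound, the discrepancies $\fc_i$ could overtake the leading behavior of the polynomial remainders and the two quotient sequences would diverge.
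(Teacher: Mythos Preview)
Your proof is correct and follows the same overall inductive strategy as the paper---advancing the two Euclidean sequences in lockstep, showing at each step that the quotients coincide and that the remainders agree on enough leading terms---but the packaging is genuinely different. The paper works entirely through the rounding operator $\polyr{\cdot}_d$, proving three modular claims (one for quotient equality, one for the next rounded-remainder identity, one for simultaneous termination) that all rest on an auxiliary fact about how $\polyr{\cdot}_d$ interacts with division and multiplication. You instead track the discrepancy $\fc_i=\fR'_i-\fR_i$ and bound its degree directly; your quotient step proceeds via the explicit identity $\fR'_{i-1}/\fR'_i-\fR_{i-1}/\fR_i=(\fc_{i-1}\fR_i-\fc_i\fR_{i-1})/(\fR'_i\fR_i)$ and a degree count, avoiding any abstract rounding lemma. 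The two approaches are equivalent in content---your bound $\deg\fc_i\le D_i-1$ is exactly the paper's $\polyr{\fR_i}_{D_i}=\polyr{\fR'_i}_{D_i}$---but yours is more self-contained and elementary, while the paper's is more modular and reusable. Both arguments are tight on the same inequality $\deg\fR_i\ge t$, which is precisely the Euclidean stopping rule, so the range $i\le j-1$ is forced either way.
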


To start with, first note that the last statement of the lemma holds for $i=0$ and $i=1$, because 
$\sum_{k=1}^{-1}\deg\fq_{k}=\sum_{k=1}^{0}\deg\fq_{k}=0$ and we have both $\fR_{0}=\fR'_{0}$ and $\fR_{1}=\polyr{\fR'_{1}}$. This will, effectively, serve as inductive basis, with the induction running over $i$.

Note that it might already be the case that $\deg\fR'_{1}<t$ and $\deg\fR_{1}<t$, which happens if (and only if) $y=0$. In such a case, $j=j'=1$, and we are done. 

\bigskip

Below we present Claims~\ref{clm:next_quotient_equality}, \ref{clm:next_rounding_of_Ri}, and \ref{clm:Ri_degrees}, which jointly prove Lemma~\ref{lem:series_equiv_Plus}. We can think of these three claims as forming a while loop that incrementally appends elements to sequences (\ref{eq:R_q_sequence}) and (\ref{eq:R_q_prime_sequence}).

We can think of Claims~\ref{clm:next_quotient_equality} and \ref{clm:next_rounding_of_Ri} forming the body of the loop. If we have not terminated thus far---we have both $\deg\fR_i\ge t$ and $\deg\fR'_i\ge t$---then Claim~\ref{clm:next_quotient_equality} appends the same quotient $\fq_i=\fq'_i$. Using this quotient, we can obtain remainders $\fR_{i+1}$ and $\fR'_{i+1}$ to be appended to sequences (\ref{eq:R_q_sequence}) and (\ref{eq:R_q_prime_sequence}), respectively. Claim~\ref{clm:next_rounding_of_Ri} then ensures that certain number of initial coefficients of $\fR_{i+1}$ and $\fR'_{i+1}$ all match.
Finally, Claim~\ref{clm:Ri_degrees} governs the termination of this iterative process, and tells us that the halting conditions for both sequences are achieved after the same number of iterations.

\begin{clm}
\label{clm:next_quotient_equality}
Suppose $i\ge 1$ is such that
\begin{alignat*}{2}
&\fq_k=\fq'_k
&\qquad&\text{for all}\quad 1\le k\le i-1,
\\
&\polyr{\fR_{h}}_{\sum_{k=1}^{h-1}\deg\fq_{k}} = \polyr{\fR'_{h}}_{\sum_{k=1}^{h-1}\deg\fq_{k}}
&\qquad&\text{for all}\quad 0\le h\le i,
\end{alignat*}
and that both $\deg\fR_i\ge t$ and $\deg\fR'_i\ge t$.
Then we also have $\fq_i=\fq'_i$.
\end{clm}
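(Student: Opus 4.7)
The approach is to exploit the ``locality'' of polynomial division over formal Laurent series established after~\eqref{eq:inverse_leading_terms}: for any $k \ge 0$, the top $k+1$ coefficients of $\fa/\fa'$ depend only on the top $k+1$ coefficients of each of $\fa$ and $\fa'$. The plan is to reduce $\fq_i = \fq'_i$ to a finite comparison of leading coefficients of the dividends and divisors, and then show that the hypothesis supplies enough such coefficients.

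My first step is to confirm the degree identifications $\deg \fR_h = \deg \fR'_h$ for $h \in \{i-1, i\}$. For $h \le i-1$ this follows by induction from the recurrence $\deg \fR_h = \deg \fR_{h-1} - \deg \fq_h$ (valid for both sequences), combined with the hypothesis $\fq_k = \fq'_k$ for $k < i$ and the base case $\fR_0 = \fR'_0 = x^{2t}$. For $h = i$, it follows from the hypothesis $\polyr{\fR_i}_{N_i} = \polyr{\fR'_i}_{N_i}$ once one checks that $N_i < \deg \fR_i$, so that the leading coefficient survives the rounding: with $N_i = 2t - \deg \fR_{i-1}$, the assumption $\deg \fR_i \ge t$, and $\deg \fR_{i-1} \ge \deg \fR_i + 1$, one gets $N_i \le t - 1 < \deg \fR_i$. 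Write $d_h$ for this common degree.

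Since $\fq_i$ and $\fq'_i$ are both polynomials of degree $d_{i-1} - d_i$, each is determined by the top $d_{i-1} - d_i + 1$ coefficients of the corresponding dividend and divisor. The divisor comparison then requires $\fR_i$ and $\fR'_i$ to agree down to degree $2d_i - d_{i-1}$, i.e.\ $N_i \le 2d_i - d_{i-1}$, which simplifies to $d_i \ge t$---precisely the assumption. The dividend comparison requires $\fR_{i-1}$ and $\fR'_{i-1}$ to agree down to degree $d_i$, i.e.\ $N_{i-1} \le d_i$. For $i=1$ this is trivial since $\fR_0 = \fR'_0$ and $N_0 = 0$; for $i \ge 2$, one has $N_{i-1} = 2t - d_{i-2} \le t - 2 < d_i$, using $d_{i-2} \ge d_i + 2$ (each quotient has positive degree) and $d_i \ge t$. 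Combined with the locality property, this yields $\fq_i = \fq'_i$.

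The main obstacle, such as it is, is purely bookkeeping: translating the tower of hypotheses on $\polyr{\fR_h}_{N_h}$ and on prior quotients into explicit degree inequalities, and noticing that the assumption $\deg \fR_i \ge t$ is exactly what makes the tightest comparison $N_i \le 2d_i - d_{i-1}$ go through. There is no substantive mathematical difficulty beyond careful accounting of degrees.
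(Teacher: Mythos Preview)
Your proposal is correct and follows essentially the same approach as the paper's proof: both reduce $\fq_i=\fq'_i$ to the locality of polynomial quotients (the paper via Fact~\ref{fact:series_rounding}(a), you via the remark after~\eqref{eq:inverse_leading_terms}) and verify the same two degree inequalities $N_{i-1}\le d_i$ and $N_i\le 2d_i-d_{i-1}$, the latter being exactly $\deg\fR_i\ge t$. The only cosmetic difference is that you first establish $\deg\fR_h=\deg\fR'_h$ explicitly, whereas the paper handles this implicitly by applying Fact~\ref{fact:series_rounding}(a) separately to each sequence and bounding $d\le\deg\fR_i$ via the simpler chain $N_{i-1}\le N_i\le d_i$.
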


\begin{clm}
\label{clm:next_rounding_of_Ri}
Suppose $i\ge 1$ is such that
\begin{alignat*}{2}
&\fq_k=\fq'_k
&\qquad&\text{for all}\quad 1\le k\le i,
\\
&\polyr{\fR_{h}}_{\sum_{k=1}^{h-1}\deg\fq_{k}} = \polyr{\fR'_{h}}_{\sum_{k=1}^{h-1}\deg\fq_{k}}
&\qquad&\text{for all}\quad 0\le h\le i.
\end{alignat*}
Then we also have
$\polyr{\fR_{i+1}}_{\sum_{k=1}^{i}\deg\fq_{k}}=\polyr{\fR'_{i+1}}_{\sum_{k=1}^{i}\deg\fq_{k}}$.
\end{clm}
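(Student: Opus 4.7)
The plan is to reduce the claim to a direct degree bound on the difference $\fR_{i+1}-\fR'_{i+1}$. For brevity, set $N_h := \sum_{k=1}^{h}\deg\fq_k$ with the convention $N_0 = 0$, so the hypothesis reads $\deg(\fR_h - \fR'_h) < N_{h-1}$ for all $0 \le h \le i$ (the $h=0$ case being vacuous since $\fR_0 = \fR'_0 = x^{2t}$). The goal is to show $\deg(\fR_{i+1}-\fR'_{i+1}) < N_i$, which is exactly $\polyr{\fR_{i+1}-\fR'_{i+1}}_{N_i}=0$, i.e., the claimed equality.

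First I would use the assumption $\fq_i = \fq'_i$ to subtract the two recursive definitions and obtain the identity
\[
\fR_{i+1} - \fR'_{i+1} = (\fR_{i-1} - \fR'_{i-1}) - \fq_i\,(\fR_i - \fR'_i).
\]
Next I would bound the degree of each term on the right. The hypothesis applied at index $i$ gives $\deg(\fR_i - \fR'_i) < N_{i-1}$, so by subadditivity of the degree,
\[
\deg\bigl(\fq_i(\fR_i - \fR'_i)\bigr) < \deg\fq_i + N_{i-1} = N_i.
\]
For the other term, the hypothesis at index $i-1$ gives $\deg(\fR_{i-1} - \fR'_{i-1}) < N_{i-2}$ when $i \ge 2$, while for $i = 1$ the term vanishes outright because $\fR_0 = \fR'_0$. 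Since each $\deg\fq_k \ge 1$ (an observation already made in Section~\ref{sec:fast_Euclidean}), we have $N_{i-2} \le N_i - 2 < N_i$, and therefore $\deg(\fR_{i-1}-\fR'_{i-1}) < N_i$ holds in both cases.

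Combining the two estimates via the triangle inequality for degree yields $\deg(\fR_{i+1}-\fR'_{i+1}) < N_i$, which is the desired conclusion. I do not anticipate any serious obstacle: the entire argument is careful degree bookkeeping, and the only mild subtleties are handling the boundary case $i=1$ (where the $\fR_{i-1}-\fR'_{i-1}$ term disappears) and invoking the fact $\deg\fq_k\ge 1$ to obtain the monotonicity $N_{i-2} < N_i$ that lets the two bounds be combined cleanly.
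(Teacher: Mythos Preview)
Your proof is correct and follows essentially the same approach as the paper's: you rephrase the rounding equalities as degree bounds on differences (i.e., $\polyr{\fa}_d=\polyr{\fb}_d$ iff $\deg(\fa-\fb)<d$) and then exploit the identity $\fR_{i+1}-\fR'_{i+1}=(\fR_{i-1}-\fR'_{i-1})-\fq_i(\fR_i-\fR'_i)$ term by term, which is exactly what the paper does via Fact~\ref{fact:series_rounding}(b) and~(c). Your degree-bookkeeping version is arguably cleaner, since it inlines those facts rather than invoking them by name.
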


\begin{clm}
\label{clm:Ri_degrees}
Suppose $i\ge 1$ is such that 
\begin{alignat*}{2}
&\fq_k=\fq'_k
&\qquad&\text{for all}\quad 1\le k\le i,
\\
&\polyr{\fR_{h}}_{\sum_{k=1}^{h-1}\deg\fq_{k}} = \polyr{\fR'_{h}}_{\sum_{k=1}^{h-1}\deg\fq_{k}}
&\qquad&\text{for all}\quad 0\le h\le i+1,
\end{alignat*}
and that both $\deg\fR_{i}\ge t$ and  $\deg\fR'_{i}\ge t$.
Then we have $\deg\fR_{i+1}\le t-1$ if and only if $\deg\fR'_{i+1}\le t-1$.
\end{clm}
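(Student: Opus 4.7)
The plan is to exploit the hypothesis at $h=i+1$, which pins down the top coefficients of $\fR_{i+1}$ and $\fR'_{i+1}$ all the way down to position $N := \sum_{k=1}^{i}\deg\fq_k$, and then to argue that $N\le t$, so that the shared portion already contains everything needed to decide whether the degree drops below $t$.

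First, I would unpack the hypothesis for $h=i+1$: by the definition $\polyr{\fa}_N = x^N\polyr{x^{-N}\fa}$, the rounding $\polyr{\cdot}_N$ retains precisely those coefficients at positions~$\ge N$, so the equality $\polyr{\fR_{i+1}}_N = \polyr{\fR'_{i+1}}_N$ says that every coefficient of $\fR_{i+1}$ at a position $\ge N$ matches the corresponding coefficient of $\fR'_{i+1}$.

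Next, I would bound $N\le t$ by telescoping. As recalled in Section~\ref{sec:fast_Euclidean}, since $\fq_k = \polyr{\fR_{k-1}/\fR_k}$ and $\fR_k\ne 0$ for $1\le k\le i$, we have $\deg\fq_k = \deg\fR_{k-1}-\deg\fR_k$. Summing gives $N = \deg\fR_0 - \deg\fR_i = 2t - \deg\fR_i$, and the assumed $\deg\fR_i\ge t$ then yields $N\le t$. (Alternatively, the same bound follows using the primed sequence together with $\fq_k = \fq'_k$ and $\deg\fR'_i\ge t$, so the claim holds under either of the two degree assumptions.)

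Combining the two observations, since $N\le t$ the coefficients of $\fR_{i+1}$ and $\fR'_{i+1}$ agree at every position $\ge t$. Consequently, all coefficients of $\fR_{i+1}$ at positions~$\ge t$ vanish if and only if the same holds for $\fR'_{i+1}$, which, using that the degree of a polynomial or a formal Laurent series is its largest position with a nonzero coefficient, is exactly the equivalence $\deg\fR_{i+1}\le t-1 \iff \deg\fR'_{i+1}\le t-1$. The only step requiring any real thought is the telescoping bound $N\le t$; the rest is definition-chasing, so I do not anticipate any serious obstacle.
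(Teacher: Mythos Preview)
Your proof is correct and follows essentially the same route as the paper. The paper packages the ``coefficients at positions $\ge N$ agree, and $N\le t$, hence coefficients at positions $\ge t$ agree'' step as an application of Fact~\ref{fact:series_rounding}(b), but the telescoping bound $N = 2t - \deg\fR_i \le t$ and the conclusion via $\polyr{\fR_{i+1}}_t = \polyr{\fR'_{i+1}}_t$ are identical.
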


It remains to prove these three claims.  Before turning to their individual proofs, we begin with some general observations about rounding of formal series.

\begin{fact}
\label{fact:series_rounding}
Let $\fa,\fb,\fc\in \F_p(\!(x)\!)$ and $d,d'\in\Z$. The following statements hold:
\begin{enumerate}
\item[(a)] If $\fa$ and $\fb$ are both non-zero with $\deg\fa\ge\deg\fb$, and
\begin{align*}
& d \le \deg\fa-(\deg\fa-\deg\fb)=\deg\fb,
\\ &d' \le \deg\fb-(\deg\fa-\deg\fb)=2\deg\fb-\deg\fa,
\end{align*}
then $\polyr{\fa/\fb} = \polyr{\polyr{\fa}_{d}/\polyr{\fb}_{d'}}$.
\item[(b)] If $\polyr{\fa}_{d}=\polyr{\fb}_{d}$ and $d'\ge d$, then $\polyr{\fa}_{d'}=\polyr{\fb}_{d'}$.
\item[(c)] If $\polyr{\fa}_d=\polyr{\fb}_d$, then $\polyr{\fa\fc}_{d+\deg\fc}=\polyr{\fb\fc}_{d+\deg\fc}$.
\end{enumerate}
\end{fact}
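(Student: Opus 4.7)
The plan is to handle (a), (b), (c) separately, all by tracking the degree of the relevant formal series and using the basic equivalence that, for any formal series $\fa$, the statement $\polyr{\fa}_k = 0$ coincides with $\deg \fa < k$.

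Parts (b) and (c) should be nearly immediate. For (b), I would translate the hypothesis $\polyr{\fa}_d = \polyr{\fb}_d$ as $\deg(\fa-\fb) < d$; since $d' \ge d$, this also gives $\deg(\fa-\fb) < d'$, i.e.\ $\polyr{\fa}_{d'} = \polyr{\fb}_{d'}$. For (c), I would multiply the analogous identity $\deg(\fa-\fb) < d$ by $\fc$ to get $\deg((\fa-\fb)\fc) < d + \deg\fc$, which is exactly the claim $\polyr{\fa\fc}_{d+\deg\fc} = \polyr{\fb\fc}_{d+\deg\fc}$.

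Part (a) is the heart of the statement and where I would concentrate effort. I plan to write $\fa = \polyr{\fa}_d + u$ and $\fb = \polyr{\fb}_{d'} + v$ with $\deg u < d$ and $\deg v < d'$, and then compute the difference
\[
\frac{\fa}{\fb} - \frac{\polyr{\fa}_d}{\polyr{\fb}_{d'}}
= \frac{u\,\polyr{\fb}_{d'} - v\,\polyr{\fa}_d}{\fb \,\polyr{\fb}_{d'}}.
\]
The conditions $d \le \deg\fb$ and $d' \le 2\deg\fb - \deg\fa$ (which, together with $\deg\fa \ge \deg\fb$, also imply $d' \le \deg\fb$) ensure that both $\polyr{\fa}_d$ and $\polyr{\fb}_{d'}$ remain non-zero with the same degrees as $\fa$ and $\fb$, respectively. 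Consequently the denominator has degree exactly $2\deg\fb$, while the numerator has degree strictly less than $\max\{d+\deg\fb,\; d'+\deg\fa\}$, which under the hypotheses is at most $2\deg\fb$. Thus the difference has strictly negative degree and its polynomial part vanishes, yielding $\polyr{\fa/\fb} = \polyr{\polyr{\fa}_d / \polyr{\fb}_{d'}}$.

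The main obstacle I foresee is purely bookkeeping around part (a): verifying that the hypotheses really do keep $\polyr{\fa}_d$ and $\polyr{\fb}_{d'}$ from losing their leading coefficients, and then confirming both of $d + \deg\fb$ and $d' + \deg\fa$ are indeed bounded by $2\deg\fb$. Once these degree accounts close out, the rest is a routine manipulation of formal series.
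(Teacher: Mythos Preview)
Your proposal is correct. Parts (b) and (c) match the paper's sketch exactly once one unwinds ``the leading coefficients of a product depend only on the leading coefficients of the factors'' into your degree characterization $\deg(\fa-\fb)<d$. For part (a) there is a minor methodological difference: the paper derives it by combining (c) with the earlier observation \eqref{eq:inverse_leading_terms} that truncating $\fb$ does not alter the leading coefficients of $\fb^{-1}$, so that $\fa/\fb=\fa\cdot\fb^{-1}$ inherits the right truncation behaviour; you instead form the single fraction $\dfrac{u\polyr{\fb}_{d'}-v\polyr{\fa}_d}{\fb\,\polyr{\fb}_{d'}}$ and bound its degree directly. Both routes are one-line degree checks of the same elementary fact; yours is slightly more self-contained, while the paper's makes the dependence on \eqref{eq:inverse_leading_terms} explicit.
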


\begin{proof}[Proof sketch.]
Point (c) holds because, for any $k\ge 1$, the $k$ most significant coefficients of the product $\fa\fc$ depend only on the $k$ most significant coefficients of both $\fa$ and $\fc$. Point (a) then holds because the $k:=\deg\fa-\deg\fb+1$ most significant coefficients of the inverse $\fb^{-1}$ depend only on the $k$ most significant coefficients of $\fb$.
For Point (b), observe that if $d'\ge d$, then $\polyr{\fa}_{d'}=\polyr{\polyr{\fa}_{d}}_{d'}$.
\end{proof}

\noindent
{\bf{}Claim~\ref{clm:next_quotient_equality}} (restated){\bf{}.}
\emph{Suppose $i\ge 1$ is such that
\begin{alignat}{2}
\notag
&\fq_k=\fq'_k
&\qquad&\text{for all}\quad 1\le k\le i-1,
\\
\label{eq:next_quotient_equality_B}
&\polyr{\fR_{h}}_{\sum_{k=1}^{h-1}\deg\fq_{k}} = \polyr{\fR'_{h}}_{\sum_{k=1}^{h-1}\deg\fq_{k}}
&\qquad&\text{for all}\quad 0\le h\le i,
\end{alignat}
and that both $\deg\fR_i\ge t$ and $\deg\fR'_i\ge t$.
Then we also have $\fq_i=\fq'_i$.
}

\begin{proof}
We apply Fact~\ref{fact:series_rounding}(a) with $d=\sum_{k=1}^{i-2} \deg \fq_k$ and $d'=\sum_{k=1}^{i-1} \deg \fq_k$.
Recall that $\deg\fq_{k}=\deg\fR_{k-1}-\deg\fR_{k}\ge 1$ for all $k$. Hence,
\[
d'=\sum_{k=1}^{i-1} \deg \fq_k
= \deg \fR_0 - \deg \fR_{i-1}
= 2t - \deg \fR_{i-1}
\le 2 \deg \fR_i - \deg \fR_{i-1},
\]
where we used the assumption that $\deg\fR_i\ge t$.
Building upon this inequality, we get
\[
d=
\sum_{k=1}^{i-2} \deg \fq_k
\le \sum_{k=1}^{i-1} \deg \fq_k
\le 2 \deg \fR_i - \deg \fR_{i-1}
\le \deg \fR_i,
\]
where the final inequality follows from $\deg\fR_{i}\le\deg\fR_{i-1}$.
Similarly, because $\fq'_k=\fq_k$ for all $k\le i-1$, we also have
\[
d'=\sum_{k=1}^{i-1} \deg \fq_k
\le 2 \deg \fR'_i - \deg \fR'_{i-1}
\qqAnd
d = \sum_{k=1}^{i-2}\deg\fq_{k} \le \deg\fR'_i.
\]
Hence, we obtain
\[
\fq_i
 =\polyr{\fR_{i-1}/\fR_i}
 = \polyr{\polyr{\fR_{i-1}}_{d}/\polyr{\fR_i}_{d'}}
 = \polyr{\polyr{\fR'_{i-1}}_{d}/\polyr{\fR'_i}_{d'}}
 =\polyr{\fR'_{i-1}/\fR'_i} = \fq'_i,
\]
where the second and fourth equalities follow from Fact~\ref{fact:series_rounding}(a), and the third equality uses the assumption~\eqref{eq:next_quotient_equality_B} with $h = i - 1$ and $h = i$.
\end{proof}


\noindent
{\bf{}Claim~\ref{clm:next_rounding_of_Ri}} (restated){\bf{}.}
\emph{
Suppose $i\ge 1$ is such that
\begin{alignat}{2}
\label{eq:next_rounding_of_Ri_A}
&\fq_k=\fq'_k
&\qquad&\text{for all}\quad 1\le k\le i,
\\
\label{eq:next_rounding_of_Ri_B}
&\polyr{\fR_{h}}_{\sum_{k=1}^{h-1}\deg\fq_{k}} = \polyr{\fR'_{h}}_{\sum_{k=1}^{h-1}\deg\fq_{k}}
&\qquad&\text{for all}\quad 0\le h\le i.
\end{alignat}
Then we also have
$\polyr{\fR_{i+1}}_{\sum_{k=1}^{i}\deg\fq_{k}}=\polyr{\fR'_{i+1}}_{\sum_{k=1}^{i}\deg\fq_{k}}$.
}

\begin{proof}
We only require assumption (\ref{eq:next_rounding_of_Ri_A}) for $k=i$, and assumption (\ref{eq:next_rounding_of_Ri_B}) for $h=i-1$ and $h=i$.
By definition, we have $\fR_{i+1}=\fR_{i-1}-\fq_i\fR_{i}$ and $\fR'_{i+1}=\fR'_{i-1}-\fq_i\fR'_{i}$, where in the second identity we used that $\fq_i = \fq'_i$.
It therefore suffices to show that both
\begin{alignat*}{2}
&\polyr{\fR_{i-1}}_{\sum_{k=1}^{i}\deg\fq_{k}}&=&\polyr{\fR'_{i-1}}_{\sum_{k=1}^{i}\deg\fq_{k}},
\\ &
\polyr{\fq_i\fR_{i}}_{\deg\fq_{i}+\sum_{k=1}^{i-1}\deg\fq_{k}}&=&\polyr{\fq_i\fR'_{i}}_{\deg\fq_{i}+\sum_{k=1}^{i-1}\deg\fq_{k}}.
\end{alignat*}
The first equality follows from~\eqref{eq:next_rounding_of_Ri_B} with $h = i - 1$, together with Fact~\ref{fact:series_rounding}(b), for which we use
$\sum_{k=1}^{i-2} \deg \fq_k \le \sum_{k=1}^{i} \deg \fq_k$.
The second follows from~\eqref{eq:next_rounding_of_Ri_B} with $h = i$, along with Fact~\ref{fact:series_rounding}(c).
\end{proof}


\noindent
{\bf{}Claim~\ref{clm:Ri_degrees}} (restated){\bf{}.}
\emph{
Suppose $i\ge 1$ is such that 
\begin{alignat}{2}
\notag
&\fq_k=\fq'_k
&\qquad&\text{for all}\quad 1\le k\le i,
\\
\label{eq:Ri_degrees_B}
&\polyr{\fR_{h}}_{\sum_{k=1}^{h-1}\deg\fq_{k}} = \polyr{\fR'_{h}}_{\sum_{k=1}^{h-1}\deg\fq_{k}}
&\qquad&\text{for all}\quad 0\le h\le i+1,
\end{alignat}
and that both $\deg\fR_{i}\ge t$ and  $\deg\fR'_{i}\ge t$.
Then we have $\deg\fR_{i+1}\le t-1$ if and only if $\deg\fR'_{i+1}\le t-1$.
}

\begin{proof}
Let us apply Fact~\ref{fact:series_rounding}(b) with $d = \sum_{k=1}^{i} \deg \fq_k$ and $d' = t$. This choice is valid because, from the identity $\deg \fq_k = \deg \fR_{k-1} - \deg \fR_k$, we obtain
\[
d = \sum_{k=1}^{i} \deg \fq_k
= \deg \fR_0 - \deg \fR_i
= 2t - \deg \fR_i
\le 2t - t = t = d',
\]
where the sole inequality uses the assumption that $\deg \fR_i \ge t$.
Assumption (\ref{eq:Ri_degrees_B}) with $h={i+1}$ states that $\polyr{\fR_{i+1}}_{d}=\polyr{\fR'_{i+1}}_{d}$, from which Fact~\ref{fact:series_rounding}(b) implies 
$\polyr{\fR_{i+1}}_{t}=\polyr{\fR'_{i+1}}_{t}$.
We conclude by observing that $\deg\fR_{i+1}< t$ if and only if $\polyr{\fR_{i+1}}_{t}=0$, and
$\deg\fR'_{i+1}< t$ if and only if $\polyr{\fR'_{i+1}}_{t}=0$.
\end{proof}

\section{Simulation of QRAM in the standard model}
\label{app:QRAM}

Here we illustrate two quantum circuits that simulate the $R_{\mathrm Q}$ gate of the \QRAMq model in the standard circuit model for the case $M = 8$.  The first circuit (Figure~\ref{fig:RQ_linear_ancilla}) uses $M$ ancilla qubits, while the second (Figure~\ref{fig:RQ_log_ancilla}) uses only $\lceil \log_2 M \rceil$ ancilla qubits. In both circuit diagrams, $|\overline{\ell}\>$ denotes the state $|\ell_0, \ell_1, \ldots, \ell_{M-1}\>$, and $|\overline{0}\>$ denotes the all-zero state $|0, 0, \ldots, 0\>$, which is both the initial and final state of the ancilla qubits.

\def\ygap{0.2} 
\def\reggap{0.5} 
\def\xgap{0.25} 
\def\rCXctrl{0.05} 
\def\rCXnot{0.1} 
\def\rSwap{0.08} 

\def\groupPad{0.14} 
\def\groupRound{3} 

\def\yRecordTop{4.7}
\def\yFlagRec{3.7}
\def\yTruthTop{3.2}
\def\yIndexTop{2}
\def\yTarget{1}
\def\yFlagAlgo{0.41}

\def\xWireLeft{0}
\def\xWireMid{3.4}
\def\xWireRight{6.42}

\def\xOracle{1}
\def\yOraclePad{0.2} 
\def\xFlagCopy{4.2}
\def\xIndexCopy{5.0}
\def\xKetEnd{3.35} 

\def\xGateGap{0.35} 

\def\xpad{0.95} 


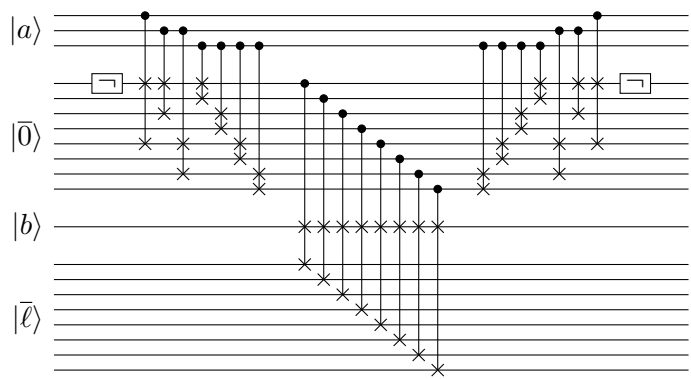
\begin{figure}[!h]
\centering
\begin{tikzpicture}

\foreach \i in {0,...,2} {
     \draw (-\xpad,-\ygap*\i)--(\xpad+23*\xgap+2*\xGateGap,-\ygap*\i);
}
\draw (-\xpad,-2*\reggap-9*\ygap)--(\xpad+23*\xgap+2*\xGateGap,-2*\reggap-9*\ygap);
\foreach \i in {0,...,7} {
     \draw (-\xpad,-\reggap-\ygap*2-\ygap*\i)--(\xpad+23*\xgap+2*\xGateGap,-\reggap-\ygap*2-\ygap*\i);
     \draw (-\xpad,-3*\reggap-9*\ygap-\ygap*\i)--(\xpad+23*\xgap+2*\xGateGap,-3*\reggap-9*\ygap-\ygap*\i);
}

\node [left] at (-\xpad,-0*\reggap-\ygap*1) {$|a\>$};
\node [left] at (-\xpad,-1*\reggap-\ygap*5.5) {$|\overline{0}\>$};
\node [left] at (-\xpad,-2*\reggap-\ygap*9) {$|b\>$};
\node [left] at (-\xpad,-3*\reggap-\ygap*12.5) {$|\overline\ell\>$};

\draw (1*\xgap,-0*\ygap)--(1*\xgap,-\reggap-6*\ygap);
\draw [fill = black] (1*\xgap,-0*\ygap) circle (\rCXctrl);
\draw (1*\xgap-\rSwap,-\reggap-2*\ygap-\rSwap)--(1*\xgap+\rSwap,-\reggap-2*\ygap+\rSwap);
\draw (1*\xgap-\rSwap,-\reggap-2*\ygap+\rSwap)--(1*\xgap+\rSwap,-\reggap-2*\ygap-\rSwap);
\draw (1*\xgap-\rSwap,-\reggap-6*\ygap-\rSwap)--(1*\xgap+\rSwap,-\reggap-6*\ygap+\rSwap);
\draw (1*\xgap-\rSwap,-\reggap-6*\ygap+\rSwap)--(1*\xgap+\rSwap,-\reggap-6*\ygap-\rSwap);

\draw (2*\xgap,-1*\ygap)--(2*\xgap,-\reggap-4*\ygap);
\draw [fill = black] (2*\xgap,-1*\ygap) circle (\rCXctrl);
\draw (2*\xgap-\rSwap,-\reggap-2*\ygap-\rSwap)--(2*\xgap+\rSwap,-\reggap-2*\ygap+\rSwap);
\draw (2*\xgap-\rSwap,-\reggap-2*\ygap+\rSwap)--(2*\xgap+\rSwap,-\reggap-2*\ygap-\rSwap);
\draw (2*\xgap-\rSwap,-\reggap-4*\ygap-\rSwap)--(2*\xgap+\rSwap,-\reggap-4*\ygap+\rSwap);
\draw (2*\xgap-\rSwap,-\reggap-4*\ygap+\rSwap)--(2*\xgap+\rSwap,-\reggap-4*\ygap-\rSwap);

\draw (3*\xgap,-1*\ygap)--(3*\xgap,-\reggap-8*\ygap);
\draw [fill = black] (3*\xgap,-1*\ygap) circle (\rCXctrl);
\draw (3*\xgap-\rSwap,-\reggap-6*\ygap-\rSwap)--(3*\xgap+\rSwap,-\reggap-6*\ygap+\rSwap);
\draw (3*\xgap-\rSwap,-\reggap-6*\ygap+\rSwap)--(3*\xgap+\rSwap,-\reggap-6*\ygap-\rSwap);
\draw (3*\xgap-\rSwap,-\reggap-8*\ygap-\rSwap)--(3*\xgap+\rSwap,-\reggap-8*\ygap+\rSwap);
\draw (3*\xgap-\rSwap,-\reggap-8*\ygap+\rSwap)--(3*\xgap+\rSwap,-\reggap-8*\ygap-\rSwap);

\draw (4*\xgap,-2*\ygap)--(4*\xgap,-\reggap-3*\ygap);
\draw [fill = black] (4*\xgap,-2*\ygap) circle (\rCXctrl);
\draw (4*\xgap-\rSwap,-\reggap-2*\ygap-\rSwap)--(4*\xgap+\rSwap,-\reggap-2*\ygap+\rSwap);
\draw (4*\xgap-\rSwap,-\reggap-2*\ygap+\rSwap)--(4*\xgap+\rSwap,-\reggap-2*\ygap-\rSwap);
\draw (4*\xgap-\rSwap,-\reggap-3*\ygap-\rSwap)--(4*\xgap+\rSwap,-\reggap-3*\ygap+\rSwap);
\draw (4*\xgap-\rSwap,-\reggap-3*\ygap+\rSwap)--(4*\xgap+\rSwap,-\reggap-3*\ygap-\rSwap);

\draw (5*\xgap,-2*\ygap)--(5*\xgap,-\reggap-5*\ygap);
\draw [fill = black] (5*\xgap,-2*\ygap) circle (\rCXctrl);
\draw (5*\xgap-\rSwap,-\reggap-4*\ygap-\rSwap)--(5*\xgap+\rSwap,-\reggap-4*\ygap+\rSwap);
\draw (5*\xgap-\rSwap,-\reggap-4*\ygap+\rSwap)--(5*\xgap+\rSwap,-\reggap-4*\ygap-\rSwap);
\draw (5*\xgap-\rSwap,-\reggap-5*\ygap-\rSwap)--(5*\xgap+\rSwap,-\reggap-5*\ygap+\rSwap);
\draw (5*\xgap-\rSwap,-\reggap-5*\ygap+\rSwap)--(5*\xgap+\rSwap,-\reggap-5*\ygap-\rSwap);

\draw (6*\xgap,-2*\ygap)--(6*\xgap,-\reggap-7*\ygap);
\draw [fill = black] (6*\xgap,-2*\ygap) circle (\rCXctrl);
\draw (6*\xgap-\rSwap,-\reggap-6*\ygap-\rSwap)--(6*\xgap+\rSwap,-\reggap-6*\ygap+\rSwap);
\draw (6*\xgap-\rSwap,-\reggap-6*\ygap+\rSwap)--(6*\xgap+\rSwap,-\reggap-6*\ygap-\rSwap);
\draw (6*\xgap-\rSwap,-\reggap-7*\ygap-\rSwap)--(6*\xgap+\rSwap,-\reggap-7*\ygap+\rSwap);
\draw (6*\xgap-\rSwap,-\reggap-7*\ygap+\rSwap)--(6*\xgap+\rSwap,-\reggap-7*\ygap-\rSwap);

\draw (7*\xgap,-2*\ygap)--(7*\xgap,-\reggap-9*\ygap);
\draw [fill = black] (7*\xgap,-2*\ygap) circle (\rCXctrl);
\draw (7*\xgap-\rSwap,-\reggap-8*\ygap-\rSwap)--(7*\xgap+\rSwap,-\reggap-8*\ygap+\rSwap);
\draw (7*\xgap-\rSwap,-\reggap-8*\ygap+\rSwap)--(7*\xgap+\rSwap,-\reggap-8*\ygap-\rSwap);
\draw (7*\xgap-\rSwap,-\reggap-9*\ygap-\rSwap)--(7*\xgap+\rSwap,-\reggap-9*\ygap+\rSwap);
\draw (7*\xgap-\rSwap,-\reggap-9*\ygap+\rSwap)--(7*\xgap+\rSwap,-\reggap-9*\ygap-\rSwap);

\foreach \i in {0,...,7} {
\draw (8*\xgap+\i*\xgap+1*\xGateGap,-\reggap-2*\ygap-\i*\ygap)--(8*\xgap+\i*\xgap+1*\xGateGap,-3*\reggap-9*\ygap-\i*\ygap);
\draw [fill = black] (8*\xgap+\i*\xgap+1*\xGateGap,-\reggap-2*\ygap-\i*\ygap) circle (\rCXctrl);
\draw (8*\xgap+\i*\xgap+1*\xGateGap-\rSwap,-2*\reggap-9*\ygap-\rSwap)--(8*\xgap+\i*\xgap+1*\xGateGap+\rSwap,-2*\reggap-9*\ygap+\rSwap);
\draw (8*\xgap+\i*\xgap+1*\xGateGap-\rSwap,-2*\reggap-9*\ygap+\rSwap)--(8*\xgap+\i*\xgap+1*\xGateGap+\rSwap,-2*\reggap-9*\ygap-\rSwap);
\draw (8*\xgap+\i*\xgap+1*\xGateGap-\rSwap,-3*\reggap-9*\ygap-\i*\ygap-\rSwap)--(8*\xgap+\i*\xgap+1*\xGateGap+\rSwap,-3*\reggap-9*\ygap-\i*\ygap+\rSwap);
\draw (8*\xgap+\i*\xgap+1*\xGateGap-\rSwap,-3*\reggap-9*\ygap-\i*\ygap+\rSwap)--(8*\xgap+\i*\xgap+1*\xGateGap+\rSwap,-3*\reggap-9*\ygap-\i*\ygap-\rSwap);
}

\draw (16*\xgap+2*\xGateGap,-2*\ygap)--(16*\xgap+2*\xGateGap,-\reggap-9*\ygap);
\draw [fill = black] (16*\xgap+2*\xGateGap,-2*\ygap) circle (\rCXctrl);
\draw (16*\xgap+2*\xGateGap-\rSwap,-\reggap-8*\ygap-\rSwap)--(16*\xgap+2*\xGateGap+\rSwap,-\reggap-8*\ygap+\rSwap);
\draw (16*\xgap+2*\xGateGap-\rSwap,-\reggap-8*\ygap+\rSwap)--(16*\xgap+2*\xGateGap+\rSwap,-\reggap-8*\ygap-\rSwap);
\draw (16*\xgap+2*\xGateGap-\rSwap,-\reggap-9*\ygap-\rSwap)--(16*\xgap+2*\xGateGap+\rSwap,-\reggap-9*\ygap+\rSwap);
\draw (16*\xgap+2*\xGateGap-\rSwap,-\reggap-9*\ygap+\rSwap)--(16*\xgap+2*\xGateGap+\rSwap,-\reggap-9*\ygap-\rSwap);

\draw (17*\xgap+2*\xGateGap,-2*\ygap)--(17*\xgap+2*\xGateGap,-\reggap-7*\ygap);
\draw [fill = black] (17 *\xgap+2*\xGateGap,-2*\ygap) circle (\rCXctrl);
\draw (17*\xgap+2*\xGateGap-\rSwap,-\reggap-6*\ygap-\rSwap)--(17*\xgap+2*\xGateGap+\rSwap,-\reggap-6*\ygap+\rSwap);
\draw (17*\xgap+2*\xGateGap-\rSwap,-\reggap-6*\ygap+\rSwap)--(17*\xgap+2*\xGateGap+\rSwap,-\reggap-6*\ygap-\rSwap);
\draw (17*\xgap+2*\xGateGap-\rSwap,-\reggap-7*\ygap-\rSwap)--(17*\xgap+2*\xGateGap+\rSwap,-\reggap-7*\ygap+\rSwap);
\draw (17*\xgap+2*\xGateGap-\rSwap,-\reggap-7*\ygap+\rSwap)--(17*\xgap+2*\xGateGap+\rSwap,-\reggap-7*\ygap-\rSwap);

\draw (18*\xgap+2*\xGateGap,-2*\ygap)--(18*\xgap+2*\xGateGap,-\reggap-5*\ygap);
\draw [fill = black] (18 *\xgap+2*\xGateGap,-2*\ygap) circle (\rCXctrl);
\draw (18*\xgap+2*\xGateGap-\rSwap,-\reggap-4*\ygap-\rSwap)--(18*\xgap+2*\xGateGap+\rSwap,-\reggap-4*\ygap+\rSwap);
\draw (18*\xgap+2*\xGateGap-\rSwap,-\reggap-4*\ygap+\rSwap)--(18*\xgap+2*\xGateGap+\rSwap,-\reggap-4*\ygap-\rSwap);
\draw (18*\xgap+2*\xGateGap-\rSwap,-\reggap-5*\ygap-\rSwap)--(18*\xgap+2*\xGateGap+\rSwap,-\reggap-5*\ygap+\rSwap);
\draw (18*\xgap+2*\xGateGap-\rSwap,-\reggap-5*\ygap+\rSwap)--(18*\xgap+2*\xGateGap+\rSwap,-\reggap-5*\ygap-\rSwap);

\draw (19*\xgap+2*\xGateGap,-2*\ygap)--(19*\xgap+2*\xGateGap,-\reggap-3*\ygap);
\draw [fill = black] (19*\xgap+2*\xGateGap,-2*\ygap) circle (\rCXctrl);
\draw (19*\xgap+2*\xGateGap-\rSwap,-\reggap-2*\ygap-\rSwap)--(19*\xgap+2*\xGateGap+\rSwap,-\reggap-2*\ygap+\rSwap);
\draw (19*\xgap+2*\xGateGap-\rSwap,-\reggap-2*\ygap+\rSwap)--(19*\xgap+2*\xGateGap+\rSwap,-\reggap-2*\ygap-\rSwap);
\draw (19*\xgap+2*\xGateGap-\rSwap,-\reggap-3*\ygap-\rSwap)--(19*\xgap+2*\xGateGap+\rSwap,-\reggap-3*\ygap+\rSwap);
\draw (19*\xgap+2*\xGateGap-\rSwap,-\reggap-3*\ygap+\rSwap)--(19*\xgap+2*\xGateGap+\rSwap,-\reggap-3*\ygap-\rSwap);

\draw (20*\xgap+2*\xGateGap,-1*\ygap)--(20*\xgap+2*\xGateGap,-\reggap-8*\ygap);
\draw [fill = black] (20*\xgap+2*\xGateGap,-1*\ygap) circle (\rCXctrl);
\draw (20*\xgap+2*\xGateGap-\rSwap,-\reggap-6*\ygap-\rSwap)--(20*\xgap+2*\xGateGap+\rSwap,-\reggap-6*\ygap+\rSwap);
\draw (20*\xgap+2*\xGateGap-\rSwap,-\reggap-6*\ygap+\rSwap)--(20*\xgap+2*\xGateGap+\rSwap,-\reggap-6*\ygap-\rSwap);
\draw (20*\xgap+2*\xGateGap-\rSwap,-\reggap-8*\ygap-\rSwap)--(20*\xgap+2*\xGateGap+\rSwap,-\reggap-8*\ygap+\rSwap);
\draw (20*\xgap+2*\xGateGap-\rSwap,-\reggap-8*\ygap+\rSwap)--(20*\xgap+2*\xGateGap+\rSwap,-\reggap-8*\ygap-\rSwap);

\draw (21*\xgap+2*\xGateGap,-1*\ygap)--(21*\xgap+2*\xGateGap,-\reggap-4*\ygap);
\draw [fill = black] (21*\xgap+2*\xGateGap,-1*\ygap) circle (\rCXctrl);
\draw (21*\xgap+2*\xGateGap-\rSwap,-\reggap-2*\ygap-\rSwap)--(21*\xgap+2*\xGateGap+\rSwap,-\reggap-2*\ygap+\rSwap);
\draw (21*\xgap+2*\xGateGap-\rSwap,-\reggap-2*\ygap+\rSwap)--(21*\xgap+2*\xGateGap+\rSwap,-\reggap-2*\ygap-\rSwap);
\draw (21*\xgap+2*\xGateGap-\rSwap,-\reggap-4*\ygap-\rSwap)--(21*\xgap+2*\xGateGap+\rSwap,-\reggap-4*\ygap+\rSwap);
\draw (21*\xgap+2*\xGateGap-\rSwap,-\reggap-4*\ygap+\rSwap)--(21*\xgap+2*\xGateGap+\rSwap,-\reggap-4*\ygap-\rSwap);

\draw (22*\xgap+2*\xGateGap,-0*\ygap)--(22*\xgap+2*\xGateGap,-\reggap-6*\ygap);
\draw [fill = black] (22*\xgap+2*\xGateGap,-0*\ygap) circle (\rCXctrl);
\draw (22*\xgap+2*\xGateGap-\rSwap,-\reggap-2*\ygap-\rSwap)--(22*\xgap+2*\xGateGap+\rSwap,-\reggap-2*\ygap+\rSwap);
\draw (22*\xgap+2*\xGateGap-\rSwap,-\reggap-2*\ygap+\rSwap)--(22*\xgap+2*\xGateGap+\rSwap,-\reggap-2*\ygap-\rSwap);
\draw (22*\xgap+2*\xGateGap-\rSwap,-\reggap-6*\ygap-\rSwap)--(22*\xgap+2*\xGateGap+\rSwap,-\reggap-6*\ygap+\rSwap);
\draw (22*\xgap+2*\xGateGap-\rSwap,-\reggap-6*\ygap+\rSwap)--(22*\xgap+2*\xGateGap+\rSwap,-\reggap-6*\ygap-\rSwap);

\draw [fill=white] (-1*\xgap-0.2,-\reggap-2*\ygap+0.13) rectangle (-1*\xgap+0.2,-\reggap-2*\ygap-0.13);
\draw [fill=white] (24*\xgap+2*\xGateGap-0.2,-\reggap-2*\ygap+0.13) rectangle (24*\xgap+2*\xGateGap+0.2,-\reggap-2*\ygap-0.13);
\node at (-1*\xgap,-1*\reggap-\ygap*2) {$\neg$};
\node at (24*\xgap+2*\xGateGap,-1*\reggap-\ygap*2) {$\neg$};

\end{tikzpicture}
\captionsetup{font=small}
\captionsetup{width=0.9\textwidth}
\caption[my caption]{%
Simulation of $R_{\mathrm{Q}}$ using the one-hot encoding of $a$.
The circuit employs exactly $3M - 2$ Fredkin gates.
Note that the two negation gates can be omitted if the ancilla is initialized and restored to the state $|1\> \otimes |0\>^{\otimes M-1}$. }
\label{fig:RQ_linear_ancilla}
\end{figure}


\def\xpad{0.50} 

\begin{figure}[!h]
\centering
\begin{tikzpicture}

\foreach \i in {0,...,2} {
     \draw (-\xpad,-\ygap*\i)--(\xpad+35*\xgap,-\ygap*\i);
     \draw (-\xpad,-\reggap-\ygap*2-\ygap*\i)--(\xpad+35*\xgap,-\reggap-\ygap*2-\ygap*\i);
}
\draw (-\xpad,-2*\reggap-4*\ygap)--(\xpad+35*\xgap,-2*\reggap-4*\ygap);
\foreach \i in {0,...,7} {
     \draw (-\xpad,-3*\reggap-4*\ygap-\ygap*\i)--(\xpad+35*\xgap,-3*\reggap-4*\ygap-\ygap*\i);
}

\node [left] at (-\xpad,-0*\reggap-\ygap*1) {$|a\>$};
\node [left] at (-\xpad,-1*\reggap-\ygap*3) {$|\overline{0}\>$};
\node [left] at (-\xpad,-2*\reggap-\ygap*4) {$|b\>$};
\node [left] at (-\xpad,-3*\reggap-\ygap*7.5) {$|\overline\ell\>$};

\draw (0,0)--(0,-\reggap-\ygap*2-\rCXnot);
\draw [fill = white] (0,0) circle (\rCXctrl);
\draw (0,-\reggap-\ygap*2) circle (\rCXnot);

\draw (1*\xgap,-1*\ygap)--(1*\xgap,-\reggap-3*\ygap-\rCXnot);
\draw [fill = white] (1*\xgap,-1*\ygap) circle (\rCXctrl);
\draw [fill = black] (1*\xgap,-\reggap-2*\ygap) circle (\rCXctrl);
\draw (1*\xgap,-\reggap-3*\ygap) circle (\rCXnot);

\draw (2*\xgap,-2*\ygap)--(2*\xgap,-\reggap-4*\ygap-\rCXnot);
\draw [fill = white] (2*\xgap,-2*\ygap) circle (\rCXctrl);
\draw [fill = black] (2*\xgap,-\reggap-3*\ygap) circle (\rCXctrl);
\draw (2*\xgap,-\reggap-4*\ygap) circle (\rCXnot);

\draw (3*\xgap,-\reggap-4*\ygap)--(3*\xgap,-3*\reggap-4*\ygap);
\draw [fill = black] (3*\xgap,-\reggap-4*\ygap) circle (\rCXctrl);
\draw (3*\xgap-\rSwap,-2*\reggap-4*\ygap-\rSwap)--(3*\xgap+\rSwap,-2*\reggap-4*\ygap+\rSwap);
\draw (3*\xgap-\rSwap,-2*\reggap-4*\ygap+\rSwap)--(3*\xgap+\rSwap,-2*\reggap-4*\ygap-\rSwap);
\draw (3*\xgap-\rSwap,-3*\reggap-4*\ygap-\rSwap)--(3*\xgap+\rSwap,-3*\reggap-4*\ygap+\rSwap);
\draw (3*\xgap-\rSwap,-3*\reggap-4*\ygap+\rSwap)--(3*\xgap+\rSwap,-3*\reggap-4*\ygap-\rSwap);

\draw (4*\xgap,-2*\ygap)--(4*\xgap,-\reggap-4*\ygap-\rCXnot);
\draw [fill = white] (4*\xgap,-2*\ygap) circle (\rCXctrl);
\draw [fill = black] (4*\xgap,-\reggap-3*\ygap) circle (\rCXctrl);
\draw (4*\xgap,-\reggap-4*\ygap) circle (\rCXnot);

\draw (5*\xgap,-2*\ygap)--(5*\xgap,-\reggap-4*\ygap-\rCXnot);
\draw [fill = black] (5*\xgap,-2*\ygap) circle (\rCXctrl);
\draw [fill = black] (5*\xgap,-\reggap-3*\ygap) circle (\rCXctrl);
\draw (5*\xgap,-\reggap-4*\ygap) circle (\rCXnot);

\draw (6*\xgap,-\reggap-4*\ygap)--(6*\xgap,-3*\reggap-5*\ygap);
\draw [fill = black] (6*\xgap,-\reggap-4*\ygap) circle (\rCXctrl);
\draw (6*\xgap-\rSwap,-2*\reggap-4*\ygap-\rSwap)--(6*\xgap+\rSwap,-2*\reggap-4*\ygap+\rSwap);
\draw (6*\xgap-\rSwap,-2*\reggap-4*\ygap+\rSwap)--(6*\xgap+\rSwap,-2*\reggap-4*\ygap-\rSwap);
\draw (6*\xgap-\rSwap,-3*\reggap-5*\ygap-\rSwap)--(6*\xgap+\rSwap,-3*\reggap-5*\ygap+\rSwap);
\draw (6*\xgap-\rSwap,-3*\reggap-5*\ygap+\rSwap)--(6*\xgap+\rSwap,-3*\reggap-5*\ygap-\rSwap);

\draw (7*\xgap,-2*\ygap)--(7*\xgap,-\reggap-4*\ygap-\rCXnot);
\draw [fill = black] (7*\xgap,-2*\ygap) circle (\rCXctrl);
\draw [fill = black] (7*\xgap,-\reggap-3*\ygap) circle (\rCXctrl);
\draw (7*\xgap,-\reggap-4*\ygap) circle (\rCXnot);

\draw (8*\xgap,-1*\ygap)--(8*\xgap,-\reggap-3*\ygap-\rCXnot);
\draw [fill = white] (8*\xgap,-1*\ygap) circle (\rCXctrl);
\draw [fill = black] (8*\xgap,-\reggap-2*\ygap) circle (\rCXctrl);
\draw (8*\xgap,-\reggap-3*\ygap) circle (\rCXnot);

\draw (9*\xgap,-1*\ygap)--(9*\xgap,-\reggap-3*\ygap-\rCXnot);
\draw [fill = black] (9*\xgap,-1*\ygap) circle (\rCXctrl);
\draw [fill = black] (9*\xgap,-\reggap-2*\ygap) circle (\rCXctrl);
\draw (9*\xgap,-\reggap-3*\ygap) circle (\rCXnot);

\draw (10*\xgap,-2*\ygap)--(10*\xgap,-\reggap-4*\ygap-\rCXnot);
\draw [fill = white] (10*\xgap,-2*\ygap) circle (\rCXctrl);
\draw [fill = black] (10*\xgap,-\reggap-3*\ygap) circle (\rCXctrl);
\draw (10*\xgap,-\reggap-4*\ygap) circle (\rCXnot);

\draw (11*\xgap,-\reggap-4*\ygap)--(11*\xgap,-3*\reggap-6*\ygap);
\draw [fill = black] (11*\xgap,-\reggap-4*\ygap) circle (\rCXctrl);
\draw (11*\xgap-\rSwap,-2*\reggap-4*\ygap-\rSwap)--(11*\xgap+\rSwap,-2*\reggap-4*\ygap+\rSwap);
\draw (11*\xgap-\rSwap,-2*\reggap-4*\ygap+\rSwap)--(11*\xgap+\rSwap,-2*\reggap-4*\ygap-\rSwap);
\draw (11*\xgap-\rSwap,-3*\reggap-6*\ygap-\rSwap)--(11*\xgap+\rSwap,-3*\reggap-6*\ygap+\rSwap);
\draw (11*\xgap-\rSwap,-3*\reggap-6*\ygap+\rSwap)--(11*\xgap+\rSwap,-3*\reggap-6*\ygap-\rSwap);

\draw (12*\xgap,-2*\ygap)--(12*\xgap,-\reggap-4*\ygap-\rCXnot);
\draw [fill = white] (12*\xgap,-2*\ygap) circle (\rCXctrl);
\draw [fill = black] (12*\xgap,-\reggap-3*\ygap) circle (\rCXctrl);
\draw (12*\xgap,-\reggap-4*\ygap) circle (\rCXnot);

\draw (13*\xgap,-2*\ygap)--(13*\xgap,-\reggap-4*\ygap-\rCXnot);
\draw [fill = black] (13*\xgap,-2*\ygap) circle (\rCXctrl);
\draw [fill = black] (13*\xgap,-\reggap-3*\ygap) circle (\rCXctrl);
\draw (13*\xgap,-\reggap-4*\ygap) circle (\rCXnot);

\draw (14*\xgap,-\reggap-4*\ygap)--(14*\xgap,-3*\reggap-7*\ygap);
\draw [fill = black] (14*\xgap,-\reggap-4*\ygap) circle (\rCXctrl);
\draw (14*\xgap-\rSwap,-2*\reggap-4*\ygap-\rSwap)--(14*\xgap+\rSwap,-2*\reggap-4*\ygap+\rSwap);
\draw (14*\xgap-\rSwap,-2*\reggap-4*\ygap+\rSwap)--(14*\xgap+\rSwap,-2*\reggap-4*\ygap-\rSwap);
\draw (14*\xgap-\rSwap,-3*\reggap-7*\ygap-\rSwap)--(14*\xgap+\rSwap,-3*\reggap-7*\ygap+\rSwap);
\draw (14*\xgap-\rSwap,-3*\reggap-7*\ygap+\rSwap)--(14*\xgap+\rSwap,-3*\reggap-7*\ygap-\rSwap);

\draw (15*\xgap,-2*\ygap)--(15*\xgap,-\reggap-4*\ygap-\rCXnot);
\draw [fill = black] (15*\xgap,-2*\ygap) circle (\rCXctrl);
\draw [fill = black] (15*\xgap,-\reggap-3*\ygap) circle (\rCXctrl);
\draw (15*\xgap,-\reggap-4*\ygap) circle (\rCXnot);

\draw (16*\xgap,-1*\ygap)--(16*\xgap,-\reggap-3*\ygap-\rCXnot);
\draw [fill = black] (16*\xgap,-1*\ygap) circle (\rCXctrl);
\draw [fill = black] (16*\xgap,-\reggap-2*\ygap) circle (\rCXctrl);
\draw (16*\xgap,-\reggap-3*\ygap) circle (\rCXnot);

\draw (17*\xgap,0)--(17*\xgap,-\reggap-\ygap*2-\rCXnot);
\draw [fill = white] (17*\xgap,0) circle (\rCXctrl);
\draw (17*\xgap,-\reggap-\ygap*2) circle (\rCXnot);

\draw (18*\xgap,0)--(18*\xgap,-\reggap-\ygap*2-\rCXnot);
\draw [fill = black] (18*\xgap,0) circle (\rCXctrl);
\draw (18*\xgap,-\reggap-\ygap*2) circle (\rCXnot);

\draw (19*\xgap,-1*\ygap)--(19*\xgap,-\reggap-3*\ygap-\rCXnot);
\draw [fill = white] (19*\xgap,-1*\ygap) circle (\rCXctrl);
\draw [fill = black] (19*\xgap,-\reggap-2*\ygap) circle (\rCXctrl);
\draw (19*\xgap,-\reggap-3*\ygap) circle (\rCXnot);

\draw (20*\xgap,-2*\ygap)--(20*\xgap,-\reggap-4*\ygap-\rCXnot);
\draw [fill = white] (20*\xgap,-2*\ygap) circle (\rCXctrl);
\draw [fill = black] (20*\xgap,-\reggap-3*\ygap) circle (\rCXctrl);
\draw (20*\xgap,-\reggap-4*\ygap) circle (\rCXnot);

\draw (21*\xgap,-\reggap-4*\ygap)--(21*\xgap,-3*\reggap-8*\ygap);
\draw [fill = black] (21*\xgap,-\reggap-4*\ygap) circle (\rCXctrl);
\draw (21*\xgap-\rSwap,-2*\reggap-4*\ygap-\rSwap)--(21*\xgap+\rSwap,-2*\reggap-4*\ygap+\rSwap);
\draw (21*\xgap-\rSwap,-2*\reggap-4*\ygap+\rSwap)--(21*\xgap+\rSwap,-2*\reggap-4*\ygap-\rSwap);
\draw (21*\xgap-\rSwap,-3*\reggap-8*\ygap-\rSwap)--(21*\xgap+\rSwap,-3*\reggap-8*\ygap+\rSwap);
\draw (21*\xgap-\rSwap,-3*\reggap-8*\ygap+\rSwap)--(21*\xgap+\rSwap,-3*\reggap-8*\ygap-\rSwap);

\draw (22*\xgap,-2*\ygap)--(22*\xgap,-\reggap-4*\ygap-\rCXnot);
\draw [fill = white] (22*\xgap,-2*\ygap) circle (\rCXctrl);
\draw [fill = black] (22*\xgap,-\reggap-3*\ygap) circle (\rCXctrl);
\draw (22*\xgap,-\reggap-4*\ygap) circle (\rCXnot);

\draw (23*\xgap,-2*\ygap)--(23*\xgap,-\reggap-4*\ygap-\rCXnot);
\draw [fill = black] (23*\xgap,-2*\ygap) circle (\rCXctrl);
\draw [fill = black] (23*\xgap,-\reggap-3*\ygap) circle (\rCXctrl);
\draw (23*\xgap,-\reggap-4*\ygap) circle (\rCXnot);

\draw (24*\xgap,-\reggap-4*\ygap)--(24*\xgap,-3*\reggap-9*\ygap);
\draw [fill = black] (24*\xgap,-\reggap-4*\ygap) circle (\rCXctrl);
\draw (24*\xgap-\rSwap,-2*\reggap-4*\ygap-\rSwap)--(24*\xgap+\rSwap,-2*\reggap-4*\ygap+\rSwap);
\draw (24*\xgap-\rSwap,-2*\reggap-4*\ygap+\rSwap)--(24*\xgap+\rSwap,-2*\reggap-4*\ygap-\rSwap);
\draw (24*\xgap-\rSwap,-3*\reggap-9*\ygap-\rSwap)--(24*\xgap+\rSwap,-3*\reggap-9*\ygap+\rSwap);
\draw (24*\xgap-\rSwap,-3*\reggap-9*\ygap+\rSwap)--(24*\xgap+\rSwap,-3*\reggap-9*\ygap-\rSwap);

\draw (25*\xgap,-2*\ygap)--(25*\xgap,-\reggap-4*\ygap-\rCXnot);
\draw [fill = black] (25*\xgap,-2*\ygap) circle (\rCXctrl);
\draw [fill = black] (25*\xgap,-\reggap-3*\ygap) circle (\rCXctrl);
\draw (25*\xgap,-\reggap-4*\ygap) circle (\rCXnot);

\draw (26*\xgap,-1*\ygap)--(26*\xgap,-\reggap-3*\ygap-\rCXnot);
\draw [fill = white] (26*\xgap,-1*\ygap) circle (\rCXctrl);
\draw [fill = black] (26*\xgap,-\reggap-2*\ygap) circle (\rCXctrl);
\draw (26*\xgap,-\reggap-3*\ygap) circle (\rCXnot);

\draw (27*\xgap,-1*\ygap)--(27*\xgap,-\reggap-3*\ygap-\rCXnot);
\draw [fill = black] (27*\xgap,-1*\ygap) circle (\rCXctrl);
\draw [fill = black] (27*\xgap,-\reggap-2*\ygap) circle (\rCXctrl);
\draw (27*\xgap,-\reggap-3*\ygap) circle (\rCXnot);

\draw (28*\xgap,-2*\ygap)--(28*\xgap,-\reggap-4*\ygap-\rCXnot);
\draw [fill = white] (28*\xgap,-2*\ygap) circle (\rCXctrl);
\draw [fill = black] (28*\xgap,-\reggap-3*\ygap) circle (\rCXctrl);
\draw (28*\xgap,-\reggap-4*\ygap) circle (\rCXnot);

\draw (29*\xgap,-\reggap-4*\ygap)--(29*\xgap,-3*\reggap-10*\ygap);
\draw [fill = black] (29*\xgap,-\reggap-4*\ygap) circle (\rCXctrl);
\draw (29*\xgap-\rSwap,-2*\reggap-4*\ygap-\rSwap)--(29*\xgap+\rSwap,-2*\reggap-4*\ygap+\rSwap);
\draw (29*\xgap-\rSwap,-2*\reggap-4*\ygap+\rSwap)--(29*\xgap+\rSwap,-2*\reggap-4*\ygap-\rSwap);
\draw (29*\xgap-\rSwap,-3*\reggap-10*\ygap-\rSwap)--(29*\xgap+\rSwap,-3*\reggap-10*\ygap+\rSwap);
\draw (29*\xgap-\rSwap,-3*\reggap-10*\ygap+\rSwap)--(29*\xgap+\rSwap,-3*\reggap-10*\ygap-\rSwap);

\draw (30*\xgap,-2*\ygap)--(30*\xgap,-\reggap-4*\ygap-\rCXnot);
\draw [fill = white] (30*\xgap,-2*\ygap) circle (\rCXctrl);
\draw [fill = black] (30*\xgap,-\reggap-3*\ygap) circle (\rCXctrl);
\draw (30*\xgap,-\reggap-4*\ygap) circle (\rCXnot);

\draw (31*\xgap,-2*\ygap)--(31*\xgap,-\reggap-4*\ygap-\rCXnot);
\draw [fill = black] (31*\xgap,-2*\ygap) circle (\rCXctrl);
\draw [fill = black] (31*\xgap,-\reggap-3*\ygap) circle (\rCXctrl);
\draw (31*\xgap,-\reggap-4*\ygap) circle (\rCXnot);

\draw (32*\xgap,-\reggap-4*\ygap)--(32*\xgap,-3*\reggap-11*\ygap);
\draw [fill = black] (32*\xgap,-\reggap-4*\ygap) circle (\rCXctrl);
\draw (32*\xgap-\rSwap,-2*\reggap-4*\ygap-\rSwap)--(32*\xgap+\rSwap,-2*\reggap-4*\ygap+\rSwap);
\draw (32*\xgap-\rSwap,-2*\reggap-4*\ygap+\rSwap)--(32*\xgap+\rSwap,-2*\reggap-4*\ygap-\rSwap);
\draw (32*\xgap-\rSwap,-3*\reggap-11*\ygap-\rSwap)--(32*\xgap+\rSwap,-3*\reggap-11*\ygap+\rSwap);
\draw (32*\xgap-\rSwap,-3*\reggap-11*\ygap+\rSwap)--(32*\xgap+\rSwap,-3*\reggap-11*\ygap-\rSwap);

\draw (33*\xgap,-2*\ygap)--(33*\xgap,-\reggap-4*\ygap-\rCXnot);
\draw [fill = black] (33*\xgap,-2*\ygap) circle (\rCXctrl);
\draw [fill = black] (33*\xgap,-\reggap-3*\ygap) circle (\rCXctrl);
\draw (33*\xgap,-\reggap-4*\ygap) circle (\rCXnot);

\draw (34*\xgap,-1*\ygap)--(34*\xgap,-\reggap-3*\ygap-\rCXnot);
\draw [fill = black] (34*\xgap,-1*\ygap) circle (\rCXctrl);
\draw [fill = black] (34*\xgap,-\reggap-2*\ygap) circle (\rCXctrl);
\draw (34*\xgap,-\reggap-3*\ygap) circle (\rCXnot);

\draw (35*\xgap,0)--(35*\xgap,-\reggap-\ygap*2-\rCXnot);
\draw [fill = black] (35*\xgap,0) circle (\rCXctrl);
\draw (35*\xgap,-\reggap-\ygap*2) circle (\rCXnot);

\draw [lightgray, dash pattern=on 3pt off 1.5pt, rounded corners = \groupRound] (4*\xgap-\groupPad,-2*\ygap+\groupPad) rectangle (5*\xgap+\groupPad,-\reggap-4*\ygap-\groupPad);
\draw [lightgray, dash pattern=on 3pt off 1.5pt, rounded corners = \groupRound] (8*\xgap-\groupPad,-1*\ygap+\groupPad) rectangle (9*\xgap+\groupPad,-\reggap-3*\ygap-\groupPad);
\draw [lightgray, dash pattern=on 3pt off 1.5pt, rounded corners = \groupRound] (12*\xgap-\groupPad,-2*\ygap+\groupPad) rectangle (13*\xgap+\groupPad,-\reggap-4*\ygap-\groupPad);

\draw [lightgray, dash pattern=on .4pt off 0.6pt, rounded corners = \groupRound] (17*\xgap-\groupPad,-0*\ygap+\groupPad) rectangle (18*\xgap+\groupPad,-\reggap-2*\ygap-\groupPad);

\draw [lightgray, dash pattern=on 3pt off 1.5pt, rounded corners = \groupRound]
(22*\xgap-\groupPad,-2*\ygap+\groupPad) rectangle (23*\xgap+\groupPad,-\reggap-4*\ygap-\groupPad);
\draw [lightgray, dash pattern=on 3pt off 1.5pt, rounded corners = \groupRound]
(26*\xgap-\groupPad,-1*\ygap+\groupPad) rectangle (27*\xgap+\groupPad,-\reggap-3*\ygap-\groupPad);
\draw [lightgray, dash pattern=on 3pt off 1.5pt, rounded corners = \groupRound]
(30*\xgap-\groupPad,-2*\ygap+\groupPad) rectangle (31*\xgap+\groupPad,-\reggap-4*\ygap-\groupPad);

\end{tikzpicture}
\captionsetup{font=small}
\captionsetup{width=0.9\textwidth}
\caption[my caption]{%
Simulation of $R_{\mathrm{Q}}$ that uses $\lceil \log_2 M \rceil$ ancilla qubits to sequentially check whether $a = 0$, $a = 1$, $a = 2$, etc., and, if so, to perform the corresponding swap. When $M$ is a power of $2$, the circuit employs $M$ Fredkin gates, $4M - 8$ Toffoli gates, and $4$ CNOT gates. Note that each pair of gates enclosed in a dashed box can be replaced by a single CNOT gate, and the pair of gates enclosed in the dotted box can be replaced by a single negation gate.}
\label{fig:RQ_log_ancilla}
\end{figure}
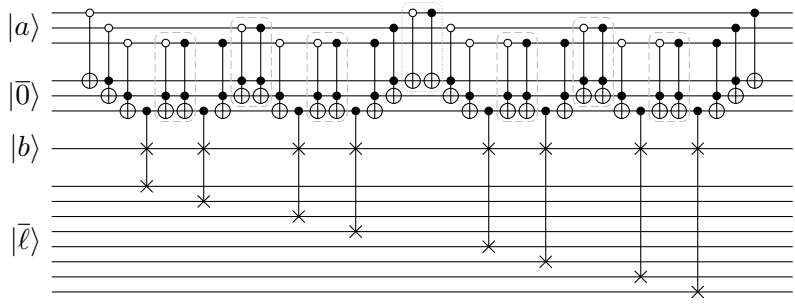

\newpage

\section{Proof of Theorem~\ref{thm:WeightsToExp}}
\label{app:expected_ratio_proof}

In this Appendix, we prove Theorem~\ref{thm:WeightsToExp}. The proof essentially follows that in \cite[Section 6.2]{jordan25:DQI-original} with some stylistic adjustments.

\bigskip

\noindent
{\bf{}Theorem~\ref{thm:WeightsToExp}} (restated){\bf{}.}
\emph{
Let $|\algostate_4\>$ be the final state of the DQI algorithm, as in \eqref{eq:psi4}.
We have 
\[
\<\algostate_4|\Phi_f|\algostate_4\> = \frac{mr}{p} + \frac{p-2r}{p}
\sum_{k=1}^\ell |w_k^2| k
+   \frac{2\sqrt{r(p-r)}}{p}
 \sum_{k=0}^{\ell-1} \Re(w_k^* w_{k+1})
\sqrt{(k+1)(m-k)}.
\]
}

\bigskip

Regarding the statement of the theorem,
recall that $\Phi_f=\sum_{x\in\F_p^n}\overline f(x)|x\>\<x|$ where $\overline f(x)=|\{i\colon b_i\cdot x\in S_i\}|$.
We have $\overline f(x)=\sum_{i=1}^m \overline f_i(x)$ for $\overline f_i(x):=f_i((Bx)_i)$. Also note that \eqref{eq:psi4} gives the final state of the algorithm as $|\algostate_4\> = (F_p^{-1})^{\otimes n}|\algostate'_4\>$
where 
\[
|\algostate'_4\>=\sum_{k=0}^\ell \sum_{\substack{y\in\F_p^m\\|y|=k}}
\frac{w_k}{\sqrt{\binom{m}{k}}}\beta_y|B^\top y\>
\]
is as in \eqref{eq:psi4prime}.
For brevity, let us denote $\<\Phi_f\>_{|\algostate_4\>}:=
\<\algostate_4|\Phi_f|\algostate_4\>$.

Let us define $\widehat\Phi_f:=F_p^{\otimes n} \Phi_f(F_p^{-1})^{\otimes n}$, and note that $\<\Phi_f\>_{|\algostate_4\>}=
\<\algostate'_4|\widehat\Phi_f |\algostate'_4\>$, which is a more convenient form of $\<\Phi_f\>_{|\algostate_4\>}$ for purposes of the proof. First, let us express $\Phi_f$ and $\widehat\Phi_f$ via generalized Pauli operators.

\subsection{Observables via generalized Pauli operators}

The generalized Pauli $X$ and $Z$ operators are defined as
\[
Z_p: = \sum_{b \in \mathbb{F}_p} \omega_p^b |b\rangle \langle b|
\qqAnd
X_p: = \sum_{b \in \mathbb{F}_p} |b+1\>\<b|,
\]
which the Fourier transform 
relates as $F_pZ_pF_p^*=X_p^*$ and $F_p^*Z_pF_p=X_p$.


\begin{clm}
\label{clm:Phi_via_Paulis}
We have
\[
\widehat\Phi_f
= \frac{1}{p}\sum_{i=1}^m\sum_{v\in S_i}\sum_{a\in\F_p} \omega_p^{-av}
\bigotimes_{j=1}^n X_p^{-a B_{ij}}.
\]
\end{clm}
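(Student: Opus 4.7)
The plan is to start from the definition $\Phi_f=\sum_{x\in\F_p^n}\overline f(x)|x\>\<x|$ with $\overline f(x)=\sum_{i=1}^m f_i((Bx)_i)$, and to replace each indicator $f_i(z)=\sum_{v\in S_i}\delta_{z,v}$ by its Fourier expansion over $\F_p$. Specifically, I would use the identity
\[
\delta_{z,v}=\frac{1}{p}\sum_{a\in\F_p}\omega_p^{a(z-v)},
\]
which is just Fourier inversion for functions on $\F_p$. Substituting $z=(Bx)_i=\sum_{j=1}^n B_{ij}x_j$ and pulling the sum over $x$ inside, the character $\omega_p^{a(Bx)_i}$ factorizes as $\prod_{j=1}^n\omega_p^{aB_{ij}x_j}$.

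Next, I would recognize the resulting diagonal operator:
\[
\sum_{x\in\F_p^n}\prod_{j=1}^n\omega_p^{aB_{ij}x_j}|x\>\<x|
=\bigotimes_{j=1}^n\sum_{x_j\in\F_p}\omega_p^{aB_{ij}x_j}|x_j\>\<x_j|
=\bigotimes_{j=1}^n Z_p^{aB_{ij}},
\]
giving the intermediate formula
\[
\Phi_f=\frac{1}{p}\sum_{i=1}^m\sum_{v\in S_i}\sum_{a\in\F_p}\omega_p^{-av}\bigotimes_{j=1}^n Z_p^{aB_{ij}}.
\]

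Finally I would conjugate by $F_p^{\otimes n}$ to obtain $\widehat\Phi_f$. From the stated relation $F_p^*Z_pF_p=X_p$ (equivalently $F_pZ_pF_p^{-1}=X_p^{-1}$, since $F_p^{-1}=F_p^*$), it follows that $F_pZ_p^kF_p^{-1}=X_p^{-k}$ for every integer exponent $k$. Applying this identity to each of the $n$ qudit factors exactly replaces every $Z_p^{aB_{ij}}$ by $X_p^{-aB_{ij}}$, yielding the claimed expression for $\widehat\Phi_f$.

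There is no real obstacle here: the proof is a bookkeeping exercise combining Fourier inversion on $\F_p$, the factorization of characters across tensor factors, and the sign convention relating $F_pZ_pF_p^{-1}$ to $X_p^{-1}$. The only thing to stay careful about is keeping the sign in the exponent of $X_p$ consistent (which comes from $F_p$ versus $F_p^{-1}$) and noting that the formula is valid even for $a=0$, where both sides contribute identity operators.
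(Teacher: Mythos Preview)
Your proposal is correct and follows essentially the same route as the paper: expand each indicator $f_i((Bx)_i)$ via Fourier inversion on $\F_p$, factor the resulting character across the $n$ qudits to obtain $\bigotimes_j Z_p^{aB_{ij}}$, and then conjugate by $F_p^{\otimes n}$ using $F_pZ_pF_p^{-1}=X_p^{-1}$.
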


\begin{proof}
Note that we can write the function
$\overline  f_i(x)\in\{0,1\}$
indicating whether $\sum_j B_{ij}x_j\in S_i$ as
\[
\overline f_i(x) = \sum_{v\in S_i}\sum_{a\in\F_p}
\omega_p^{a(\sum_j B_{ij}x_j-v)}\big/p.
\]
Given that $\overline f(x)=\sum_{i=1}^m \overline f_i(x)$, we thus get
\[
\Phi_f = \sum_{x\in\F_p^n}\overline f(x)|x\>\<x|
 = \frac1p\sum_{x\in\F_p^n}
 \sum_{i=1}^m\sum_{v\in S_i}\sum_{a\in\F_p}\omega_p^{a(\sum_j B_{ij}x_j-v)}
|x\>\<x|.
\]
Next we write 
$\omega_p^{a(\sum_j B_{ij}x_j-v)}=
\omega_p^{-av}\prod_j \omega_p^{aB_{ij}x_j}$ and we decompose the sum over $x=x_1x_2\ldots x_n\in\F_p^n$ as summing individually over each $x_j\in\F_p$. As a result, we get that 
\begin{align*}
\Phi_f &  =  \frac{1}{p}\sum_{i=1}^m\sum_{v\in S_i}\sum_{a\in\F_p} \omega_p^{-av}
\bigotimes_{j=1}^n
\sum_{x_j\in\F_p}\omega_p^{a B_{ij}x_j}
|x_j\>\<x_j|
=  \frac{1}{p}\sum_{i=1}^m\sum_{v\in S_i}\sum_{a\in\F_p} \omega_p^{-av}
\bigotimes_{j=1}^n Z_p^{a B_{ij}}.
\end{align*}
We conclude the proof by observing that $F_pZ_pF_p^{-1}=X_p^{-1}$.
\end{proof}

\subsection{Exploiting orthogonality of error syndromes }

Now, when evaluating $\<\Phi_f\>_{|\algostate_4\>} = \<\algostate'_4|\widehat\Phi_f |\algostate'_4\>$, we want to see how each term $\bigotimes_{j=1}^n X_p^{-a B_{ij}}$ of $\widehat\Phi_f$ acts on each basis state $|B^\top y\>$ in the support of $|\algostate'_4\>$.
Let $e_i\in F_p^{m}$ be the indicator vector with $1$ at index $i$ and $0$ elsewhere.
For every string $x\in\F_p^n$, we have 
$\bigotimes_{j=1}^n X_p^{-a B_{ij}} |x\> =
 |x-a B^\top e_i\>$. Therefore, for every $y'\in\F_p^m$, we have
 \[
\bigotimes_{j=1}^n X_p^{-a B_{ij}} |B^\top y'\> =
 |B^\top y'-a B^\top e_i\> 
=  |B^\top (y'-a e_i)\>.
\]
 Thus, from Claim~\ref{clm:Phi_via_Paulis} and the expression \eqref{eq:psi4prime} for $|\algostate'_4\>$, we get 
\begin{align*}
\<\Phi_f\>_{|\algostate_4\>}
&=
\frac{1}{p}
\sum_{k,k'=0}^\ell \frac{w_k^*w_{k'}}{\sqrt{\binom{m}{k}\binom{m}{k'}}}
\sum_{\substack{y,y'\in\F_p^m\\|y|=k\\|y'|=k'}}
\beta_y^*\beta_{y'}
\sum_{i=1}^m\sum_{v\in S_i}\sum_{a\in\F_p} \omega_p^{-av}
\<B^{\top}y|B^{\top}(y'-ae_i)\>.
\end{align*}

Now note that the Hamming weights of $y$ and $y'-ae_i$ are, respectively, at most $\ell$ and $\ell+1$. In addition, recall that $\ell\le d^\perp/2-1$, therefore the minimum distance $d^\perp$ of the code $C^\perp=\ker B^\top$ is at least $2\ell+2$.
This means that $B^\top y=B^\top (y'-a e_i)$ if and only if $y=y'-a e_i$, and the 
inner product $\<B^{\top}y|B^{\top}(y'-ae_i)\>$ in the above sum is $1$ if $y'=y+ae_i$ and $0$ otherwise. Hence, by taking $y'=y+a e_i$ and $k'=|y'|=|y+a e_i|$, we get
\[
\<\Phi_f\>_{|\algostate_4\>} 
= \frac1p \sum_{k=0}^\ell \frac{w_k^*}{\sqrt{\binom{m}{k}}}\sum_{\substack{y\in\F_p^m\\|y|=k}}
\beta_y^*
\sum_{i=1}^m\sum_{v\in S_i}\sum_{a\in\F_p} \omega_p^{-av}
 \frac{w_{|y+ae_i|}}{\sqrt{\binom{m}{|y+ae_i|}}} \beta_{y+ae_i}.
\]

\subsection{Four components of the expectation}

In order to evaluate the above sum for $\<\Phi_f\>_{|\algostate_4\>}$, we decompose it as a total of four other sums. The first sum corresponds to $a=0$,
 the second to $y_i = 0$ and $y'_i=a\ne 0$, the third to $y_i=-a\ne 0$ and $y'_i= 0$, and the fourth to $y_i\ne 0$, $y'_i=y_i+a\ne 0$, and $a\ne 0$.
Recall that $\beta_y = \prod_{\substack{i=1\\y_i\ne 0}}^m \widehat{g}_i(y_i)$ where $\widehat{g}_i(z)$ is given in \eqref{eq:hatg_i_def}.
Therefore, we can write 
 \begin{subequations}
 \begin{align}
\<\Phi_f\>_{|\algostate_4\>} =
& \label{eq:Phi_f_sum1}
\frac1p \sum_{k=0}^\ell \frac{|w_k^2|}{\binom{m}{k}}\sum_{\substack{y\in\F_p^m\\|y|=k}} |\beta_y^2|
\sum_{i=1}^m\sum_{v\in S_i} \omega_p^{0}
 \\ & \label{eq:Phi_f_sum2}
  +  \frac1p\sum_{k=0}^{\ell-1} \frac{w_k^*w_{k+1}}{\sqrt{\binom{m}{k}\binom{m}{k+1}}}\sum_{\substack{y\in\F_p^m\\|y|=k}} |\beta_y^2|
\sum_{\substack{i=1\\y_i=0}}^m\sum_{v\in S_i}\sum_{\substack{a\in\F_p\\a\ne 0}} \omega_p^{-av}
  \widehat g_i(a)
 \\ & \label{eq:Phi_f_sum3}
 +  \frac1p\sum_{k=1}^{\ell} \frac{w_k^*w_{k-1}}{\sqrt{\binom{m}{k}\binom{m}{k-1}}}\sum_{\substack{y\in\F_p^m\\|y|=k}} 
\sum_{\substack{i=1\\y_i\ne 0}}^m
|\beta_{y-y_ie_i}^2|
\sum_{v\in S_i}\omega_p^{y_iv} \widehat g_i^*(y_i)
 \\ & \label{eq:Phi_f_sum4}
 +  \frac1p\sum_{k=1}^\ell \frac{|w_k^2|}{\binom{m}{k}}\sum_{\substack{y\in\F_p^m\\|y|=k}} 
\sum_{\substack{i=1\\y_i\ne 0}}^m
|\beta_{y-y_ie_i}^2|
\sum_{v\in S_i}\!\sum_{\substack{a\in\F_p\\a\ne 0,-y_i}} \!\!
\omega_p^{-av}  \widehat g_i^*(y_i) \widehat g_i(y_i+a).
 \end{align}
 \end{subequations}
We will evaluate these four sums one by one.%
\footnote{\label{foot:conjugate_sums}%
However, it can be seen by symmetry that sums \eqref{eq:Phi_f_sum2} and \eqref{eq:Phi_f_sum3} are complex conjugates to one another.}
For all of them, we will use the identity
\[
\sum_{\substack{y \in \F_p^m\\|y| = k}} |\beta_y^2|
= \!\!\!\!\!\sum_{\{i_1, \ldots, i_k \}\subset \{1, \ldots, m\}}\!\!
\left( \sum_{y_{i_1} \in \mathbb{F}_p} |\widehat g_{i_1}(y_{i_1})|^2 \right)
\cdots
\left( \sum_{y_{i_k} \in \mathbb{F}_p} |\widehat g_{i_k}(y_{i_k})|^2 \right)
= \binom{m}{k},
\]
where we have used that $\widehat g_{i}(0)=0$ and, therefore,
$
\sum_{y_i \in \mathbb{F}_p^*} |\widehat g_{i}(y_i)|^2
= \sum_{y_i \in \mathbb{F}_p} |\widehat g_{i}(y_i)|^2.
$

\paragraph{First sum, \eqref{eq:Phi_f_sum1}.}

We use $\sum_{v\in S_i} \omega_p^{0}=|S_i|=r$.
The first sum thus evaluates to
\[
\frac1p \sum_{k=0}^\ell \frac{|w_k^2|}{\binom{m}{k}}\sum_{\substack{y\in\F_p^m\\|y|=k}} |\beta_y|^2 m r
= \frac{mr}{p} \sum_{k=0}^\ell |w_k^2|
= \frac{mr}{p}.
\]

\paragraph{Second sum, \eqref{eq:Phi_f_sum2}.}
By reversing the relation \eqref{eq:hatg_i_def}, which is essentially a Fourier transform, we have
$\sum_{z\in \F_p} \omega_p^{-vz}\widehat g_i(z)/\sqrt{p} = g_i(v)$.
In addition, recall from \eqref{eq:g_i_alt} that 
\[
g_i(v)=\begin{cases}
\sqrt{p-r}/\sqrt{pr}, & \text{if }v\in S_i, \\
-\sqrt{r}/\sqrt{p(p-r)}, & \text{if }v\in \F_p\setminus S_i.
\end{cases}
\]
Thus, because $\widehat g_i(0)=0$, for every $v\in S_i$ we have
 \beE
 \label{eq:hatg_to_g}
 \sum_{\substack{a\in\F_p\\a\ne 0}} \omega_p^{-av}
  \widehat g_i(a)
  =
\sum_{a\in\F_p} \omega_p^{-av}  \widehat g_i(a)  = \sqrt{p}g_i(v)
= \sqrt{p}\sqrt{\frac{p-r}{pr}}
= \sqrt{\frac{p-r}{r}}.
 \enE
Using this, we can see that the second sum evaluates to
\begin{align*}
& \frac1p\sum_{k=0}^{\ell-1} \frac{w_k^*w_{k+1}}{\sqrt{\binom{m}{k}\binom{m}{k+1}}}
\sum_{\substack{y\in\F_p^m\\|y|=k}} |\beta_y^2|
(m-k)r
\sqrt{\frac{p-r}{r}}
  \\ & \quad =
  \frac{\sqrt{r(p-r)}}{p}
 \sum_{k=0}^{\ell-1} \frac{w_k^*w_{k+1}}{\sqrt{\binom{m}{k}\binom{m}{k+1}}}
\binom{m}{k} (m-k)
  \\ & \quad =
  \frac{\sqrt{r(p-r)}}{p}
 \sum_{k=0}^{\ell-1} w_k^*w_{k+1}
\sqrt{(k+1)(m-k)}.
 \end{align*}

\paragraph{Third sum, \eqref{eq:Phi_f_sum3}.}

For the third and the fourth sums, we will use following rearrangement of summation. 
For any $\mathsf{function}(\cdot,\cdot)$ of domain $\F_p^m\times\{1,\ldots,m\}$, we have
\beE
\label{eq:sum_rewriting}
\sum_{\substack{y\in\F_p^m\\|y|=k}} 
\sum_{i\colon y_i\ne 0}
\mathsf{function}(y,i)
=
\sum_{\substack{\bar y\in\F_p^m\\|\bar y|=k-1}} 
\sum_{i\colon \bar y_i\ne 0}
\sum_{\substack{z\in\F_p\\z\ne 0}}
\mathsf{function}(\bar y+ze_i,i).
\enE
Using this identity, we can rewrite the third sum
\eqref{eq:Phi_f_sum3} as
\begin{align*}
\frac1p\sum_{k=1}^{\ell} \frac{w_k^*w_{k-1}}{\sqrt{\binom{m}{k}\binom{m}{k-1}}}\sum_{\substack{\bar y\in\F_p^m\\|\bar y|=k-1}} 
|\beta_{\bar y}^2|
\sum_{\substack{i=1\\\bar y_i= 0}}^m
\sum_{v\in S_i}
\sum_{\substack{z\in \F_p\\z\ne 0}}
\omega_p^{zv} \widehat g_i^*(z).
\end{align*}
Analogously to \eqref{eq:hatg_to_g}, we have
$
\sum_{z\in \F_p^*}\omega_p^{zv} \widehat g_i^*(z)
= (\sum_{z\in \F_p^*}\omega_p^{-zv} \widehat g_i(z))^*
= \sqrt{(p-r)/r}
$. 
Therefore the third sum evaluates to 
\begin{align*}
&
\frac1p\sum_{k=1}^{\ell} \frac{w_k^*w_{k-1}}{\sqrt{\binom{m}{k}\binom{m}{k-1}}}\sum_{\substack{\bar y\in\F_p^m\\|\bar y|=k-1}} 
|\beta_{\bar y}^2|
(m-k+1)
r
\sqrt{\frac{p-r}{r}}
\\ & \quad =
\frac{\sqrt{r(p-r)}}{p}\sum_{k=1}^{\ell} \frac{w_k^*w_{k-1}}{\sqrt{\binom{m}{k}\binom{m}{k-1}}}
\binom{m}{k-1}
(m-k+1)
\\ & \quad =
\frac{\sqrt{r(p-r)}}{p}\sum_{k=1}^{\ell}
w_k^*w_{k-1}
\sqrt{k(m-k+1)}.
\end{align*}
As noted in Footnote~\ref{foot:conjugate_sums}, we can see that this is the complex conjugate of the second sum \eqref{eq:Phi_f_sum2}.

\paragraph{Fourth sum, \eqref{eq:Phi_f_sum4}.}

As for the third sum, we use the identity  \eqref{eq:sum_rewriting} to rearrange the summation, and we get that the fourth sum equals
\beE
\label{eq:Phi_f_sum4_alt}
\frac{1}{p}
\sum_{k=1}^\ell \frac{|w_k^2|}{\binom{m}{k}}\sum_{\substack{\bar y\in\F_p^m\\|\bar y|=k-1}} 
|\beta_{\bar y}^2|
\sum_{\substack{i=1\\\bar y_i= 0}}^m
\sum_{v\in S_i}
\sum_{\substack{z\in\F_p\\z\ne 0}}
\sum_{\substack{a\in\F_p\\a\ne 0,-z}} \!\!
\omega_p^{-av}
  \widehat g_i(z+a) \widehat g_i^*(z)
\enE
Because $\widehat g_i(0)=0$, we can include $z=0$ and $a=-z$ in the above summation, and then rewrite the resulting sum over $z$ and $a$ as
\beE
\label{eq:part_of_4th} 
 \sum_{z\in \F_p}
\sum_{\substack{a\in\F_p\\a\ne 0}} 
\omega_p^{-av}  \widehat g_i(z+a) \widehat g_i^*(z)  
 =
 \sum_{z,a\in \F_p}
\omega_p^{-av}
  \widehat g_i(z+a) \widehat g_i^*(z)
  -
  \sum_{z\in \F_p}
|\widehat g_i(z)|^2.
\enE
The latter sum is easy:  
$\sum_{z}|\widehat g_i(z)|^2
=\sum_{z}|g_i(z)|^2=1$.
For the former sum, we expand both instances of $\widehat g_i(\cdot)$ using \eqref{eq:hatg_i_def}:
\begin{multline*}
 \sum_{z,a\in \F_p}
\omega_p^{-av}
  \widehat g_i(z+a) \widehat g_i^*(z)
 =
\sum_{z,a,z',z''\in \F_p}
\omega_p^{-av}
\omega_p^{(z+a)z'}g_i(z)
\omega_p^{-zz''}g_i(z'')/p
 \\ =
\sum_{z',z''\in \F_p}
g_i(z')g_i(z'')
\sum_{z\in \F_p}
\omega_p^{z(z'-z'')}
\sum_{a\in \F_p}
\omega_p^{a(z'-v)}/p
 =
p g_i(v)^2 = \frac{p-r}{r},
\end{multline*}
where the last equality is due to $v\in S_i$.
We thus get that \eqref{eq:part_of_4th} equals $(p-2r)/r$.

Hence, the fourth sum \eqref{eq:Phi_f_sum4}, when first rewritten as \eqref{eq:Phi_f_sum4_alt}, evaluates to
\begin{multline*}
\frac{1}{p}
\sum_{k=1}^\ell \frac{|w_k^2|}{\binom{m}{k}}\sum_{\substack{\bar y\in\F_p^m\\|\bar y|=k-1}} 
|\beta_{\bar y}^2|
\sum_{\substack{i=1\\\bar y_i= 0}}^m
\sum_{v\in S_i}
\frac{p-2r}{r}
=
\frac{p-2r}{p}
\sum_{k=1}^\ell \frac{|w_k^2|}{\binom{m}{k}}
(m-k+1)
\sum_{\substack{\bar y\in\F_p^m\\|\bar y|=k-1}} 
|\beta_{\bar y}^2|
\\
=
 \frac{p-2r}{p}
\sum_{k=1}^\ell \frac{|w_k^2|}{\binom{m}{k}}
(m-k+1)
\binom{m}{k-1}
=
\frac{p-2r}{p}
\sum_{k=1}^\ell |w_k^2| k.
\end{multline*}
 
\bigskip

Now that we have evaluated all the four sums \eqref{eq:Phi_f_sum1}, \eqref{eq:Phi_f_sum2}, \eqref{eq:Phi_f_sum3}, \eqref{eq:Phi_f_sum4}, we can see that their total $\<\Phi_f\>_{|\algostate_4\>}$ is as given in Theorem~\ref{thm:WeightsToExp}. This concludes the proof.

\section{Binomial distribution}
\label{app:binomial}

Recall the definition \eqref{eq:q_def} of $\nmean=\ell/m-m^{-1/2+c}$ and recall that $\ell\le d^\perp/2-1 < m/2$. Let us consider the binomial distribution with parameters $m$ and $\nmean$, and let $K$ be a random variable following this distribution. That is, for $k=0,1,\ldots,m$, we have
\[
p_k:=\Pr(K=k)= \nmean^k(1-\nmean)^{m-k}\binom{m}{k}
\]
where in this appendix we have introduced the notation $p_k$ for brevity. Note that the amplitudes $w_k'$ defined in \eqref{eq:wprime_k_def} are simply $w_k'=\sqrt{p_k}$.

\subsection{Tail Bounds}
\label{ssec:tail_bounds}

Let $\mu_K:=\mathbb{E}[K]$ denote the expected value of $K$, which for the binomial distribution is $\mu_K=m\nmean=\ell-m^{1/2+c}$. We want to bound probabilities that $K$ deviates from $\mu_k$ by more than $m^{1/2+c}$, and, intuitively, since the standard deviation of the distribution is $\sqrt{m\nmean(1-\nmean)}\le\sqrt{m}/2\ll m^{1/2+c}$, we expect this probability to be small.

We already have the assumption $\ell<m/2$ from above. In addition, we will also assume that $\ell$ is not too small, which is reasonable for our applications, where $\ell$ grows linearly with $m$.

\begin{clm}
\label{clm:tail_bound}
Assuming $\ell\ge 4m^{1/2+c}$, we have both
\[
\Pr(K > \ell)\le  \exp(-m^{2c}/2)
\qqAnd
\Pr(K < \ell-2m^{1/2+c})\le  \exp(-m^{2c}/2).
\]
\end{clm}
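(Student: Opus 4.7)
The plan is to apply a standard concentration inequality directly to the binomial random variable $K \sim B(m,\nmean)$. First I would rewrite both events as symmetric deviations from the mean: since $\mu_K = m\nmean = \ell - m^{1/2+c}$ by the very definition of $\nmean$, the event $\{K > \ell\}$ is exactly $\{K - \mu_K > m^{1/2+c}\}$ and the event $\{K < \ell - 2m^{1/2+c}\}$ is exactly $\{K - \mu_K < -m^{1/2+c}\}$. In both cases the deviation is $t := m^{1/2+c}$. The hypothesis $\ell \ge 4m^{1/2+c}$ enters only to guarantee $\nmean = \ell/m - m^{-1/2+c} \ge 3m^{-1/2+c} > 0$, so that the binomial distribution is well-defined and in particular $\mu_K \ge 3m^{1/2+c}$; together with $\ell < m/2$ it additionally yields $\nmean < 1/2$.

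Second, I would apply Hoeffding's inequality. Writing $K = X_1 + \ldots + X_m$ as a sum of $m$ i.i.d.\ Bernoulli$(\nmean)$ variables bounded in $[0,1]$, Hoeffding gives both $\Pr(K-\mu_K \ge t) \le \exp(-2t^2/m)$ and $\Pr(K-\mu_K \le -t) \le \exp(-2t^2/m)$. Plugging in $t = m^{1/2+c}$ makes the exponent $-2m^{2c}$, so each one-sided tail is bounded by $\exp(-2m^{2c})$, which is comfortably smaller than the target $\exp(-m^{2c}/2)$. Both parts of the claim follow immediately.

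I do not expect any real obstacle — this is essentially a one-line invocation of a textbook concentration bound, with a factor-of-$4$ slack in the exponent. If one instead wishes to use a multiplicative Chernoff bound (perhaps to match the style of the rest of the paper), the lower tail works cleanly: $\Pr(K \le (1-\delta)\mu_K) \le \exp(-\mu_K\delta^2/2)$ with $\delta := m^{1/2+c}/\mu_K \le 1/3$ yields $\exp\!\bigl(-m^{1+2c}/(2\mu_K)\bigr) \le \exp(-m^{2c}/2)$ via $\mu_K \le m$. The matching upper-tail multiplicative form $\exp(-\mu_K\delta^2/3)$ only reaches $\exp(-m^{2c}/3)$, so on that side one either falls back on Hoeffding or uses a sharper Bennett/KL-type bound. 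The Hoeffding route is the most uniform and is what I would write up.
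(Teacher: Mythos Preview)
Your Hoeffding argument is correct and in fact cleaner than the paper's own proof. The paper does exactly what you sketch in your side remark: it invokes the multiplicative Chernoff bounds $\Pr(K\ge(1+\delta)\mu_K)\le\exp(-\mu_K\delta^2/3)$ and $\Pr(K\le(1-\delta)\mu_K)\le\exp(-\mu_K\delta^2/2)$, but with $\delta:=m^{1/2+c}/\ell$ rather than $m^{1/2+c}/\mu_K$. It then checks $(1+\delta)\mu_K<\ell$ and $(1-\delta)\mu_K\ge\ell-2m^{1/2+c}$, uses the hypothesis $\ell\ge 4m^{1/2+c}$ to get $\mu_K\ge 3\ell/4$, and finally uses $\ell\le m/2$ to turn $\exp(-m^{1+2c}/(4\ell))$ into $\exp(-m^{2c}/2)$. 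So your worry that the upper-tail Chernoff only reaches $\exp(-m^{2c}/3)$ is slightly too pessimistic: with $\mu_K<m/2$ (or the paper's choice of $\delta$) it does reach the target. That said, your Hoeffding route is strictly simpler---it treats both tails symmetrically, yields the stronger bound $\exp(-2m^{2c})$, and needs the hypothesis $\ell\ge 4m^{1/2+c}$ only to guarantee $\nmean>0$, not in the estimate itself.
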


\begin{proof}
For any positive $\delta\le 1$, the Chernoff bound implies
\begin{align*}
& \Pr(K\ge (1+\delta)\mu_K) \le \exp(-\mu_K\delta^2/3),
\\& \Pr(K\le (1-\delta)\mu_K) \le \exp(-\mu_K\delta^2/2).
\end{align*}
(see \cite[Theorems~4.4 and 4.5]{mitzenmacher17:prob}).
We choose $\delta:=m^{1/2+c}/\ell$, for which we have both
\begin{align*}
& (1+\delta)\mu_K = 
(1+m^{1/2+c}/\ell)(\ell-m^{1/2+c})=
\ell - m^{1+2c}/\ell < \ell,
\\ &
(1-\delta)\mu_K =
(1-m^{1/2+c}/\ell)(\ell-m^{1/2+c}) = 
\ell-2m^{1/2+c}+m^{1+2c}/\ell \ge
\ell-2m^{1/2+c}.
\end{align*}
Because the claim assumes $m^{1/2+c}\le \ell/4$, we also have
$\mu_K
\ge 3\ell/4 > \ell/2$.
Therefore, by the Chernoff bound, 
\begin{align*}
&
\Pr\left(K \ge \ell \right)
\le \exp\left(-\frac{(3\ell/4)(m^{1/2+c}/\ell)^2}{3}\right)
= \exp\left(-\frac{m^{1+2c}}{4\ell}\right),
\\ &
\Pr(K < \ell-2m^{1/2+c})
\le \exp\left(-\frac{(\ell/2)(m^{1/2+c}/\ell)^2}{2}\right)
= \exp\left(-\frac{m^{1+2c}}{4\ell}\right).
\end{align*}
The proof follows by $\ell\le m/2$.
\end{proof}

\subsection{Probability mass of the mode}
\label{ssec:mode}

Here we place an upper bound on the largest probability $p_k=\Pr(K=k)$ under some reasonable assumptions. Define $\kappa:=\lceil \nmean(m+1)\rceil - 1$. The following claim states that $k=\kappa$ is a mode of the probability distribution $(p_k)_k$ by stating that $p_k$ is an increasing function for $k\le\kappa$ and a decreasing function for $k\ge\kappa$.

\begin{clm}
\label{clm:mode_k}
We have $\max_k p_k = p_\kappa$ and $p_k \ge p_{k+1}$ if and only if $k\ge \nmean(m+1)-1$.
\end{clm}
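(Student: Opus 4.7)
The plan is to analyze the ratio of consecutive terms of the binomial probability mass function, which is the standard approach for locating the mode of a binomial distribution.

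First I would compute
\[
\frac{p_{k+1}}{p_k} = \frac{\nmean^{k+1}(1-\nmean)^{m-k-1}\binom{m}{k+1}}{\nmean^k(1-\nmean)^{m-k}\binom{m}{k}} = \frac{\nmean}{1-\nmean}\cdot\frac{m-k}{k+1}
\]
for $0\le k\le m-1$ (the case $\nmean\in\{0,1\}$ is degenerate and can be ignored, or handled trivially). From this, $p_{k+1}\le p_k$ is equivalent to $\nmean(m-k)\le (1-\nmean)(k+1)$, which after expanding and collecting terms yields $\nmean(m+1)-1\le k$. This is exactly the second assertion of the claim, namely that $p_k\ge p_{k+1}$ if and only if $k\ge \nmean(m+1)-1$.

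Next, to identify the mode, I would observe that the above equivalence implies that the sequence $(p_k)$ is non-decreasing on the range $k\le \nmean(m+1)-1$ and non-increasing on $k\ge \nmean(m+1)-1$. Hence the maximum is attained at the smallest integer $k$ satisfying $k\ge \nmean(m+1)-1$, that is, at $k=\lceil \nmean(m+1)-1\rceil$. Since subtracting the integer $1$ commutes with the ceiling function, $\lceil \nmean(m+1)-1\rceil=\lceil \nmean(m+1)\rceil-1=\kappa$, so $\max_k p_k = p_\kappa$, which completes the proof.

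There is no substantive obstacle here: the result is a routine ratio-test argument, and the only mild subtlety is the integer-versus-non-integer behaviour of $\nmean(m+1)$, which is absorbed cleanly by the identity $\lceil x-1\rceil=\lceil x\rceil-1$ valid for all real $x$.
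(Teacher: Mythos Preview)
Your proposal is correct and takes essentially the same approach as the paper: both compute the ratio of consecutive binomial probabilities, solve the resulting linear inequality to obtain the threshold $\nmean(m+1)-1$, and then identify the mode as $\kappa=\lceil \nmean(m+1)\rceil-1$. The only cosmetic difference is that the paper writes the ratio as $p_k/p_{k+1}$ while you write $p_{k+1}/p_k$.
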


\begin{proof}
From the expression for $p_k$, we can see that for all $k=0,1,\ldots, m-1$ we have
\[
\frac{p_k}{p_{k+1}}
= \frac{(1-\nmean)(k+1)}{\nmean(m-k)}.
\]
First of all, this shows that $p_k=p_{k+1}$ if and only if $(1-\nmean)(k+1) = \nmean(m-k)$, that is, if and only if $\nmean(m+1)-1$ is an integer and $k=\nmean(m+1)-1$. Second, it shows that $p_k \ge p_{k+1}$ if and only if $k\ge \nmean(m+1)-1$. This means that
$\max_k p_k = p_\kappa$ for $\kappa:=\lceil \nmean(m+1)\rceil - 1$, which is a mode of the binomial distribution.%
\footnote{More generally, a mode of a probability distribution is the value that occurs most frequently. The binomial distribution is multimodal if and only if $\nmean(m+1)-1$ is an integer.} 
\end{proof}

\begin{clm}
\label{clm:mode_mass}
Assuming $q$ satisfies $7\le \nmean(m+7)\le m$, we have that $p_k\le \frac3{\sqrt{m\nmean(1-\nmean)}}$ for all $k=0,1,\ldots, m$.
\end{clm}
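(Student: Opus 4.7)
The plan is to invoke Claim~\ref{clm:mode_k} to reduce the statement to bounding the single value $p_\kappa$ at the mode $\kappa := \lceil \nmean(m+1)\rceil - 1$, and then estimate $p_\kappa$ via Stirling's formula with Robbins' refinement~\cite{robbins55:stirling}, $n! = \sqrt{2\pi n}\,(n/e)^n e^{\theta_n}$ with $0 < \theta_n < 1/(12n)$. Writing $\delta := \kappa - \nmean m$, the defining range of $\kappa$ gives $\delta \in [\nmean-1,\nmean) \subset (-1,1)$; this small quantity will control the error terms in the subsequent expansion.

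Substituting Stirling's formula into the three factorials of $p_\kappa = \binom{m}{\kappa}\nmean^\kappa(1-\nmean)^{m-\kappa}$ yields
\[
p_\kappa = \frac{1}{\sqrt{2\pi}}\sqrt{\frac{m}{\kappa(m-\kappa)}}\left(\frac{m\nmean}{\kappa}\right)^{\kappa} \left(\frac{m(1-\nmean)}{m-\kappa}\right)^{m-\kappa} e^{\theta_m - \theta_\kappa - \theta_{m-\kappa}}.
\]
The middle two factors can be dispatched together: since $m\nmean/\kappa = 1 - \delta/\kappa$ and $m(1-\nmean)/(m-\kappa) = 1 + \delta/(m-\kappa)$, the inequality $\ln(1+t) \le t$ (valid for all $t > -1$) yields $(m\nmean/\kappa)^\kappa \le e^{-\delta}$ and $(m(1-\nmean)/(m-\kappa))^{m-\kappa} \le e^{\delta}$, whose product is at most $1$. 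Since $\theta_\kappa,\theta_{m-\kappa}>0$ and $\theta_m < 1/(12m)$, the Stirling error contributes a factor of at most $e^{1/(12m)}$.

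It remains to compare $\sqrt{m/(\kappa(m-\kappa))}$ to $1/\sqrt{m\nmean(1-\nmean)}$. Using $m\nmean = \kappa - \delta$, $m(1-\nmean) = m-\kappa+\delta$, and $|\delta|<1$,
\[
\frac{m^2\nmean(1-\nmean)}{\kappa(m-\kappa)} = \Bigl(1-\tfrac{\delta}{\kappa}\Bigr)\Bigl(1+\tfrac{\delta}{m-\kappa}\Bigr) \le \Bigl(1+\tfrac{1}{\kappa}\Bigr)\Bigl(1+\tfrac{1}{m-\kappa}\Bigr).
\]
The hypothesis $7\le \nmean(m+7)\le m$ is symmetric under $\nmean \leftrightarrow 1-\nmean$ (being equivalent to $\nmean,1-\nmean \ge 7/(m+7)$), and a short calculation shows it forces $m\ge 7$ as well as $\nmean(m+1),(1-\nmean)(m+1) \ge 7(m+1)/(m+7) \ge 4$, hence $\kappa \ge 3$ and $m-\kappa \ge 3$. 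Consequently the right-hand side above is at most $(4/3)^2$, so $\sqrt{m/(\kappa(m-\kappa))} \le (4/3)/\sqrt{m\nmean(1-\nmean)}$.

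Combining the three estimates gives $p_\kappa \le \tfrac{4}{3}\cdot\tfrac{e^{1/84}}{\sqrt{2\pi}}\cdot\tfrac{1}{\sqrt{m\nmean(1-\nmean)}}$, which is comfortably below $3/\sqrt{m\nmean(1-\nmean)}$. The only subtlety is the bookkeeping needed to extract the integer lower bounds $\kappa,m-\kappa\ge 3$ from the slightly unusual hypothesis $7 \le \nmean(m+7) \le m$; once those are in hand, the analytic estimates carry a wide margin, since the leading Stirling constant $1/\sqrt{2\pi}\approx 0.4$ alone already leaves ample room to reach the stated constant $3$.
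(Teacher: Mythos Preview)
Your proof is correct and follows the same overall structure as the paper's: reduce to the mode via Claim~\ref{clm:mode_k}, expand $p_\kappa$ via Stirling's formula, and bound separately the ``relative-entropy'' factor $(\nmean m/\kappa)^\kappa\bigl((1-\nmean)m/(m-\kappa)\bigr)^{m-\kappa}$ and the ratio $m^2\nmean(1-\nmean)/\bigl(\kappa(m-\kappa)\bigr)$. Your execution of the two key estimates is, however, cleaner. For the relative-entropy factor the paper proves a separate sub-claim bounding it by~$4$ via the estimate $(1-1/x)^x\ge 1/3$ for $x\ge 7$ applied with $x=\nmean m/(1-\nmean)$ and $x=(1-\nmean)m/\nmean$ (this is in fact where the constant $7$ in the hypothesis originates), whereas your one-line use of $\ln(1+t)\le t$ immediately bounds this factor by~$1$. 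For the square-root factor the paper argues $\kappa(m-\kappa)\ge \nmean(1-\nmean)m^2/2$ through an ad~hoc splitting of $\nmean(1-\nmean)(m+1)^2$, while your bound $(1+1/\kappa)(1+1/(m-\kappa))\le(4/3)^2$ is more direct once $\kappa,m-\kappa\ge 3$ are in hand. The net effect is that your final constant is roughly $\tfrac{4}{3}\ee^{1/84}/\sqrt{2\pi}\approx 0.54$ rather than the paper's $\sqrt{8}\approx 2.83$, so the claimed bound of $3$ holds with considerably more slack in your version.
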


We essentially assume that $\frac{7}{m+7}\le \nmean\le 1-\frac{7}{m+7}$, which is the case in our applications, where $\nmean,1-\nmean=\Theta(1)$. Therefore, the above claim says that, in our applications $\max_kp_k= \OO(1/\sqrt{m})$. Note that this assumption also implicitly assumes $m\ge 7$, which we will find useful.

\begin{proof}
To upper bound $p_k$, let us use Stirling's formula, as given by Robbins \cite{robbins55:stirling}. For any integer $n\ge 1$, we have $1< \ee^{\frac{1}{12n+1}}$, and $\ee^{\frac{1}{12n}} < \sqrt{\pi/2}$, and therefore
\[
\sqrt{2\pi n} \left(\frac{n}{e}\right)^n  
< \sqrt{2\pi n} \left(\frac{n}{e}\right)^n \ee^{\frac{1}{12n+1}} 
< n! 
< \sqrt{2\pi n} \left(\frac{n}{e}\right)^n \ee^{\frac{1}{12n}}
< \pi\sqrt{n} \left(\frac{n}{e}\right)^n.
\]
Hence, for all $k=0,1,\ldots, m$, we have
\begin{align*}
p_k
& = \nmean^k(1-\nmean)^{m-k}\frac{m!}{k!(m-k)!}
\\& < \nmean^k(1-\nmean)^{m-k}\frac{\pi \sqrt{m} \left(\frac{m}{e}\right)^m}{\sqrt{2\pi k} \left(\frac{k}{e}\right)^k\sqrt{2\pi (m-k)} \left(\frac{m-k}{e}\right)^{m-k}}
\\& = 
\Big(\frac{\nmean m}{k}\Big)^k
\Big(\frac{(1-\nmean)m}{m-k}\Big)^{m-k} 
\cdot
\frac{\sqrt{ m} }{2\sqrt{k(m-k)} }.
\end{align*}
The following claim upper bounds the former factor for the mode $k=\kappa$.

\begin{clm}
For $\kappa=\lceil \nmean(m+1)\rceil-1$ and $q$ satisfying 
$7\le \nmean(m+7)\le m$, we have
\[
\Big(\frac{\nmean m}{\kappa}\Big)^\kappa
\Big(\frac{(1-\nmean)m}{m-\kappa}\Big)^{m-\kappa} 
\le 4.
\]
\end{clm}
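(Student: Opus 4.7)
The plan is to prove the strictly stronger bound that the left-hand side is at most $1$; the claim $\le 4$ will then follow by slack. The key observation is that the two factors are multiplicatively complementary in a very precise sense: $x = \nmean m$ is the unique maximizer of $x \mapsto (\nmean m/x)^x ((1-\nmean)m/(m-x))^{m-x}$ on $(0,m)$ (by a routine concavity computation, or by recognizing the logarithm as $-m$ times a Kullback--Leibler divergence), and the maximum value is exactly $1$.

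Concretely, I would set $\xi := \kappa - \nmean m$. From the definition $\kappa = \lceil \nmean(m+1)\rceil - 1$ one has $\nmean(m+1) - 1 \le \kappa \le \nmean(m+1)$, hence $\xi \in [\nmean - 1,\nmean]$, which in particular lies strictly inside $(-1,1)$. A direct substitution then gives
\[
\frac{\nmean m}{\kappa} = 1 - \frac{\xi}{\kappa} \qAnd \frac{(1-\nmean) m}{m - \kappa} = 1 + \frac{\xi}{m - \kappa}.
\]
Applying the elementary inequality $(1+t)^n \le \ee^{nt}$, valid for $t > -1$ and $n \ge 0$, separately to each factor yields
\[
\Big(\frac{\nmean m}{\kappa}\Big)^{\kappa} \le \ee^{-\xi} \qAnd \Big(\frac{(1-\nmean)m}{m-\kappa}\Big)^{m-\kappa} \le \ee^{\xi},
\]
and multiplying produces $\ee^{-\xi}\ee^{\xi} = 1 \le 4$, which is the desired bound.

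The only routine check is that $\kappa \ge 1$ and $m - \kappa \ge 1$, ensuring that $-\xi/\kappa$ and $\xi/(m-\kappa)$ are strictly greater than $-1$ so that the elementary inequality applies. Under the hypothesis $7 \le \nmean(m+7) \le m$, one has $\nmean(m+1) \ge 7(m+1)/(m+7) > 1$ and $\nmean(m+1) \le m(m+1)/(m+7) < m$, so $\kappa \ge 1$ and $\kappa \le m-1$ at once. I do not anticipate any genuine obstacle; the argument collapses to one application of $(1+t)^n \le \ee^{nt}$ per factor, and the hypothesis on $\nmean$ is needed only for the mild purpose of placing $\kappa$ in $\{1,\dots,m-1\}$.
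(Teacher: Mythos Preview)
Your proof is correct and in fact establishes the sharper bound $\le 1$, which the paper does not. The two arguments are genuinely different. The paper passes to the reciprocal, lower-bounds each factor using $\nmean(m+1)-1\le\kappa\le\nmean(m+1)$, and then invokes the numerical inequality $(1-1/x)^x\ge 1/3$ for $x\ge 7$; this is precisely where the hypothesis $7\le\nmean(m+7)\le m$ enters in an essential way (it is equivalent to $\nmean m/(1-\nmean)\ge 7$ and $(1-\nmean)m/\nmean\ge 7$), and the final bound obtained is $3^{1+1/m}\le 4$. Your argument instead recognizes the logarithm of the expression as $-m$ times a Kullback--Leibler divergence (or, equivalently, applies $1+t\le\ee^{t}$ once per factor), so the bound $\le 1$ drops out immediately and the hypothesis on $\nmean$ is needed only to guarantee $1\le\kappa\le m-1$. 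Your route is both shorter and tighter; the paper's route has the minor virtue of making explicit how far the expression is from $1$, but at the cost of a more delicate computation and a weaker constant.
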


\begin{proof}
Let us first consider the inverse quantity.
For the mode $\kappa$, we have $\nmean m+\nmean-1 \le \kappa\le \nmean m+\nmean$, and therefore
\begin{align*}
\Big(\frac{\kappa}{\nmean m}\Big)^\kappa
\Big(\frac{m-\kappa}{(1-\nmean)m}\Big)^{m-\kappa} 
%
& \ge
\Big(\frac{\nmean m+\nmean-1}{\nmean m}\Big)^\kappa
\Big(\frac{m-\nmean m-\nmean}{m-\nmean m}\Big)^{m-\kappa} 
\\ & \ge
\Big(\frac{\nmean m+\nmean-1}{\nmean m}\Big)^{\nmean m+\nmean}
\Big(\frac{m-\nmean m-\nmean}{m-\nmean m}\Big)^{m-\nmean m-\nmean+1} 
\\ & =
\bigg(1-\frac{1}{\frac{\nmean m}{1-\nmean}}\bigg)^{\nmean(m+1)}
\bigg(1-\frac{1}{\frac{(1-\nmean)m}{q}}\bigg)^{(1-\nmean)(m+1)} .
\end{align*}
Note that our assumptions that $7\le \nmean(m+7)$ and $\nmean(m+7)\le m$ are equivalent to, respectively, 
$\frac{\nmean m}{1-\nmean}\ge 7$ and $\frac{(1-\nmean)m}{q}\ge 7$, which bound the two denominators in the above expression. For any real $x\ge 7$ we have $(1-1/x)^x \ge 1/3$, and thus $1-1/x \ge 1/3^{1/x}$.

Now, by returning from the inverse of the quantity that we want to bound back to the quantity itself, we get
\[
\Big(\frac{\nmean m}{\kappa}\Big)^\kappa
\Big(\frac{(1-\nmean)m}{m-\kappa}\Big)^{m-\kappa} 
\le
3^{\frac{1-\nmean}{\nmean m}\nmean(m+1)} 3^{\frac{q}{(1-\nmean)m}(1-\nmean)(m+1)} 
= 3^{1+1/m} \le 4,
\]
where the last inequality is due to $m\ge 7$.
\end{proof}

Hence, by incorporating this claim in the earlier upper bound on $p_k$, for the mode 
$\kappa=\lceil \nmean(m+1)\rceil-1$ we get $p_\kappa \le  2\sqrt{m}/\sqrt{\kappa(m-\kappa)}$. 
It is left to bound $\kappa(m-\kappa)$.

First, because $\nmean m+\nmean-1 \le \kappa\le \nmean m+\nmean$, we get that
\[
\kappa(m-\kappa)
\ge (\nmean m+\nmean-1)(m-\nmean m-\nmean)
 =\nmean(1-\nmean)(m+1)^2 - m,
\]
We decompose $\nmean(1-\nmean)(m+1)^2$ into two parts as
\[
\nmean(1-\nmean)(m+1)^2 =
\underbrace{7/16\cdot \nmean(1-\nmean)(m+1)^2}_{\ge m}
+ \underbrace{9/16\cdot \nmean(1-\nmean)(m+1)^2}_{\ge \nmean(1-\nmean)m^2 /2},
\]
where the latter lower bound is trivial, and we obtain the former as follows.

From the assumption that both $q,1-\nmean\ge 7/(m+7)$,
we get that
\[
\nmean(1-\nmean)\ge \frac{7}{m+7}\left(1-\frac{7}{m+7}\right) =\frac{7m}{(m+7)^2},
\]
and therefore
\[
\frac{7}{16}\nmean(1-\nmean)(m+1)^2
 \ge \frac{49}{16} m \frac{(m+1)^2}{(m+7)^2}.
\]
This, because for $m\ge 7$ we have $(m+1)^2/(m+7)^2\ge 16/49$, is at least $m$, as required.

We have obtained that $\kappa(m-\kappa)\ge \nmean(1-\nmean)m^2/2$.
Hence $p_\kappa \le  \sqrt{8}/\sqrt{m\nmean(1-\nmean)}$, and the proof is concluded by taking $\sqrt{8}\le 3$.\end{proof}


\begin{thebibliography}{PBHVM25}

\bibitem[AGL25]{anschuetz25:DQI-needs-structure}
Eric~R. Anschuetz, David Gamarnik, and Jonathan~Z. Lu.
\newblock Decoded quantum interferometry requires structure.
\newblock arXiv:2509.14509 [quant-ph], 2025.
\newblock \url{https://arxiv.org/abs/2509.14509}.

\bibitem[AHU74]{aho74:algoDesign}
Alfred~V. Aho, John~E. Hopcroft, and Jeffrey~D. Ullman.
\newblock {\em The Design and Analysis of Computer Algorithms}.
\newblock Addison-Wesley, 1974.

\bibitem[BE22]{bartschi22:Dicke}
Andreas Bärtschi and Stephan Eidenbenz.
\newblock Short-depth circuits for {D}icke state preparation.
\newblock In {\em 2022 IEEE International Conference on Quantum Computing and
  Engineering (QCE)}, pages 87--96, 2022.

\bibitem[Ber68]{berlekamp1968:book}
Elwyn~R. Berlekamp.
\newblock {\em Algebraic Coding Theory}.
\newblock McGraw-Hill, New York, 1968.

\bibitem[BGKL25]{bu25:DQI-under-noise}
Kaifeng Bu, Weichen Gu, Dax~Enshan Koh, and Xiang Li.
\newblock Decoded quantum interferometry under noise.
\newblock arXiv:2508.10725 [quant-ph], 2025.
\newblock \url{https://arxiv.org/abs/2508.10725}.

\bibitem[BHMT02]{brassard02:exactGrover}
Gilles Brassard, Peter Høyer, Michele Mosca, and Alain Tapp.
\newblock Quantum amplitude amplification and estimation.
\newblock {\em Quantum Computation and Information}, pages 53--74, 2002.

\bibitem[CT65]{cooley65:dft}
James~W. Cooley and John~W. Tukey.
\newblock An algorithm for the machine calculation of complex {F}ourier series.
\newblock {\em Mathematics of Computation}, 19(90):297--301, 1965.

\bibitem[CT25]{chailloux25:SoftDecoders}
Andr\'{e} Chailloux and Jean-Pierre Tillich.
\newblock Quantum advantage from soft decoders.
\newblock In {\em Proceedings of the 57th Annual ACM Symposium on Theory of
  Computing}, STOC '25, page 738–749, New York, NY, USA, 2025. Association
  for Computing Machinery.

\bibitem[Hil25]{hillel25:quadratic-DQI}
Daniel~Cohen Hillel.
\newblock Optimization of quadratic constraints by decoded quantum
  interferometry.
\newblock arXiv:2510.08061 [quant-ph], 2025.
\newblock \url{https://arxiv.org/abs/2510.08061}.

\bibitem[HJB84]{heideman84:gauss}
Michael~T. Heideman, Don~H. Johnson, and C.~Sidney Burrus.
\newblock Gauss and the history of the fast {F}ourier transform.
\newblock {\em IEEE ASSP Magazine}, 1(4):14--21, 1984.

\bibitem[H\o00]{hoyer00:exactGrover}
Peter H\o{}yer.
\newblock Arbitrary phases in quantum amplitude amplification.
\newblock {\em Phys. Rev. A}, 62:052304, Oct 2000.

\bibitem[JSW{\etalchar{+}}24]{jordan24:DQI-original}
Stephen~P. Jordan, Noah Shutty, Mary Wootters, Adam Zalcman, Alexander
  Schmidhuber, Robbie King, Sergei~V. Isakov, and Ryan Babbush.
\newblock Optimization by decoded quantum interferometry.
\newblock arXiv:2408.08292v1 [quant-ph], 2024.
\newblock \url{https://arxiv.org/abs/2408.08292v1}.

\bibitem[JSW{\etalchar{+}}25]{jordan25:DQI-original}
Stephen~P. Jordan, Noah Shutty, Mary Wootters, Adam Zalcman, Alexander
  Schmidhuber, Robbie King, Sergei~V. Isakov, Tanuj Khattar, and Ryan Babbush.
\newblock Optimization by decoded quantum interferometry.
\newblock {\em Nature}, 646:831--836, 2025.

\bibitem[KSG{\etalchar{+}}25]{khattar25:optimized_DQI}
Tanuj Khattar, Noah Shutty, Craig Gidney, Adam Zalcman, Noureldin Yosri, Dmitri
  Maslov, Ryan Babbush, and Stephen~P. Jordan.
\newblock Verifiable quantum advantage via optimized {DQI} circuits.
\newblock arXiv:2510.10967 [quant-ph], 2025.
\newblock \url{https://arxiv.org/abs/2510.10967}.

\bibitem[LKS24]{low24:Tgates}
Guang~Hao Low, Vadym Kliuchnikov, and Luke Schaeffer.
\newblock Trading {T} gates for dirty qubits in state preparation and unitary
  synthesis.
\newblock {\em {Quantum}}, 8:1375, June 2024.

\bibitem[Lon01]{long01:grover}
G.~L. Long.
\newblock Grover algorithm with zero theoretical failure rate.
\newblock {\em Phys. Rev. A}, 64:022307, 2001.

\bibitem[Mas69]{massey1969:BM-algo}
James~L. Massey.
\newblock Shift-register synthesis and {BCH} decoding.
\newblock {\em IEEE Transactions on Information Theory}, 15(1):122--127, 1969.

\bibitem[MFGH25]{marwaha25:complexity-of-DQI}
Kunal Marwaha, Bill Fefferman, Alexandru Gheorghiu, and Vojtech Havlicek.
\newblock On the complexity of decoded quantum interferometry.
\newblock arXiv:2509.14443 [quant-ph], 2025.
\newblock \url{https://arxiv.org/abs/2509.14443}.

\bibitem[MU17]{mitzenmacher17:prob}
Michael Mitzenmacher and Eli Upfal.
\newblock {\em Probability and Computing: Randomization and Probabilistic
  Techniques in Algorithms and Data Analysis}.
\newblock Cambridge University Press, Cambridge, UK, 2nd edition, 2017.

\bibitem[NC00]{NielsenChuang}
Michael~A. Nielsen and Isaac~L. Chuang.
\newblock {\em Quantum Computation and Quantum Information}.
\newblock Cambridge University Press, 2000.

\bibitem[PBHVM25]{patamawisut25:Circuits-for-DQI}
Natchapol Patamawisut, Naphan Benchasattabuse, Michal Hajdušek, and Rodney
  Van~Meter.
\newblock Quantum circuit design for decoded quantum interferometry.
\newblock In {\em 2025 IEEE International Conference on Quantum Computing and
  Engineering (QCE)}, volume~01, pages 291--301, 2025.

\bibitem[Rad68]{rader68:dft}
C.~M. Rader.
\newblock Discrete {F}ourier transforms when the number of data samples is
  prime.
\newblock {\em Proceedings of the IEEE}, 56(6):1107--1108, 1968.

\bibitem[Rob55]{robbins55:stirling}
Herbert Robbins.
\newblock A remark on stirling's formula.
\newblock {\em The American Mathematical Monthly}, 62(1):26--29, 1955.

\bibitem[RS60]{reed60:RS-codes}
Irving~S. Reed and Gustave Solomon.
\newblock Polynomial codes over certain finite fields.
\newblock {\em Journal of the Society for Industrial and Applied Mathematics},
  8(2):300--304, 1960.

\bibitem[RSTW78]{reed78:GCDdecoding}
I.~Reed, R.~Scholtz, Treiu-Kien Truong, and L.~Welch.
\newblock The fast decoding of {R}eed-{S}olomon codes using {F}ermat theoretic
  transforms and continued fractions.
\newblock {\em IEEE Transactions on Information Theory}, 24(1):100--106, 1978.

\bibitem[SHB{\etalchar{+}}25]{sabater25:industiralLP-via-DQI}
Francesc Sabater, Ouns~El Harzli, Geert-Jan Besjes, Marvin Erdmann, Johannes
  Klepsch, Jonas Hiltrop, Jean-Francois Bobier, Yudong Cao, and Carlos~A.
  Riofrio.
\newblock Towards solving industrial integer linear programs with decoded
  quantum interferometry.
\newblock arXiv:2509.08328 [quant-ph], 2025.
\newblock \url{https://arxiv.org/abs/2509.08328}.

\bibitem[SLS{\etalchar{+}}25]{schmidhuber25:HamiltonianDQI}
Alexander Schmidhuber, Jonathan~Z. Lu, Noah Shutty, Stephen Jordan, Alexander
  Poremba, and Yihui Quek.
\newblock Hamiltonian decoded quantum interferometry.
\newblock arXiv:2510.07913 [quant-ph], 2025.
\newblock \url{https://arxiv.org/abs/2510.07913}.

\bibitem[Wat18]{watrous2018}
John Watrous.
\newblock {\em The Theory of Quantum Information}.
\newblock Cambridge University Press, 2018.

\bibitem[YZ24]{yamakawa24:RandomOracleNP}
Takashi Yamakawa and Mark Zhandry.
\newblock Verifiable quantum advantage without structure.
\newblock {\em J. ACM}, 71(3), 2024.

\end{thebibliography}
\end{document}